\newtheorem{thm}{Theorem}[section]
\newtheorem{asm}[thm]{Assumption}
\newtheorem{lem}[thm]{Lemma}
\newtheorem{prop}[thm]{Proposition}
\theoremstyle{definition}
\newtheorem{exmp}{Example}[section]
\theoremstyle{remark}
\newtheorem{rem}{Remark}[section]
\numberwithin{equation}{section}
\renewcommand{\i}[1]{\raisebox{1pt}{$\stackrel{\scriptscriptstyle \circ}{#1}$}}
\begin{document}

\title[Stokes type operators]{Maxwell's and Stokes' 
operators associated with elliptic differential 
complexes}

\author{A.A. Shlapunov}
\address[Alexander Shlapunov]
{Siberian Federal University
                                                 \\
         pr. Svobodnyi 79
                                                 \\
         660041 Krasnoyarsk
                                                 \\
         Russia}
\email{ashlapunov@sfu-kras.ru}  

\author{A.N. Polkovnikov}
\address[Alexander Polkovnikov]
{Siberian Federal University
                                                 \\
         pr. Svobodnyi 79
                                                 \\
         660041 Krasnoyarsk
                                                 \\
         Russia}
\email{paskaattt@yandex.ru} 

\author{V.L. Mironov}
\address[Victor Mironov]
{Institute for Physics of Microstructures \\ Russian Academy of Sciences\\  GSP-105, Nizhny Novgorod 603950 \\
Russia} 
\email{mironov@ipmras.ru}

\subjclass {Primary 35Qxx; Secondary  35Jxx, 35Kxx, 35Nxx} 
\keywords{Stokes' operators, elliptic-parabolic operators, mathematical models}

\begin{abstract}
We propose a new technique to generate reasonable systems of partial differential equations (PDE)
that could be potential candidates for depicting models in natural sciences related to quasi-linear 
 equations. Such systems appear within typical constructions of the Homological Algebra 
as complexes of differential operators describing compatibility conditions for overdetermined 
systems of PDE's. The related models can be both steady and evolutionary. Additional assumptions on 
the ellipticity of the differential complex provide a wide class of elliptic, parabolic and hyperbolic 
operators that could be generated in this way. In particular, it appears that an essentially large 
amount of equations related to the modern Mathematical Physics is generated by the de Rham complex of 
differentials on the exterior differential forms. These includes the elliptic Laplace and Lam\'e type 
operators; the parabolic heat transfer equation; the Euler type and Navier-Stokes type equations in 
Hydrodynamics; the hyperbolic wave equation and  the Maxwell equations in Electrodynamics; 
the Klein-Gordon equation in Relativistic Quantum Mechanics; and so on. Our model generation method covers a 
broad class of generating systems, especially in higher spatial dimensions, due to different basic algebraic 
structures at play. 
\end{abstract}

\maketitle

\section*{Introduction}
\label{s.Int}

The vast majority of  differential equations of modern Mathematical Physics was constructed 
with the use of the standard time derivative $\partial _t=\partial /\partial t$, gradient 
operator $\nabla$, the divergence operator $\mathrm{div}$  and the infinitesimal circulation 
operator  $\mathrm{curl}$, known since  Hamilton \cite{Ham} and  Maxwell \cite{Maxw}. 
The operators satisfy familiar relations 
\begin{equation} \label{eq.deRham.base.rel.3}
\mathrm{curl} \circ \nabla =0, \, \mathrm{div} \circ \mathrm{curl} =0, 
\end{equation}
generating the elliptic Laplace operator 
$$
\Delta=\mathrm{div} \circ \nabla, 
$$
which is used in the parabolic heat transfer and duffusion equations (operator $\partial _t - \Delta$) and in the 
hyperbolic wave equations (operator $\partial^2 _t - \Delta$). 

Advanced algebraic concepts, such as Dirac matrix algebra, 
Pauli matrix algebra, Clifford  algebra, quaternionic (octonionic, sedenionic)  
constructions and so on,  were used within Mathematical Physics in order to express 
physical laws in more clear and compact ways, see, for instance, \cite{Dirac}, \cite{Pauli}, 
\cite{Mir18}, \cite{Mjr} 
and many others.

In this paper, using Stokes' system of Hydrodynamics as a model example, we propose a more 
general algebraic construction related to Homological Algebra that helps to describe physical 
laws in a unified form with the use of elliptic differential complexes, see, instance, \cite{Tark35}. 
The approach does not give a precise description of the related models, but it 
suggest dimensions of the corresponding known and unknown vectors and type of equations up to 
(both linear and non-linear) perturbations. 
The other details depend usually on the particular type of the processes and symmetries behind them. 

Of course, there are other ways to generate mathematical models in standardised ways, 
for instance, in the frame of  General Relativity Theory, see, for example, \cite{Ren}. But 
in the present paper, instead explaining how considerations in Physics involve partial 
differential equations, we illustrate how a system of PDE's may generate more extensive 
mathematical model within Mathematical Physics.

We also indicate simple conditions, providing (Petrovskii or Douglis-Nirenberg) ellipticity 
of the related steady Maxwell's and Stokes' type systems and, consequently, parabolicity or 
hyperbolicity of the related time dependent systems. 
Actually, this opens a way to construct easily parametrices and fundamental 
solutions to Maxwell' and Stokes' type operators under the considerations. 

\section{Differential complexes}
\label{s.DC}

Let us shortly recall the notion of differential complex and related matters. 

\subsection{Differential operators} 

Let $X$ be a $C^{\infty}$-smooth Riemannian manifold of dimension $n\geq 2$ with a smooth (possibly, empty) 
boundary $\partial X$. We tacitly assume that it is enclosed into a smooth 
manifold $\tilde{X}$ of the same dimension. Let also $\i{X}$ denotes the interior 
of $X$.

For any smooth $\mathbb C$-vector bundles $E$ and $F$ of rangs $k$ and $l$, respectively, over $X$, we write 
$\mathrm{Diff}_{m} (X; E \to F)$ for the space of all the linear partial 
differential operators of order $\leq m\in {\mathbb Z}_+$ between sections of the bundles $E$ and $F$. 
Then, for an open set ${\mathcal O}\subset \i{X}$ 
over which the bundles and the manifold are trivial, the sections over ${\mathcal O}$ may 
be interpreted as (vector-) functions and $A\in \mathrm{Diff}_m(X; E\rightarrow 
F)$ is given as $(l\times k)$-matrix of scalar differential operators, i.e. we 
have 
\begin{equation*} 
A=A(x,D)=\sum_{|\alpha|\leq m}a_\alpha(x) \partial^\alpha
\end{equation*}
where $a_\alpha(x)$ are $(l\times k)$-matrices of $C^\infty ({\mathcal O})$-functions, 
$\partial_j = \frac{\partial}{\partial x_j}$,  $\partial^\alpha =
 \partial_1^{\alpha_1} \dots \partial_1^{\alpha_n}$. Denote by $I_{k}$ the identity operator on sections 
of the bundle $E$ (a unit $(k \times k)$-matrix in the local situation) and by  $E^{\ast}$ the conjugate 
bundle of $E$. Any Hermitian metric $(.,.)_{E,x}$ on $E$ gives rise to a sesquilinear bundle
isomorphism (the Hodge operator) $\star_{E} \! : E \to E^{\ast}$ by the 
equality $\langle \star_{E} v, u \rangle_{E,x} = (u,v)_{E,x}$ for all sections $u$ 
and $v$ of $E$; here $\langle ., . \rangle_{E,x}$ is the natural pairing in the 
fibers of $E^*$ and $E$. 
Pick a volume form $dx$ on $X$, thus identifying the dual bundle, the conjugate bundle and 
the Lebesgue space $L^2 (E)$ with the inner product induced by $(.,.)_{E,x}$. 
Then for $A \in \mathrm{Diff}_{m} (X; E \to F)$ denote  
by $A^{\ast} \in \mathrm{Diff}_{m} (X; F \to E)$ the corresponding formal adjoint operator.
Let also $\mathcal D$ be a bounded domain (i.e. open connected set) in $\i{X}$. 

\subsection{Compatibility differential complexes}

We recall that  a differential operator $A$ is called overdetermined on $X$ if there is a non-zero differential 
operator $B$ over $X$ such that
\begin{equation} \label{eq.compat}
B \circ A \equiv 0.
\end{equation}

An operator $B$, satisfying \eqref{eq.compat}, is called a compatibility operator for $A$ 
if for any operator $\tilde B$ satisfying $\tilde B \circ A \equiv 0$ there is an operator $C$
such that $\tilde B = C \circ B$. Clearly, a compatibility operator is not uniquely defined;
however it gives necessary solvability conditions to the operator equation  
$$
A u =f \mbox{ in } \mathcal D
$$
in a domain $\mathcal{D}\subset X$, i.e. $Bf=0$. 
 However, a compatibility operator may also contain addition information on a physical 
model where the operator $A$ 
appeared. Of course, the operator $B$ can also be overdetermined.

Thus, our principal object to discuss will be a complex $\{A_q,E_q\} _{q=0}^N$ of partial 
differential operators over $X$ (see, for instance, \cite{Spe}, \cite{Tark35}), 
\begin{equation} \label{eq.complex}
0\rightarrow 
C^\infty(E_0)\stackrel{A_0}{\rightarrow}C^\infty(E_1)\stackrel{A_1}{
\rightarrow}C^\infty(E_2) \rightarrow \dots 
\stackrel{A_{N-1}}{\rightarrow} C^\infty(E_N) \rightarrow 0, 
\end{equation}
where $E_q$ are  bundles of rangs $k_q$, respectively,  over $X$ and $A_q$ are differential 
operators from $\mathrm{Diff}_{m_q} (X; E_q \to 
E_{q+1})$ with 
\begin{equation} \label{eq.complex.base.rel}
A_{q+1} \circ A_q \equiv 0;
\end{equation} 
we  assume that $A_q =0$ for both $q<0$ and $q\geq N$. Actually, it is often convenient to consider 
complex $\{A_q,E_q\}_{q=0}^N$ as a graduated operator $A^\cdot$ of degree $1$ over a graduated 
topological vector space ${\mathfrak S}^{\cdot} =\oplus_{q=0}^N {\mathfrak S}^q (E_q)$  in such a way that 
$A^\cdot u =A_q u $ for a section $u\in {\mathfrak S}^q (E_q)$ of the bundle $E_q$.  

It may happens, see examples below, that orders $m_q$ of the differential operators $A_q$ are 
different; so we set $m=\max\limits_{0\leq q \leq N-1} m_q$. But the most simple constructions 
corresponds to the cases where
\begin{equation} \label{eq.all.equal}
m_j=m \mbox{ for all } 0\leq j\leq N-1. 
\end{equation}
As above, we fix Hermitian metrics 
$(\cdot,\cdot)_{q,x} =(\cdot,\cdot)_{E_q,x}$ in each fiber $E_{q,x}$. 

The differential complex $\{A_q,E_q\}$ is called a compatibility complex for $A_0$ if for each $q\geq 0$ 
the differential operator $A_{q+1}$ is a compatibility operator for $A_q$. 
As the compatibility operator is not unique, the compatibility 
complex is not unique, too. The notions of homotopical equivalence of complexes (\cite[Definition 1.1.17]{Tark35}) 
and equivalent operators \cite[Definition 1.2.5]{Tark35}) help to improve the situation. 
In particular, homotopically equivalent complexes have isomorphic cohomologies over 
many standard functional classes, see \cite[Proposition 1.24]{Tark35}. 
According to \cite[Propositions 1.2.7 and 1.2.8]{Tark35}, if differential 
operators $A_0$ and $\tilde A_0$ are equivalent  and the operator $A_0$ is included into 
a compatibility complex $\{A_q,E_q\}_{i=0}^N$ then for the operator $\tilde A_0$
there is a compatibility complex $\{\tilde A_q,\tilde E_q\}_{q=0}^N$ and, moreover, 
the corresponding complexes are homotopicaly equivalent.

The algebraic structures lying at the bottom of the theory of differential complexes are
rather natural, see \cite[Ch. 1]{Tark35}, though this depends on the class of considered 
operators. Namely, let us consider the two typical cases.
 
If $X={\mathbb R}^n$ and $A=A(D)$ is an $(l\times k)$-matrix differential operator with constant 
coefficients then one may use $\mathcal P$-modules of the  ring $\mathcal P$ of all the polynomials 
with complex coefficients, see \cite{Pal}, \cite[\S  1.2]{Tark35}, or elsewhere. Let us denote by 
${\mathcal P}^k$ the direct sum of $k$ copies of the ring $\mathcal P$ and denote 
by $A(\zeta)$ the polynomial matrix 
\begin{equation*} 
A (\zeta)=\sum_{|\alpha|\leq m}a_\alpha (\iota \zeta )^\alpha, \,\, \zeta \in {\mathbb C}^n.
\end{equation*}
 Then the transposed matrix $A'(\zeta)$ naturally defines a mapping $A'(\zeta): {\mathcal P}^l \to 
{\mathcal P}^k $. As the ring $\mathcal P$  is Noetherian, then the $\mathcal P$-module 
${\mathcal P}^k / A'(\zeta) {\mathcal P}^l $ is finitely generated, i.e. there are natural numbers $N$, 
$k_1, \dots k_N$ and polynomial $(k_{i+1} \times k_i)$-matrices 
$$
A_q (\zeta) =  \sum_{|\alpha|\leq m_q} a^{(q)}_\alpha (\iota \zeta)^\alpha, \,\, 0\leq q \leq N-1,
$$
such that $k_0=k$, $k_1=l$, $A _0(\zeta)= A (\zeta)$, 
$ A'_{q} (\zeta) \circ A'_{q+1} (\zeta) =0$  for all $ \zeta \in \mathbb C$, 
and the following sequence is exact, see Hilbert Syzygies Theorem, \cite[\S 8]{Bou}, 
\begin{equation*} 
0\leftarrow {\mathcal P}^k / A'(\zeta) {\mathcal P}^l  \leftarrow {\mathcal P}^{k_0} 
\stackrel{A '_0(z)}{\longleftarrow} {\mathcal P}^{k_1}  
\stackrel{ A' _1(z)} {\longleftarrow} {\mathcal P}^{k_2}  \leftarrow  \dots  
\stackrel{ A_{N-1}'(z)} {\longleftarrow} {\mathcal P}^{k_N} \leftarrow 0.
\end{equation*} 

Then the related operators 
$$
A_q (D) =\sum_{|\alpha|\leq m_q} a^{(q)}_\alpha \partial^\alpha
$$ 
with constant coefficients form the desired compatibility differential complex \eqref{eq.complex} for 
$A=A_0$; it is called the Hilbert complex associated with the ${\mathcal P}$-module for $A$. 
Actually, if $ \mathcal D$ is a convex domain in ${\mathbb R}^n$ then for any section $f \in C^\infty 
(\mathcal{D},E_{q+1})$ satisfying $A_{q+1} f=0$  in $\mathcal D$  there is $u \in C^\infty 
(\mathcal{D},E_{q})$ satisfying $A  _{q} u=f $ in $\mathcal D$, i.e. the Hilbert complex gives both 
necessary and  sufficient conditions for the solvability of 
the related operator equations in this particular situation, see, for instance, \cite{Pal}. 

In the general case of differential operators with variable coefficients (or even operators on 
manifold), to construct a compatibility complex for an operator $A$ is a more delicate procedure, 
see \cite{Finik}, \cite{Riq}, \cite{Spe}. In particular, the related complex might be not finite. 
D.C. Spencer \cite{Spe} granted existence of a  (finite) compatibility differential complex for any 
"sufficiently regular"{} differential operator with infinitely smooth coefficients, see also 
\cite[\S 1.3]{Tark35} for a more advanced discussion. To define the concept "sufficient regularity"{} 
one should consider jets $j^s$ of sections $E$ and $F$ over $X$ of finite length $s$ and the 
prolongations $j^s \circ A$ of the differential operator $A$ to the spaces of jets 
${\mathcal J}^s (E)$, see \cite[\S 1.3.2]{Tark35}. More precisely, let  $\eta (A) : J^{m} (E) \to F$ 
be the bundle homomorphism satisfying $\eta(A) \circ j^m =A$ and   
$$
{\mathfrak R}^s (x) = \ker \{\eta (j^{s-m} \circ A): {\mathcal J}^s  (E)_x \to 
{\mathcal J}^{s-m}  (F)_x  \}, \, \, x \in X.
$$
The operator $A$ is called "sufficiently regular" if 1) the dimensions $d(s,x)$ of the spaces ${\mathfrak R}^s (x)$ 
do not depend on $x \in X$ for $s\geq m$ and 2) the natural "projections" $\pi^{s_2,s_1}: {\mathfrak R}^{s_2} (x) 
\to {\mathfrak R}^{s_1} (x)$ have constant rank for all $s_2\geq s_1 \geq m$. 
Of course, the operators with constant coefficients are "sufficiently regular". 

\subsection{Elliptic differential complexes}
 
Let $\pi: T^{\ast} X \to X$ be the (real) cotangent bundle of $X$ and 
let $\pi^*E $ be a induced bundle for the bundle $E$ (i.e. the fiber 
of $\pi^*E$ over the point $(x,\zeta) \in T^{\ast} X $ coincides with $E_x$). 
We write $\sigma  (A): \pi^*E  \to \pi^*F $ for the principal homogeneous 
symbol of the order $m$ of the operator $A$, see, for instance, \cite[\S 1.1.9]{Tark35}. 
Of course, in a suitable local chart we have 
$$
\sigma(A) (x,\zeta) =\sum_{|\alpha|= m} a_\alpha(x) (\iota\zeta)^\alpha, \, x\in {\mathcal O}, \, 
\zeta \in {\mathbb R}^n,
$$ 
where $\iota$ is the imaginary unit. 
We recall that $A$ is called elliptic on $X$ if $k=l$ and the mapping 
$\sigma  (A) (x,\zeta): \pi^*E_x  \to \pi^*F_x $ is invertible for 
$(x,\zeta) \in T^{\ast} X $ with $\zeta\ne 0$, see, for instance, 
\cite[Ch 1, \S 3, Ch. 2, \S 2]{EgShu}. Sometimes $A$ is called {\it overdetermined elliptic} if 
$k< l$ and the mapping $\sigma  (A) (x,\zeta): \pi^*E \to \pi^*F $ is injective for all 
$(x,\zeta) \in T^{\ast} X $ with $\zeta\ne 0$, but not surjective for some  $(x,\zeta)$. 
A typical operator with injective symbol is a suitable connection $\nabla_{E}$ related to a bundle 
$E$, i.e. a first differential operator of the type 
$E \to E\otimes T^*X$, compatible with Hermitian metric 
$(\cdot,\cdot)_{E,x}$ in each fiber $E_x$, see, for instance, \cite[Ch. III]{Wells}. 
In particular, for a trivial vector bundle $E= {\mathbb R}^n\times 
{\mathbb C}^k$ we have $\nabla _E = I_k  \otimes \nabla$ with the usual gradient operator 
$\nabla $ in ${\mathbb R}^n$ where $M_1 \otimes M_2$ stands for the tensor product of matrices $M_1$ and $M_2$. 

Recall that an operator $A$ of an even order $m=2p$ and of type $E\to E$ 
is called strongly elliptic, if  
$$
\Re ( \sigma(A) (x,\zeta) \,w, w)_{E,x}  >0 
 \mbox{ for all } (x,\zeta) \in T^{\ast} X \setminus \{0\} , w \in E_x \setminus \{0\}, 
$$
where $\Re \, a$ is the real part of a complex number $a$. A typical strongly elliptic operator of the second order 
is given by the 'Laplacian' $\nabla_E ^* \nabla _E$. For a trivial vector bundle $E= 
{\mathbb R}^n\times {\mathbb C}^k$ we have $\nabla_E ^* \nabla _E = - I_k\otimes \Delta$ with the 
usual Laplace operator $\Delta $ in ${\mathbb R}^n$. 

A more general notion of ellipticity was introduced by A.~Douglis and L.~Nirenberg, \cite{DN} (see also, for 
instance, \cite[Ch. 1, \S 3]{EgShu} or, \cite[\S 9.2]{WRL}). Namely, let the entries 
of an $(k\times k)$-matrix linear operator $A$ be scalar differential operators 
$A^{(p,r)} = \sum_{|\alpha| \leq m} a_{\alpha}^{(p,r)} (x) \partial ^\alpha$ with 
$a_{\alpha}^{(p,r)} (x)$ being the components of the functional $(k\times k)$-matrix $a_{\alpha}^{(p,r)} (x)$. 
Given two vectors $\vec s, \vec t\in {\mathbb R}^k$, the $(\vec s, \vec t)$-principal part 
of the operator $A$ is the $(k\times k)$-matrix linear operator $\tilde A$ with components 
$$
\tilde A^{(p,r)}= \left\{ 
\begin{array}{ll}
\sum_{|\alpha| =s_p-t_r} a_{\alpha}^{(p,r)} (x) \partial ^\alpha , & s_p\geq t_r, \\ 0,  & s_p< t_r. \\
\end{array}
\right.
$$
Then $(\vec s, \vec t)$-principal symbol of $A$ is the $(k\times k)$-matrix 
$\sigma_{\vec s, \vec t} (X) (x,\zeta)$ with the components 
$$
\Big(\sum_{|\alpha| =s_p-t_r} a_{\alpha}^{(p,r)} (x) \zeta ^\alpha \Big) .
$$
The operator $A$ is called Douglis-Nirenberg elliptic, if 
there are two vectors $\vec{s}, \vec{t}\in {\mathbb Z}^k$ such that 
$$
\det \sigma_{\vec{s}, \vec{t}} (X) (x,\zeta) \ne 0
\mbox{ for all } x\in X, \zeta \in {\mathbb R}^n \setminus \{0\}.
$$
Next, for the \textit{principal symbols} of the operators from  complex 
\eqref{eq.complex}, we have 
\begin{equation}
\label{eq.complex.symb.rel}
\sigma(A_{q+1}) \circ 
\sigma(A_q) \equiv 0. 
\end{equation}
Complex \eqref{eq.complex}  is called elliptic, 
 if the corresponding symbolic complex, 
\begin{equation*} 
0\rightarrow \pi^* E_0\stackrel{\sigma(A_0)}{\rightarrow} \pi^* E_1\stackrel{ \sigma 
(A_1)} {\rightarrow} \pi^* E_2 \rightarrow \dots  
\stackrel{ \sigma (A_{N-1})} {\rightarrow}  \pi^* E_{N} \rightarrow 0,
\end{equation*}
is exact for all $(x,z) \in T^{\ast} X \setminus \{0\} $, i.e. the  range of the mapping 
$\sigma (A_q)$ coincides with the kernel of the mapping $\sigma (A_{q+1})$. In particular, 
$\sigma (A_0)$ is 
injective and $\sigma (A_{N-1})$ is surjective for  all $(x,z) \in T^{\ast} X \setminus \{0\} $. 
Of course, an operator $A_0$ is elliptic if and only if the following complex  is elliptic: 
\begin{equation} \label{eq.complex.symb.short}
0\rightarrow \pi^* E_0\stackrel{\sigma(A_0)}{\rightarrow} \pi^* E_1 \rightarrow 0.
\end{equation} 
There is also a  Douglis-Nirenberg type ellipticity for elliptic complexes, see \cite{AM}.

For the sake of notations, we set $\sigma_q = \sigma (A_q)$ and $\delta_q =  \sigma^*_q \sigma _q + 
\sigma_{q-1} \sigma ^*_{q-1}$; then $\sigma^*_q = \sigma (A^*_q)$ and 
according to \eqref{eq.complex.symb.rel}, for all $0\leq q \leq N-1$  we have 
\begin{equation} \label{eq.Laplace.symb.rel}
\sigma^*_{q} \, \sigma^*_{q+1} = 0, \, 
\delta_{q+1} \, \sigma_q = \sigma_q \, \delta_{q} = \sigma_j \, \sigma_q^*\, \sigma_q, 
\, \sigma^*_q \,\delta_{q+1}  =  \delta_{q} \,\sigma^*_j =    \sigma_q^* \, \sigma_q \, \sigma^*_q.
\end{equation}

\begin{lem} \label{l.exact}
Complex \eqref{eq.complex} is elliptic  if and only if the mappings  
$\delta_q: \pi^* E_q \to \pi^* E_q $ 
are bijective for all $(x,z) \in T^{\ast} X \setminus \{0\} $ and all $0\leq q \leq N$.
\end{lem}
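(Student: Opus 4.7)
The plan is to prove both implications by pointwise linear algebra on the fibers $\pi^\ast E_q$ over a fixed $(x,\zeta)\in T^\ast X\setminus\{0\}$, using the Hermitian structure to get orthogonal decompositions and the standing relation $\sigma_{q+1}\sigma_q=0$ together with its adjoint consequence in \eqref{eq.Laplace.symb.rel}. Throughout I would use the conventions $\sigma_{-1}=0$ and $\sigma_N=0$ to handle the boundary cases $q=0$ and $q=N$ uniformly.

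For the ``only if'' direction, assume the symbolic complex is exact. Suppose $\delta_q v=0$; then
\begin{equation*}
0=(\delta_q v,v)_{q,x}=(\sigma_q v,\sigma_q v)_{q+1,x}+(\sigma_{q-1}^\ast v,\sigma_{q-1}^\ast v)_{q-1,x},
\end{equation*}
so $\sigma_q v=0$ and $\sigma_{q-1}^\ast v=0$. By exactness $\ker\sigma_q=\mathrm{range}\,\sigma_{q-1}$, and $\ker\sigma_{q-1}^\ast=(\mathrm{range}\,\sigma_{q-1})^\perp$; hence $v$ lies in the intersection of two orthogonal subspaces, giving $v=0$. Since the fibers are finite dimensional, injectivity of $\delta_q$ yields bijectivity. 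The boundary cases are covered automatically because the relevant term vanishes.

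For the ``if'' direction, assume each $\delta_q$ is bijective. First I would check $\mathrm{range}\,\sigma_{q-1}\subseteq \ker\sigma_q$, which is immediate from \eqref{eq.complex.symb.rel}. For the reverse inclusion, fix $v\in\ker\sigma_q$ and choose $w$ with $\delta_q w=v$. Applying $\sigma_q$ and using $\sigma_q\sigma_{q-1}=0$ gives $\sigma_q\sigma_q^\ast\sigma_q w=0$; pairing with $\sigma_q w$ shows $\sigma_q^\ast\sigma_q w=0$, and then pairing with $w$ shows $\sigma_q w=0$. Thus $\delta_q w=\sigma_{q-1}\sigma_{q-1}^\ast w$, so $v=\sigma_{q-1}(\sigma_{q-1}^\ast w)\in\mathrm{range}\,\sigma_{q-1}$. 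Specialising $q=0$ (where $\sigma_{-1}=0$) gives $\ker\sigma_0=\{0\}$, and specialising $q=N$ (where $\sigma_N=0$) gives $\pi^\ast E_N=\mathrm{range}\,\sigma_{N-1}$; so exactness holds at every node.

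No substantial obstacle is expected: the whole argument is linear algebra in the fibers, and the key identity $\delta_q=\sigma_q^\ast\sigma_q+\sigma_{q-1}\sigma_{q-1}^\ast$ together with the complex relation $\sigma_{q+1}\sigma_q=0$ does all the work via the two polarisation tricks above. The one mild care-point is bookkeeping at the ends of the complex, which the convention $\sigma_{-1}=\sigma_N=0$ resolves cleanly.
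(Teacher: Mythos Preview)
Your proof is correct; the paper itself states Lemma~\ref{l.exact} without proof, treating it as a standard fact from the theory of elliptic complexes (see e.g.\ \cite{Tark35}), so there is no argument in the paper to compare against. Your fiberwise linear-algebra argument via the Hermitian structure and the polarisation identities is precisely the standard way this equivalence is established.
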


Denote by $\Delta_q$ the Hodge's Laplacians of complex \eqref{eq.complex}:
$$
\Delta_q = A_q^{*} A_q + A_{q-1} A_{q-1}^{*} ,\, 0\leq q \leq N.
$$ 
If $m_q=m_{q-1}$ then $\delta_q =\sigma(A_q^{*} A_q + A_{q-1} A_{q-1}^{*}) $. 
According to Lemma \ref{l.exact}, if complex \eqref{eq.complex} satisfies \eqref{eq.all.equal} then, 
for the complex to be elliptic, it is necessary and sufficient that 
the Laplacians $\Delta_q$ of the complex are strongly elliptic differential 
operators of order $2m$ for all $0\leq q \leq N$. 

Next, given a pair $\boldsymbol{\mu}_q$ consisting of formally non-negative self-adjoint  
differential operators $\boldsymbol{\mu}^{(0)}_q \in \mathrm{Diff}_{2\tilde m_q} 
(X, E_{q+1}\to E_{q+1})$ and $\boldsymbol{\mu}^{(1)}_q 
\in \mathrm{Diff}_{2\hat m_q}(X, E_{q-1}\to E_{q-1}) $, $0\leq q \leq N$, 
 with some  numbers $\tilde m_q, \hat m_q\in {\mathbb Z}_+$, satisfying 
$0 \leq  \tilde m_q \leq m-m_q$, $0\leq \hat m_q \leq m-m_{q-1}$, 
we denote by $\Delta_{q,\boldsymbol{\mu}}$ the steady Lam\'e type operators 
$$
\Delta_{q,\boldsymbol{\mu}} = A^*_{q} \boldsymbol{\mu}_q ^{(0)}A_q + A_{q-1} 
\boldsymbol{\mu}_{q} ^{(1)}A^*_{q-1} .
$$
If orders $\tilde m_q$ and $\hat m_q$ equal to zero then  strong ellipticity means 
that $\boldsymbol{\mu}_{q} ^{(0)} $, $\boldsymbol{\mu}_{q+2} ^{(1)}$ are bijective 
self-adjoint non-negative mappings. 
In general, we may produce the operators $\boldsymbol{\mu}_q ^{(0)}$, $\boldsymbol{\mu}_q ^{(1)}$ with the use of connections over $E_{q+1}$ and $E_{q-1}$, respectively: 
$$
\boldsymbol{\mu}_q ^{(0)} = (\nabla_{E_{q+1}}^* \nabla_{E_{q+1}})^{\tilde m_q}, \, 
\boldsymbol{\mu}_q ^{(0)} = (\nabla_{E_{q-1}}^* \nabla_{E_{q-1}})^{\hat m_q}.
$$
 On this way, taking $\tilde m=m-m_q$, $\hat m=m-m_{q-1}$  we may achieve 
that all the operators $\Delta_{q,\boldsymbol{\mu}}$ have the same order $2m$. 
For this reason we will often use the following assumption.

\begin{asm}
\label{asm.mu}
The formally self-adjoint non-negative operators  $\boldsymbol{\mu}^{(0)}_{j}$ and $\boldsymbol{\mu}^{(1)}_{j}$ 
are strongly elliptic. 
\end{asm}

Set $\delta_{q,\boldsymbol{\mu}} = \sigma^*_q \, \sigma(\boldsymbol{\mu}^{(0)}_{q})\, \sigma _q + \sigma_{q-1} 
\,\sigma(\boldsymbol{\mu}^{(1)}_{q}) \, \sigma ^*_{q-1}$. 
\begin{lem} \label{l.exact.mu}
Let  complex \eqref{eq.complex} be elliptic. If $0\leq j \leq N$ then, under Assumption {\rm \ref{asm.mu}}, the mapping  
$\delta_{j,\boldsymbol{\mu}}: \pi^* E_j \to \pi^* E_j $ 
is bijective for all $(x,z) \in T^{\ast} X \setminus \{0\}$; in particular, the operator 
$\Delta_{j,\boldsymbol{\mu}}$ is strongly elliptic self-adjoint non-negative, too, if 
we additionally have $m_j + \tilde m_j = m_{j-1} + \hat m_j$.
\end{lem}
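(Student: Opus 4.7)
The plan is to reduce bijectivity of $\delta_{j,\boldsymbol{\mu}}$ at $(x,\zeta)$, $\zeta\ne 0$, to a positivity argument that feeds into Lemma \ref{l.exact}. Since the fibers are finite-dimensional and $\delta_{j,\boldsymbol{\mu}}$ maps $\pi^*E_j$ to itself, I only need to show injectivity. Fix $w\in E_{j,x}\setminus\{0\}$ and form the Hermitian pairing
\begin{equation*}
(\delta_{j,\boldsymbol{\mu}}(x,\zeta)w,w)_{j,x}
= \bigl(\sigma(\boldsymbol{\mu}^{(0)}_j)(x,\zeta)\,\sigma_j w,\sigma_j w\bigr)_{j+1,x}
+ \bigl(\sigma(\boldsymbol{\mu}^{(1)}_j)(x,\zeta)\,\sigma^*_{j-1} w,\sigma^*_{j-1} w\bigr)_{j-1,x}.
\end{equation*}
Assumption \ref{asm.mu} says each $\boldsymbol{\mu}^{(\cdot)}_j$ is formally self-adjoint, non-negative and strongly elliptic, so for $\zeta\ne 0$ each symbol is a positive-definite Hermitian operator on the corresponding fiber. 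Consequently both terms on the right are real and non-negative, and their sum vanishes only when $\sigma_j w=0$ and $\sigma^*_{j-1} w=0$ simultaneously.

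Suppose then that $\delta_{j,\boldsymbol{\mu}}(x,\zeta)w=0$. The pairing above gives $\sigma_j w=0$ and $\sigma^*_{j-1}w=0$, which in turn yield $\delta_j(x,\zeta)w = \sigma^*_j\sigma_j w + \sigma_{j-1}\sigma^*_{j-1}w = 0$. Ellipticity of complex \eqref{eq.complex} together with Lemma \ref{l.exact} forces $w=0$. Hence $\delta_{j,\boldsymbol{\mu}}$ is injective, and by dimension bijective, for every $(x,\zeta)\in T^*X\setminus\{0\}$.

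For the second statement, formal self-adjointness of $\Delta_{j,\boldsymbol{\mu}}$ is automatic from the definition (each summand is of the form $B^*MB$ with $M$ formally self-adjoint), and non-negativity follows from integrating the pointwise identity $(\Delta_{j,\boldsymbol{\mu}} u,u)=(\boldsymbol{\mu}^{(0)}_j A_j u, A_j u)+(\boldsymbol{\mu}^{(1)}_j A^*_{j-1}u, A^*_{j-1} u)\geq 0$. The assumption $m_j+\tilde m_j=m_{j-1}+\hat m_j$ makes both summands of $\Delta_{j,\boldsymbol{\mu}}$ of the same order $2(m_j+\tilde m_j)$, and, by the multiplicativity of principal symbols under composition of operators of matching orders, the principal symbol of $\Delta_{j,\boldsymbol{\mu}}$ is exactly $\delta_{j,\boldsymbol{\mu}}$. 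Since the quadratic form $(\delta_{j,\boldsymbol{\mu}}(x,\zeta)w,w)_{j,x}$ was shown above to be strictly positive for $w\ne 0$, $\zeta\ne 0$, strong ellipticity follows at once.

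The only slightly delicate point is bookkeeping the orders to justify that $\sigma(\Delta_{j,\boldsymbol{\mu}}) = \delta_{j,\boldsymbol{\mu}}$ rather than a larger symbol containing lower-order contributions: without the matching condition $m_j+\tilde m_j=m_{j-1}+\hat m_j$ one of the two terms becomes subprincipal and must be discarded when computing the principal symbol of a Douglis–Nirenberg operator, so strong ellipticity would have to be formulated in the Douglis–Nirenberg sense. Under the stated equality, however, everything collapses to the classical case and no further argument is needed.
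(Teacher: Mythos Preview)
Your proof is correct and is precisely the natural argument one would give; the paper itself states Lemma~\ref{l.exact.mu} (like Lemma~\ref{l.exact}) without proof, so there is nothing to compare against beyond noting that your positivity-and-reduction-to-$\delta_j$ argument is the expected one. Your handling of the order condition $m_j+\tilde m_j=m_{j-1}+\hat m_j$ to identify $\sigma(\Delta_{j,\boldsymbol{\mu}})$ with $\delta_{j,\boldsymbol{\mu}}$, and your closing remark on what fails without it, are both apt.
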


\begin{rem} Clearly, the generalized Laplacians can be factorized as follows
\begin{equation} \label{eq.factor}
\Delta_{q,\boldsymbol{\mu}} =  
\left(
\begin{array}{ll}  A^*_{q} , &  
  A_{q-1}  \boldsymbol{\mu}_{q} ^{(1)} \\
\end{array}
\right) \left(
\begin{array}{ll} \boldsymbol{\mu}_q ^{(0)} A_{q} \\ 
  A^*_{q-1}  \\
\end{array}
\right),  0\leq q \leq N.
\end{equation}
As we have noted above, a compatibility complex $\{ A_q , E_q\}$ for an 
operator $A_0$ is not uniquely defined. For this reason, formula \eqref{eq.factor} suggests 
the following natural conditions for the operators $\boldsymbol{\mu}_q ^{(k)}$:
\begin{equation} \label{eq.coh}
A_{q+1}\, \boldsymbol{\mu}_{q+2} ^{(1)}\ \boldsymbol{\mu}_{q} ^{(0)} \,
A_q \equiv  0  \mbox{ for all }  0\leq q\leq N-1.
\end{equation} 
On the symbolic level this means
\begin{equation} \label{eq.coh.symb}
\sigma_{q+1} \, \sigma(\boldsymbol{\mu}_{q+2} ^{(1)}) \, \sigma ( \boldsymbol{\mu}_{q} ^{(0)}) \,
\sigma_q  \equiv  0  \mbox{ for all }  0\leq q\leq N-1.
\end{equation}
If the complex $\{ A_q, E_q\} $ consists of the operators  with constant coefficients 
then we may set $\boldsymbol{\mu}_{q} ^{(0)} 
= (-1)^{\tilde m_q} I_{k_q+1}\otimes \Delta^{\tilde m_q}$, $\boldsymbol{\mu}_{q} ^{(1)} 
= (-1)^{\hat m_q} I_{k_q-1}\otimes \Delta^{\hat m_q}$; in this case, 
$$
\boldsymbol{\mu}_{q} ^{(0)} A_q = (-1)^{\tilde m_q} A_q (I_{k_q}\otimes \Delta^{\tilde m_q}), \,  
 A_{q+2} \boldsymbol{\mu}_{q+2} ^{(1)} = (-1)^{\hat m_q}  (I_{k_{q+3}}\otimes \Delta^{\hat m_q}) A_{q+2},
$$
and hence \eqref{eq.coh}, \eqref{eq.coh.symb} hold true. In general case we should look for suitable 
commutative relations between $\boldsymbol{\mu}_{q} ^{(0)} $, $\boldsymbol{\mu}_{q+2} ^{(1)}$, $A_q$ 
and $A_{q+2}$ that is not a trivial task. 
\end{rem}

Finally, if $m_q+\tilde m_q =m_{q-1} +\hat m_q$, we may consider the Helmholtz-Lam\'e operators 
\begin{equation} \label{eq.diagonal} 
{\mathfrak D}_{q,\boldsymbol{\mu}}= \Delta_{q,\boldsymbol{\mu}} + \sum_{|\alpha|\leq 2(m_q+\tilde m_q)-1}
d_\alpha (x) \partial^\alpha
\end{equation}
that are strongly elliptic if the operators $\boldsymbol{\mu}_{q}^{(0)}$, $\boldsymbol{\mu}_{q}^{(1)}$ are 
strongly elliptic (${\mathfrak D}_{q}$ corresponds to the case where $\boldsymbol{\mu}_{q} ^{(0)} 
=  I_{k_q+1}$, $\boldsymbol{\mu}_{q} ^{(1)} = I_{k_q-1}$). 

Note that the most natural part of the low order perturbation of the Laplacian  
$\Delta_{q,\boldsymbol{\mu}}$ is   usually  given as follows:
\begin{equation} \label{eq.diagonal.low}
C_q A_q + \tilde C_q A^*_{q-1} + M_q 
\end{equation}
with a formally self-adjoint  operator $M_q \in \mathrm{Diff}_{0}(X,E_{q} \to E_{q})$,  and operators $C \in
\mathrm{Diff}_{m_q+2\tilde m_q-1}(X,E_{q+1} \to E_{q})$, 
$\tilde C \in \mathrm{Diff}_{m_{q-1}+2\hat m_q-1}(X,E_{q-1} \to E_{q})$. 

\subsection{The induced complexes and time dependent processes}

The constructions considered in the previous subsection  
are fit for steady models of Mathematical Physics. 
To use the differential complexes for time dependent models, one 
may introduce the so-called induced complex
\begin{equation*} 
0\rightarrow C^\infty(X, E_0 (t)) \stackrel{A_{0}}{\rightarrow}
\dots \rightarrow \dots 
\stackrel{A_{N-1}}{\rightarrow} C^\infty(X, E_N (t)) \rightarrow 0, 
\end{equation*}
where sections of the induced bundles $E_q (t)$ and the coefficients $a^{(q)}_\alpha$ of the 
differential operators $A_q$ depend on both $x$ and the real parameter $t$. 
The induced complex 
$\{A_q,E_q (t)\}$  is elliptic on $X \times \mathbb [0,T)$ with a (possibly, infinite) 
time $T$, if the corresponding symbolic complex, 
\begin{equation*} 
0\rightarrow 
\pi^* E_0 (t)\stackrel{\sigma(A_0)}{\rightarrow} \pi^* E_1 (t)\stackrel{ \sigma 
(A_1)} {\rightarrow} \pi^* E_2 (t)\rightarrow \dots  
\stackrel{ \sigma (A_{N-1})} {\rightarrow}  \pi^* E_{N} (t)\rightarrow 0,
\end{equation*}
is exact for all $(x,z) \in T^{\ast} X \setminus \{0\} $ and each $t \in [0,T)$. Note that for some kind of problems one needs a more subtle 
notion of  ellipticity with a parameter, see, for instance, \cite{AV} or 
\cite[Ch 2, \S 2]{EgShu}.

In any case, one may easily introduce the  operators 
$$
{\mathcal L}_{q,\boldsymbol{\mu}} = \partial _t + {\mathfrak D}_{q,\boldsymbol{\mu}}, \, \, 
{\mathcal H}_{q,\boldsymbol{\mu}} = \partial^2 _t + {\mathfrak D}_{q,\boldsymbol{\mu}}
$$ 
over $X \times [0,T)$. Note  that ${\mathcal L}_{q,\boldsymbol{\mu}}$  is strongly parabolic, if 
${\mathfrak D}_{q,\boldsymbol{\mu}}$ is strongly elliptic,  see \cite{eid}, 
\cite[Ch 1, \S 3, Ch. 2, \S 5]{EgShu} and  ${\mathcal H}_{q,\boldsymbol{\mu}}$ has the 'hyperbolicity' 
properties if ${\mathfrak D}_{q,\boldsymbol{\mu}}$ is strongly elliptic, see, for instance, 
\cite[Ch 1, \S 3, Ch. 2, \S 4]{EgShu}. 

\section{Maxwell's and Stokes' type operators for differential complexes}
\label{s.MS}

It is well known that  Stokes' system  $S = S_{1,\mu} (d, \partial_t)$, 
\begin{equation} \label{eq.Stokes.class}
S  \, \left( \begin{array}{lll} 
\vec{v} \\  p \\
\end{array}
\right) 
= \left( \begin{array}{lll} 
(\partial_t - \mu \Delta)I_n & \nabla \\ {\rm div} & 0 \\
\end{array}
\right) \left( \begin{array}{lll} 
\vec{v} \\  p \\
\end{array}
\right) = 
\left( \begin{array}{lll} 
\vec{f} \\  0 \\
\end{array}
\right) 
\end{equation}
plays an essential role in mathematical models for incompressible fluid with given 
the dynamical viscosity $\mu>0$ of the fluid under the consideration, the density vector 
of outer forces $\vec{f}$, the search-for velocity vector field $\vec{v}$ 
and  pressure $p$ of the flow, see, for instance, 
\cite{St}, \cite{LaLi}, \cite{Tema79}.
Actually, Stokes'  system $S$ and its steady version give the principal linear parts of the 
steady and evolutionary 
Navier-Stokes equations in ${\mathbb R}^n$, $n\geq 2$, 
endowed with the non-linear perturbation  given by 
\begin{equation*} 
N_1(\vec{v}) = \left( \begin{array}{ccc} 
(\vec{v}  \cdot \nabla) \vec{v} & 0\\ 0 & 0 \\
\end{array}
\right) = \left( \begin{array}{ccc} 
\Big(\sum\limits_{j=1}^n v_j \partial_j \Big) \vec{v} & 0\\ 0 & 0 \\
\end{array}
\right) .
\end{equation*}
It appears, that $S$ can be easily written in the context of the de Rham complex 
$\{ \nabla, \mathrm{curl}, 
\mathrm{div}\} $ in ${\mathbb R}^3$ at the step $q=1$ with 
$$
\Delta_1 = 
-\Delta I_3 = \mathrm{curl} ^* \mathrm{curl} + \mathrm{div}^*  \mathrm{div} = 
 \mathrm{curl}  \, \mathrm{curl} - \nabla  \, \mathrm{div}. 
$$  
A  generalisation of Stokes' type operators for elliptic complex \eqref{eq.complex} was proposed in 
\cite[formula (1.2)]{MeShT} (cf. \cite{ShTaSEMR} for the de Rham complex):
$$
\tilde S_q (A ,\partial _t) = \left( \begin{array}{ccc} 
{\mathcal L}_{q,\boldsymbol{\mu}}  & A_{q-1} \\ A_{q-1}^* & 0 \\
\end{array}
\right), \, 0\leq q \leq N.
$$
It was noted in \cite[Proposition 2.2]{ShTaSEMR}  and \cite[formula (0.4)]{PaSh} that for $q\geq 2$ 
one more natural line should be added to Stokes' type operator associated with complex 
\eqref{eq.complex} that is missed for $q=1$:
$$
\hat S_q (A,\partial_t ) = \left( \begin{array}{ccc} 
{\mathcal L}_{q,\boldsymbol{\mu}}  & A_{q-1} \\ A_{q-1}^* & 0 \\
 0 & A_{q-2}^*  \\
\end{array}
\right) , \, 0\leq q \leq N.
$$
Actually the additional operator equation provides some uniqueness for Stokes' type equations. However, adding 
this natural equation we see that new operators becomes overdetermined. This fact may complicate essentially  the 
theory of Stokes' type equations for the complex \eqref{eq.complex}  at the degrees $q>1$. To overcome this 
difficulty, let us introduce  slightly different generalisations of  Stokes' type operators. 

\subsection{Steady Maxwell's and Stokes' type operators for elliptic complexes}
Consider the following steady Maxwell's and Stokes' type operators related to complex \eqref{eq.complex} at 
degrees $0\leq q\leq N$. Namely, set $r_q=(\sum_{j=0}^q k_j)$, $0\leq q \leq N$. 
By $B^N $ we denote 
$(N+1)\times 
(N+1)$-block matrix (in fact, it is a $(r_N \times r_N)$-matrix), 
such that each its block $b^N _{ij}$ is a 
$(k_{N-i+1} \times k_{N-j+1})$-matrix. 
Let $B_j$ be such a matrix with $b^N _{jj} = I_{k_j}$ and $B^N _{pq} = 0$
for $p\ne j$ or $q \ne j$. 
Clearly,
\begin{equation} \label{eq.Bi}
B_j B_j = B_j, \,\, B_i B_j = 0 \mbox{ if } j\ne i 
\end{equation} 
and all the blocks of the matrix $ B_i B^N B_j $ equal 
to zero except the block 
$$
(B_i B^N B_j)_{ij}=b^N_{ij}.
$$ 
For this reason, if $P$ is an operator  of type $E_{N-i+1} \to E_{N-j+1}$  then we denote by 
$B_i P B_j$ the $(N+1)\times (N+1)$-block matrix with all the block being zero except the block $b^N_{ij}=P$.

Next, set ${\mathfrak E}_q = \oplus_{j=0}^q E_j$, $0\leq q \leq N$. 
Given set $\boldsymbol{\mu}= (\boldsymbol{\mu}_0, \dots, \boldsymbol{\mu}_q)$ of pairs of 
differential operators,  we introduce the following Maxwell's type operators 
acting on sections $U_N =  (u_0, \dots, u_N)$ 
of the bundle ${\mathfrak E}_N$: 
${\mathcal M}^{(0)}_{0,\boldsymbol{\mu}} (A) =0$, 
${\mathcal M}^{(1)}_{0,\boldsymbol{\mu}} (A) =0$, 
\begin{equation} \label{eq.M.0} 
{\mathcal M}^{(0)}_{q,\boldsymbol{\mu}} (A)=
\sum_{j=0}^{q-1} \Big( B_{j+1} \boldsymbol{\mu}_j ^{(0)} A_{j}   B_{j} + 
B_{j}  A^*_{j}  B_{j+1} \Big) , \, 1\leq q \leq N,
\end{equation}
\begin{equation} \label{eq.M.1}  
{\mathcal M}^{(1)}_{q,\boldsymbol{\mu}} (A)=
\sum_{j=0}^{q-1} \Big( B_{j+1}  A_{j} \boldsymbol{\mu}_{j+1} ^{(1)} B_{j} + 
B_{j}  A^*_{j}  B_{j+1} \Big) , \, 1\leq q \leq N.
\end{equation}
Obviously, ${\mathcal M}^{(0)}_{q,\boldsymbol{\mu}} (A) = 
{\mathcal M}^{(1)}_{q,\boldsymbol{\mu}} (A)$, if  
\begin{equation} \label{eq.commute.mu}
A_j \boldsymbol{\mu}_{j+1}^{(1)} =  \boldsymbol{\mu}_{j}^{(0)} A_j  
\mbox{ for all } 0\leq j\leq q;
\end{equation}
consequently, we will write ${\mathcal M}_{q,\boldsymbol{\mu}} (A)$ for 
${\mathcal M}^{(j)}_{q,\boldsymbol{\mu}} (A)$ in this case. 
In particular, we will use the notation  ${\mathcal M}_{q} (A) $ 
in the simplest case where $\boldsymbol{\mu}_j ^{(0)} = 
I_{k_{j+1}}$, $\boldsymbol{\mu}_j ^{(1)} = 
I_{k_{j-1}}$ 
for all $0\leq j\leq q$. 

Similarly, we introduce  Stokes' type operators 
\begin{equation} \label{eq.Stokes.full.short}
S_{q,a} (A,{\mathfrak D}_{\boldsymbol{\mu}}) 
= \sum_{j=0}^q B_j {\mathfrak D}_{j,\boldsymbol{\mu}} 
B_j + a \, \sum_{j=0}^{q-1} \Big( B_{j+1} A_{j} B_{j} + B_{j} A^*_{j} B_{j+1} \Big),
\end{equation}
with $a=a_q$ being equal to $0$ or $1$. 

In fact, by the construction, the operators 
${\mathcal M}^{(i)}_{q,\boldsymbol{\mu}} (A)$, $S_{q,1} (A,{\mathfrak D}_{\boldsymbol{\mu}})$ 
act on sections $U_q =  (u_0, \dots, u_q)$ of the bundle ${\mathfrak E}_q$ and hence we will identify them  
with the lower right $(r_q\times r_q)$-minors of the related full $(r_N\times r_N)$-matrices. 
For instance, in a more bulky 
matrix form the operator $S_{q,1} (A,{\mathfrak D}_{\boldsymbol{\mu}}) $ may be written as  
\begin{equation*} 
\left( \begin{array}{cccccccc}  
{\mathfrak D}_{q,\boldsymbol{\mu}} 
& A_{q-1} & 0 & 0 & 0 & 0 & \dots & 0 \\ 
A_{q-1}^* & {\mathfrak D}_{q-1,\boldsymbol{\mu}} & A_{q-2} & 0 & 0 & 0 & \dots & 0\\
 0 & A_{q-2}^*  & {\mathfrak D}_{q-2,\boldsymbol{\mu}}
 &  A_{q-3} & 0 & 0 & \dots & 0  \\
 \dots & \dots  & \dots &  \dots & \dots & \dots & \dots  & \dots\\
 0 & \dots  & \dots & 0 &  A_2^* & 
{\mathfrak D}_{2,\boldsymbol{\mu}} & A_1 & 0  \\
 0 & \dots  & \dots & \dots & 0 & A_1^* &  
{\mathfrak D}_{1,\boldsymbol{\mu}}   & A_0  \\
 0 & \dots & \dots  & \dots & \dots & 0 & A_0^* &    
{\mathfrak D}_{0,\boldsymbol{\mu}}  \\
\end{array}
\right) .
\end{equation*}
More compact notations \eqref{eq.M.0}, \eqref{eq.M.1}, \eqref{eq.Stokes.full.short} echo with the sedeonic form  
of equations of Mathematical Physics 
proposed in 
\cite{Mir15}, \cite{Mir18}, \cite{Mir20A}, \cite{Mir20B}. 

Let's explain the connection between  Maxwell's and Stokes' type operators.

\begin{lem} 
\label{l.factor}
If \eqref{eq.coh} is fulfilled for all $0\leq j \leq q$, then we have
\begin{equation} \label{eq.P.mu}
{\mathcal M}^{(1)}_{q,\boldsymbol{\mu}} (A){\mathcal M}^{(0)}_{q,\boldsymbol{\mu}} (A) = B_{q} A_{q-1} 
\boldsymbol{\mu}_{q-1} ^{(1)} A^*_{q-1} B_{q} + 
\sum_{j=0}^{q-1} B_j \Delta_{j,\boldsymbol{\mu}} B_j .
\end{equation}
In particular, if ${\mathfrak D}_{j,\boldsymbol{\mu}_{j}} = \Delta_{j,\boldsymbol{\mu}_{j}}$ for 
all  $0\leq j \leq N$, then 
\begin{equation*} 
S_{q,a} (A,{\mathfrak D}_{\boldsymbol{\mu}}) 
=  {\mathcal M}^{(1)}_{q,\boldsymbol{\mu}} (A){\mathcal M}^{(0)}_{q,\boldsymbol{\mu}} (A) + 
B_{q} A^*_{q} \boldsymbol{\mu}_{q} ^{(0)} A_{q} B_{q} + a \, {\mathcal M}_{q} (A) ,
\, 0\leq q \leq N.
\end{equation*}
\end{lem}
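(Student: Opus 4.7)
The plan is to prove the two identities by direct expansion, using the projector algebra \eqref{eq.Bi}, the complex relation $A_{j+1}\circ A_j\equiv 0$ (equivalently $A_j^{\ast}\circ A_{j+1}^{\ast}\equiv 0$), and the hypothesis \eqref{eq.coh}. The block structure means that cross-products of the form $(B_i P B_j)(B_k Q B_l)$ reduce to $\delta_{jk}\, B_i P Q B_l$, so the algebra is essentially bookkeeping once the vanishing terms are identified.

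First I would write out the product ${\mathcal M}^{(1)}_{q,\boldsymbol\mu}(A)\,{\mathcal M}^{(0)}_{q,\boldsymbol\mu}(A)$ as a double sum over $0\le i,j\le q-1$ of four term products. Applying $B_iB_k=\delta_{ik}B_k$ reduces each to one of four surviving index patterns. Two of them vanish for structural reasons: the ``$A^{\ast}\!A^{\ast}$'' pattern gives $B_iA_i^{\ast}A_{i+1}^{\ast}B_{i+2}=0$ from $A_{i+1}\circ A_i\equiv 0$; and the ``$A\,\boldsymbol\mu\boldsymbol\mu A$'' pattern gives $B_{j+2}\,A_{j+1}\boldsymbol\mu_{j+2}^{(1)}\boldsymbol\mu_j^{(0)}A_j\,B_j$, which is zero by \eqref{eq.coh}. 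What remains is
\[
\sum_{j=0}^{q-1}B_{j+1}A_j\boldsymbol\mu_{j+1}^{(1)}A_j^{\ast}B_{j+1}\;+\;\sum_{j=0}^{q-1}B_j A_j^{\ast}\boldsymbol\mu_j^{(0)}A_j B_j.
\]

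Next I would reindex the first sum ($i=j+1$), split off the top term at $i=q$, and recombine the remainder with the second sum to recognize the definition $\Delta_{j,\boldsymbol\mu}=A_j^{\ast}\boldsymbol\mu_j^{(0)}A_j+A_{j-1}\boldsymbol\mu_j^{(1)}A_{j-1}^{\ast}$ in each block $0\le j\le q-1$ (the missing $j=0$ contribution from the reindexed sum is accounted for by the convention $A_{-1}=0$). This yields exactly $B_q A_{q-1}\boldsymbol\mu_q^{(1)}A_{q-1}^{\ast}B_q+\sum_{j=0}^{q-1}B_j\Delta_{j,\boldsymbol\mu}B_j$, which is \eqref{eq.P.mu}.

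For the second identity, I would start from the definition \eqref{eq.Stokes.full.short}, specialize ${\mathfrak D}_{j,\boldsymbol\mu}=\Delta_{j,\boldsymbol\mu}$, and peel off the top diagonal block: $\sum_{j=0}^q B_j\Delta_{j,\boldsymbol\mu}B_j=B_q\Delta_{q,\boldsymbol\mu}B_q+\sum_{j=0}^{q-1}B_j\Delta_{j,\boldsymbol\mu}B_j$. Expanding $\Delta_{q,\boldsymbol\mu}$ gives the $B_q A_q^{\ast}\boldsymbol\mu_q^{(0)}A_q B_q$ term stated on the right, plus $B_q A_{q-1}\boldsymbol\mu_q^{(1)}A_{q-1}^{\ast}B_q$, and combining the latter with $\sum_{j=0}^{q-1}B_j\Delta_{j,\boldsymbol\mu}B_j$ reassembles ${\mathcal M}^{(1)}{\mathcal M}^{(0)}$ via the first identity. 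The off-diagonal part of $S_{q,a}$ is by inspection $a\,{\mathcal M}_q(A)$.

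The main obstacle is purely clerical: tracking which of the four patterns in the expansion survives and matching the boundary index ($j=q-1$ versus $j=q$) so that the residual term $B_qA_{q-1}\boldsymbol\mu_q^{(1)}A_{q-1}^{\ast}B_q$ comes out with the correct $\boldsymbol\mu$-label. No analytic subtlety enters; the hypothesis \eqref{eq.coh} is used exactly once, to eliminate the single ``long'' cross term that prevents ${\mathcal M}^{(1)}{\mathcal M}^{(0)}$ from collapsing into a block-diagonal Lam\'e type operator.
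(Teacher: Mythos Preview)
Your proposal is correct and is precisely the direct block-algebra expansion the paper has in mind; the paper states the lemma without proof, treating it as an immediate computation from \eqref{eq.Bi}, \eqref{eq.complex.base.rel} and \eqref{eq.coh}, and your argument fills in exactly those details.

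One remark: your computation produces the residual term $B_{q}A_{q-1}\boldsymbol{\mu}_{q}^{(1)}A_{q-1}^{\ast}B_{q}$, whereas the displayed formula \eqref{eq.P.mu} in the statement carries the subscript $q-1$ on $\boldsymbol{\mu}^{(1)}$. Your index is the correct one (otherwise the operator does not even type-check, since $\boldsymbol{\mu}_{q-1}^{(1)}$ acts on sections of $E_{q-2}$), and the paper itself uses the index $q$ in the symbolic analogue \eqref{eq.symb.P}; so the $q-1$ in \eqref{eq.P.mu} is a typographical slip, not a flaw in your argument.
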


Next, we note that Stokes' type operator $S_{0} (A,{\mathfrak D}_{0,\boldsymbol{\mu}}) = 
{\mathfrak D}_{0,\boldsymbol{\mu}}$ is elliptic and strongly elliptic on $X$ 
if the operator $\boldsymbol{\mu}_0^{(0)}$ is strongly elliptic. 
As it is known, see, for instance, \cite[Ch. II, \S 4, Example 1]{Mas97}, 
Stokes' operator \eqref{eq.Stokes.class} 
is Douglis-Nirenberg elliptic over ${\mathbb R}^n$. Then the following two statements are rather expectable. 
To formulate the statements, we set 
$$
\tilde \sigma ({\mathcal M}^{(0)}_{q,\boldsymbol{\mu}} (A)) =
\sum_{j=0}^{q-1} \Big( B_{j+1}\, \sigma (
\boldsymbol{\mu}_j ^{(0)}) \, \sigma_{j} \, B_{j} + 
B_{j} \, \sigma^*_{j} \, B_{j+1} \Big),
$$
$$
\tilde \sigma ({\mathcal M}^{(1)}_{q,\boldsymbol{\mu}} (A)) =
\sum_{j=0}^{q-1} \Big( B_{j+1}  \, \sigma_{j} \, \sigma (\boldsymbol{\mu}_{j+1} ^{(1)})\,  B_{j} + 
B_{j} \, \sigma^*_{j} \, B_{j+1} \Big).
$$

\begin{prop} \label{p.symb.inverse.0}
Let complex \eqref{eq.complex} be elliptic, Assumption {\rm \ref{asm.mu}} be fulfilled for all $0\leq j \leq N$ 
and \eqref{eq.coh.symb} be true for all $0\leq j \leq N$. 
Then the symbolic matrices $\tilde \sigma ({\mathcal M}^{(i)}_{N,\boldsymbol{\mu}} (A))$, $i=0,1$, are invertible 
for all $(x,\zeta) \in T^{\ast} X \setminus \{ 0\}$. 
In particular, 
\begin{itemize}
\item
the operators ${\mathcal M}^{(0)}_{N,\boldsymbol{\mu}} (A)$, ${\mathcal M}^{(1)}_{N,\boldsymbol{\mu}} (A)$, 
$N\geq 1$, are  elliptic if 
\eqref{eq.all.equal} is true for all $0\leq j \leq N$; 
\item  
the operator $S_{N,a}(A,\mathfrak{D}_{\boldsymbol{\mu}})$, $N\geq 1$,  is elliptic for any $a$ if 
for all  $0\leq j \leq N$
\begin{equation} \label{eq.orders.mu}
m_j + \tilde m_j =m_{j-1} + \hat m_j =m ; 
\end{equation}
\item 
under \eqref{eq.all.equal}  
the operator $S_{N,1}(A,\mathfrak{D}_{\boldsymbol{\mu}})$, $N\geq 1$,   is elliptic  if 
\begin{equation} \label{eq.all.zero}
\boldsymbol{\mu}^{(0)}_j =0 \mbox{ for all } 0\leq j \leq N-1, \,\, 
\boldsymbol{\mu}^{(1)}_j =0  \mbox{ for all } 1\leq j \leq N. 
\end{equation}
\item
the operators ${\mathcal M}^{(0)}_{N,\boldsymbol{\mu}} (A)$, 
${\mathcal M}^{(1)}_{N,\boldsymbol{\mu}} (A)$,  $N\geq 1$,  
are Douglis-Nirenberg elliptic; 
\item
the operator $S_{N,1} (A,\mathfrak{D}_{\boldsymbol{\mu}})$, $N\geq 1$,  
is  Douglis-Nirenberg elliptic.  
\end{itemize}
 \end{prop}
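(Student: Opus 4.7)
The cornerstone is the first claim, namely invertibility of the symbolic matrices $\tilde\sigma(\mathcal{M}^{(i)}_{N,\boldsymbol{\mu}}(A))$ on $T^*X\setminus\{0\}$. I would obtain this by computing the product $\tilde\sigma(\mathcal{M}^{(1)}_{N,\boldsymbol{\mu}})\,\tilde\sigma(\mathcal{M}^{(0)}_{N,\boldsymbol{\mu}})$ via the symbolic counterpart of the factorization \eqref{eq.P.mu} in Lemma \ref{l.factor}, observing that at $q=N$ the relation $A_N\equiv 0$ promotes the last off-diagonal remainder $A_{N-1}\boldsymbol{\mu}^{(\cdot)}_N A^*_{N-1}$ to the full Hodge Laplacian $\Delta_{N,\boldsymbol{\mu}}$. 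The product then collapses to the block-diagonal matrix $\sum_{j=0}^N B_j\,\delta_{j,\boldsymbol{\mu}}\,B_j$, whose diagonal blocks are all bijective by Lemma \ref{l.exact.mu} (using the ellipticity of \eqref{eq.complex}, Assumption \ref{asm.mu}, and the compatibility \eqref{eq.coh.symb}). Since the two square factors multiply to an invertible matrix, each of $\tilde\sigma(\mathcal{M}^{(0)})$ and $\tilde\sigma(\mathcal{M}^{(1)})$ is itself invertible.

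The Petrovskii ellipticity bullets are then straightforward corollaries. Under \eqref{eq.all.equal} one has $m_j=m$, which forces $\tilde m_j=\hat m_j=0$ through the bounds $0\le\tilde m_j\le m-m_j$ and $0\le\hat m_j\le m-m_{j-1}$; consequently every $\boldsymbol{\mu}^{(i)}_j$ is of order zero, and $\mathcal{M}^{(i)}_{N,\boldsymbol{\mu}}$ becomes genuinely homogeneous of order $m$ with principal symbol exactly $\tilde\sigma(\mathcal{M}^{(i)}_{N,\boldsymbol{\mu}})$. For $S_{N,a}$ under \eqref{eq.orders.mu} the diagonal blocks $\mathfrak{D}_{j,\boldsymbol{\mu}}$ are of order $2m$ while every off-diagonal $aA_{j-1}$ has order $m_{j-1}\le m<2m$, so the order-$2m$ principal symbol is block-diagonal with entries $\delta_{j,\boldsymbol{\mu}}$, bijective again by Lemma \ref{l.exact.mu}. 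Under \eqref{eq.all.equal} combined with \eqref{eq.all.zero} all $\Delta_{j,\boldsymbol{\mu}}$ vanish and only the off-diagonal $A_{j-1}$'s survive at the top order $m$, so the order-$m$ principal symbol of $S_{N,1}$ coincides with $\tilde\sigma(\mathcal{M}_N(A))$, already shown invertible.

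For the Douglis-Nirenberg statements I would equip the rows and columns of the block matrices $\mathcal{M}^{(i)}_{N,\boldsymbol{\mu}}$ and $S_{N,1}$ with weights $\vec s,\vec t$ determined by the bundle index $j$ and the orders $m_j,\tilde m_j,\hat m_j$, choosing them so that every nonzero scalar entry is homogeneous of the prescribed degree $s_p-t_r$. The $(\vec s,\vec t)$-principal symbol then reduces to $\tilde\sigma(\mathcal{M}^{(i)}_{N,\boldsymbol{\mu}})$, respectively to its Stokes counterpart, whose non-vanishing determinant was already secured in the first paragraph.

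The main obstacle I anticipate is bookkeeping rather than deep mathematics: one must exhibit a single pair $(\vec s,\vec t)$ that simultaneously captures the mixed orders contributed by $A_j$, $A^*_j$, $\boldsymbol{\mu}^{(0)}_j$, $\boldsymbol{\mu}^{(1)}_j$ and the lower-order corrections $C_q,\tilde C_q, M_q$ appearing in \eqref{eq.diagonal.low}, while guaranteeing that no off-diagonal block produces a spurious higher-order contribution that would spoil the identification of the $(\vec s,\vec t)$-principal symbol with the already-invertible Maxwell or Stokes symbolic matrix.
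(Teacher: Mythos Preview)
Your proposal is correct and follows essentially the same route as the paper: multiply the two symbolic Maxwell matrices using the symbolic analogue of Lemma~\ref{l.factor}, observe that at degree $q=N$ the residual block is promoted to the full $\delta_{N,\boldsymbol\mu}$, invoke Lemma~\ref{l.exact.mu} for invertibility of the resulting block-diagonal matrix, and then read off each Petrovskii bullet by order-counting exactly as you describe. The only place the paper goes further than you is the Douglis--Nirenberg bookkeeping you flag as the main obstacle: rather than leaving the weights implicit, the paper writes down the linear system \eqref{eq.DN.P} for $s^{(i)}_j,t^{(i)}_j$ coming from the orders of the sub- and super-diagonal blocks, fixes the free parameters $t^{(i)}_1=t^{(i)}_2=0$, and solves recursively, so that $\tilde\sigma(\mathcal M^{(i)}_{N,\boldsymbol\mu})$ literally becomes the $(\vec s^{(i)},\vec t^{(i)})$-principal symbol; the same $(\vec s^{(0)},\vec t^{(0)})$ are then reused for $S_{N,1}$.
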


\begin{proof} Indeed,  similarly to \eqref{eq.P.mu}, 
under condition \eqref{eq.coh.symb}, for 
all $0\leq q \leq N$ we have 
\begin{equation} \label{eq.symb.P}
\tilde \sigma({\mathcal M}_{q,\boldsymbol{\mu}}^{(1)}  (A)) \tilde  
\sigma ({\mathcal M}^{(0)} _{q,\boldsymbol{\mu}} (A))
 = B_{q} \sigma_{q-1} \boldsymbol{\mu}_{q} ^{(1)} \sigma^*_{q-1} B_{q} + 
\sum_{j=0}^{q-1} B_j\delta_{j,\boldsymbol{\mu}_j} B_j .
\end{equation}
In particular, 
\begin{equation} \label{eq.symb.P.N}
\tilde \sigma ({\mathcal M}^{(1)} _{N,\boldsymbol{\mu}} (A)) 
\tilde \sigma ({\mathcal M}^{(0)} _{N,\boldsymbol{\mu}} (A)) 
=\sum_{j=0}^{N} B_j \delta_{j,\boldsymbol{\mu}_j} B_j.  
\end{equation}
Hence, as the symbolic matrices $\delta_{j,\boldsymbol{\mu}_j}$ are invertible 
for all $(x,\zeta) \in T^{\ast} X $ with $\zeta \ne 0$ and all $0\leq j \leq N$ 
(see Lemma \ref{l.exact.mu}), then the matrices $\tilde \sigma ({\mathcal M}^{(0)}_{N,\boldsymbol{\mu}} (A))$, 
$\tilde \sigma ({\mathcal M}^{(1)}_{N,\boldsymbol{\mu}} (A))$ are invertible for 
such $(x,\zeta)$, too.

If all the operators $A_q$, $0\leq q \leq N-1$, have the same order $m$ then 
the orders of the operators $\boldsymbol{\mu}^{(i)}_j$ equal to zero and hence 
$$
\tilde \sigma ({\mathcal M}^{(0)}_{N,\boldsymbol{\mu}} (A)) = \sigma ({\mathcal M}^{(0)}_{N,\boldsymbol{\mu}} 
(A)), \, \, 
\tilde \sigma ({\mathcal M}^{(1)}_{N,\boldsymbol{\mu}} (A)) = \sigma ({\mathcal M}^{(1)}_{N,\boldsymbol{\mu}} 
(A)),
$$ 
i.e. the operators ${\mathcal M}^{(0)}_{N,\boldsymbol{\mu}} (A)$
${\mathcal M}^{(1)}_{N,\boldsymbol{\mu}} (A)$ are elliptic.

Moreover, under \eqref{eq.orders.mu}, 
$$
\sigma (S_{N,a} (A,\mathfrak{D}_{\boldsymbol{\mu}})) = \sum_{j=0}^{N} B_j \delta_{j,\boldsymbol{\mu}} B_j
$$
and then $S_{N,a} (A,\mathfrak{D}_{\boldsymbol{\mu}})$  is elliptic for any $a$ because of Lemma 
\ref{l.exact.mu}.

If \eqref{eq.all.equal} is fulfilled then 
$$
\sigma (S_{N,a} (A,\mathfrak{D}_{\boldsymbol{\mu}})) = a \, \sigma ({\mathcal M}_{N, \boldsymbol{\mu}} (A)).
$$
Thus, in this case the operator 
$S_{N,a} (A,\mathfrak{D}_{\boldsymbol{\mu}})$ is elliptic, too, if $a =1$. 

If $N=1$ (that corresponds to an 
elliptic operator with symbolic complex \eqref{eq.complex.symb.short}) then  
operators ${\mathcal M}^{(0)}_{1,\boldsymbol{\mu}} (A)$, ${\mathcal M}^{(1)}_{1,\boldsymbol{\mu}} 
(A)$ are always elliptic
because the orders of the operators $\boldsymbol{\mu}^{(i)}_0$ equal to zero. 

If $N\geq 2$ and the orders $m_q$ of the operators $A_q$ are different, then, 
we may solve the following system of $4N$ equations with respect to $4(N+1)$ unknown numbers  
$s^{(i)}_1, \dots s^{(i)}_{N+1}, t^{(i)}_1, \dots t^{(i)}_{N+1}$, $i=0,1$:
\begin{equation} \label{eq.DN.P}
\left\{ 
\begin{array}{llll} 
s^{(0)}_j- t^{(0)}_{j+1} = m_{N-j}+\tilde m_{N-j}, &  s^{(0)}_{j+1}- t^{(0)}_{j} = m_{N-j}, 
 &  1\leq j \leq N, \\
s^{(1)}_{j+1}- t^{(1)}_{j} = m_{N-j} +\hat m_{N-j},  &  s^{(1)}_j- t^{(1)}_{j+1} = m_{N-j},  &  1\leq j \leq N,  \\
\end{array}
\right.
\end{equation} 
As $4(N+1)-4N=4$ we set $t^{(0)}_1=t^{(0)}_2=t^{(1)}_1=t^{(1)}_2=0$ and then 
$$
s^{(0)}_1=m_{N-1}+\tilde m_{N-1}, \, s^{(0)}_2=m_{N-1}, 
\, s^{(1)}_1=m_{N-1}, \, s^{(1)}_2=m_{N-1} +\hat m_{N-1},
$$
and we obtain a recurrent formula:
\begin{equation*} 
\left\{ 
\begin{array}{llll} 
t^{(0)}_{j+1} =  s^{(0)}_j- m_{N-j} -\tilde m_{N-j}, & s^{(0)}_{j+1} = m_{N-j}+ t^{(0)}_{j},  & 2\leq j \leq N,  \\
t^{(1)}_{j+1} =  s^{(1)}_j- m_{N-j} , & s^{(1)}_{j+1} = m_{N-j}+ t^{(1)}_{j}-\hat m_{N-j},  & 2\leq j \leq N.  \\
\end{array}
\right.
\end{equation*} 
Hence we obtain a solution $\vec{s}^{(0)}$, $\vec{t}^{(0)}, \vec{s}^{(1)}, 
\vec{t}^{(1)}$ to system  \eqref{eq.DN.P} with  integer components. 
Then there is a non-negative integer $c$ 
such that the vectors $\vec{s}^{(j)}=(s_1^{(j)}+c, \dots s^{(j)}_{N+1}+c)$, $\vec{t}^{(j)}=
(t_1^{(j)}+c, \dots t_{N+1}^{(j)}+c)$, $j=0,1$, 
are solutions to \eqref{eq.DN.P} with non negative components. 
Assigning the values $s^{(i)}_p$, $t^{(i)}_r$ for each component of the block 
${\mathcal M}^{(i)}_{N,\boldsymbol{\mu}} (A,p,r)$ in the block matrix 
${\mathcal M}^{(i)}_{N,\boldsymbol{\mu}} (A)$ we see that
$$
\tilde \sigma ({\mathcal M}^{(0)}_{N,\boldsymbol{\mu}} 
A))= \sigma _{\vec{s^{(0)}}, \vec{t}^{(0)}}({\mathcal M}^{(0)}_{N,\boldsymbol{\mu}} (A)), \,\,
\tilde \sigma ({\mathcal M}^{(1)}_{N,\boldsymbol{\mu}} 
A))= \sigma _{\vec{s}^{(1)}, \vec{t}^{(1)}}({\mathcal M}^{(1)}_{N,\boldsymbol{\mu}} (A)).  
$$
Thus, Lemma  \ref{l.exact.mu} and formula \eqref{eq.symb.P.N} imply that the operators 
${\mathcal M}^{(0)}_{N,\boldsymbol{\mu}}$, 
${\mathcal M}^{(1)}_{N,\boldsymbol{\mu}}$ are Douglis-Nirenberg elliptic. In particular, 
${\mathcal M}_{N,\boldsymbol{\mu}} $ is Douglis-Nirenberg elliptic, too. 

And, finally, under \eqref{eq.all.zero} we have
$$
\sigma _{\vec{s}^{(0)}, \vec{t}^{(0)}} (S_{N,1} (A,\mathfrak{D}_{\boldsymbol{\mu}}))  = 
\tilde \sigma ({\mathcal M}_{N} (A)) . 
$$
Therefore the operator  
$S_{N,1} (A, \mathfrak{D}_{\boldsymbol{\mu}})$ is always Douglis-Nirenberg elliptic.  
\end{proof}

\begin{prop} \label{t.stokes.ell} 
Let complex \eqref{eq.complex} be elliptic, \eqref{eq.coh.symb} be true, $N\geq 2$ and 
$0\leq q \leq N-1$. If \eqref{eq.orders.mu} 
and Assumption {\rm \ref{asm.mu}} are fulfilled for all 
$0\leq j \leq q$, then the Stokes operator $S_{q,a} (A,\mathfrak{D}_{\boldsymbol{\mu}})$ 
is (Petrovskii) elliptic.   If $m_q+\tilde m_q =m_{q-1}+ \hat m_q$,  
and  Assumption {\rm \ref{asm.mu}} is fulfilled for $j=q$
then $S_{q,1} (A,\mathfrak{D}_{\boldsymbol{\mu}})$ 
is a Douglis-Nirenberg elliptic operator.
\end{prop}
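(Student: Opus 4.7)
The plan is to follow the blueprint of Proposition \ref{p.symb.inverse.0}, noting that by construction $S_{q,a}(A,\mathfrak{D}_{\boldsymbol{\mu}})$ is identified with the lower right $(r_q\times r_q)$-minor of the full Stokes matrix and hence has exactly the same symbolic block structure, only truncated at level $q$. First I would apply the symbolic version of Lemma \ref{l.factor} together with the identity \eqref{eq.symb.P} to express the (appropriately normalised) principal symbol of $S_{q,a}$ as a block matrix whose invertibility is controlled by the symbols $\delta_{j,\boldsymbol{\mu}_j}$ of Lemma \ref{l.exact.mu}.

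For the Petrovskii case, the assumption \eqref{eq.orders.mu} for $0\leq j\leq q$ forces every diagonal entry ${\mathfrak D}_{j,\boldsymbol{\mu}}$ of $S_{q,a}$ to have order exactly $2m$, while each off-diagonal entry $aA_j$ or $aA_j^*$ has order $m_j\leq m<2m$. Consequently, regardless of $a\in\{0,1\}$, the classical principal symbol of order $2m$ of $S_{q,a}$ reduces to the block-diagonal matrix whose $j$-th block is $\sigma({\mathfrak D}_{j,\boldsymbol{\mu}})=\delta_{j,\boldsymbol{\mu}_j}$, $0\leq j\leq q$. Assumption \ref{asm.mu} at each such $j$ together with Lemma \ref{l.exact.mu} then guarantees that every block is invertible for $\zeta\neq 0$, proving Petrovskii ellipticity.

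For the Douglis--Nirenberg part I would reproduce the recursive construction from the proof of Proposition \ref{p.symb.inverse.0}, but now for weight vectors $\vec s,\vec t\in\mathbb{Z}^{q+1}$ satisfying the analogous truncated system of $4q$ equations in $4(q+1)$ unknowns. Setting $t_1=t_2=0$ and unwinding the recurrence yields integer weights that, after a uniform non-negative shift, become non-negative. With this choice of $(\vec s,\vec t)$, the corresponding $(\vec s,\vec t)$-principal symbol of $S_{q,1}$ equals $\sum_{j=0}^{q} B_j\,\delta_{j,\boldsymbol{\mu}_j}\,B_j$ (the off-diagonal $\sigma_j$, $\sigma_j^*$ contributions from ${\mathcal M}_q(A)$ being absorbed into this sum via \eqref{eq.symb.P}), whose invertibility follows from Lemma \ref{l.exact.mu} applied at the relevant $j$ under the hypothesis $m_q+\tilde m_q=m_{q-1}+\hat m_q$.

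The main obstacle is the bookkeeping required to verify that, with this weight assignment, each off-diagonal block $B_{j+1}A_j B_j$ and $B_j A_j^* B_{j+1}$ of $S_{q,1}$ contributes at exactly the prescribed order $s_p-t_r$, so that nothing is lost or mis-attributed in passing to the $(\vec s,\vec t)$-principal symbol. This parallels the verification in Proposition \ref{p.symb.inverse.0}, but the truncation at level $q$ forces slightly different boundary values for the recurrence, and one must confirm that the invertibility of $\delta_{j,\boldsymbol{\mu}_j}$ available at each $0\leq j\leq q$ through Assumption \ref{asm.mu} is indeed enough to invert the resulting block-diagonal DN symbol.
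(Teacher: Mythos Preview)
Your treatment of the Petrovskii case is correct and matches the paper. The problem is entirely in the Douglis--Nirenberg part.

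First, note that in the second statement Assumption~\ref{asm.mu} is imposed \emph{only} for $j=q$. You write that ``the invertibility of $\delta_{j,\boldsymbol{\mu}_j}$ available at each $0\leq j\leq q$ through Assumption~\ref{asm.mu}'' will do the job; but for $j<q$ you have no hypothesis on $\boldsymbol{\mu}_j^{(0)},\boldsymbol{\mu}_j^{(1)}$ at all, so Lemma~\ref{l.exact.mu} is simply unavailable there. For those indices the only thing you can use is the plain $\delta_j$, which is invertible by Lemma~\ref{l.exact}.

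Second, and more seriously, your claim that the DN principal symbol of $S_{q,1}$ is the block-diagonal matrix $\sum_{j=0}^q B_j\,\delta_{j,\boldsymbol{\mu}_j}\,B_j$ is false. With any weight system tailored to pick up the off-diagonal entries $A_j,A_j^*$ (which is exactly what $a=1$ forces you to do), the diagonal blocks $\mathfrak{D}_{j,\boldsymbol{\mu}}$ for $j<q$ will \emph{not} in general contribute $\delta_{j,\boldsymbol{\mu}_j}$; the orders do not match. The paper's weight system \eqref{eq.DN.S} (which is $(2q+1)$ equations in $2(q+1)$ unknowns, not $4q$ in $4(q+1)$) yields
\[
\sigma_{\vec s,\vec t}\big(S_{q,1}(A,\mathfrak{D}_{\boldsymbol{\mu}})\big)
=B_q\,\delta_{q,\boldsymbol{\mu}_q}\,B_q+\tilde\sigma(\mathcal{M}_q(A)),
\]
i.e.\ only the top diagonal block survives, and the remaining structure is the \emph{off-diagonal} Maxwell symbol $\tilde\sigma(\mathcal{M}_q(A))$. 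Identity~\eqref{eq.symb.P} does not ``absorb'' these off-diagonal terms into a diagonal sum: it is a statement about the \emph{product} $\tilde\sigma(\mathcal{M}^{(1)})\tilde\sigma(\mathcal{M}^{(0)})$, not about $\tilde\sigma(\mathcal{M}_q(A))$ alone.

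The missing idea is therefore an explicit right factor that turns this mixed symbol into something visibly invertible. The paper constructs a matrix $\mathcal{N}^{(q)}_\sigma$ (see~\eqref{eq.Nq}) and shows, via the commutation relations~\eqref{eq.Laplace.symb.rel} and~\eqref{eq.Laplace.mu.symb.rel.02}, that
\[
\sigma_{\vec s,\vec t}\big(S_{q,1}\big)\,\big(\mathcal{N}^{(q)}_\sigma+\tilde\sigma(\mathcal{M}_{q-1}(A))\big)
\]
is block upper-triangular with diagonal blocks $\delta_{q,\boldsymbol{\mu}}$ and $\delta_j$ ($0\le j\le q-1$), up to off-diagonal terms that vanish on the kernel; invertibility then follows from Lemmas~\ref{l.exact} and~\ref{l.exact.mu} (the latter only at $j=q$). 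Your proposal does not contain this construction, and without it the DN argument does not go through.
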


\begin{proof} 
For $q=0$ we always have $S_{0} (A, \mathfrak{D}_{\boldsymbol{\mu}})= \mathfrak{D}_{0,\boldsymbol{\mu}}$, 
i.e. it is strongly elliptic if the  
differential operator $\boldsymbol{\mu}^{(0)}_0$ is strongly elliptic  on $X$. 
Moreover, under the hypothesis of the first part of this proposition
we have
\begin{equation*} 
\sigma (S_{q,a} (A,{\mathfrak D}_{\boldsymbol{\mu}}) ) = \sum_{j=0}^q B_j \delta_{j, \boldsymbol{\mu}} 
B_j. 
\end{equation*}
As in  this particular case,  $ \Delta_{j, \boldsymbol{\mu}} $ are strongly elliptic 
operators, see Lemma \ref{l.exact.mu}, 
we conclude that the operator $S_{q,a} (A,{\mathfrak D}_{\boldsymbol{\mu}})$ is elliptic, too.

Let us prove the second statement of the proposition. With this purpose, let us solve 
the following system of $(2q+1)$ equations with respect to $2(q+1)$ unknown numbers  
$s_1, \dots s_{q+1}, t_1, \dots t_{q+1}$:
\begin{equation} \label{eq.DN.S}
\left\{ 
\begin{array}{lll} s_1-t_1 =2(m_q+\tilde m_q) =2(m_{q-1}+\hat m_q),\\ 
s_j- t_{j+1} = m_{q-j},\,\, s_{j+1}- t_{j} = m_{q-j},   & 1\leq j \leq q,  \\ 
\end{array}
\right.
\end{equation} 
As $2(q+1)-2q+1=1$, we set $t_1=0$ and then 
$$s_1=2(m_q+\tilde m_q), \,\, t_2=2 (m_q+\tilde m_q) -m_{q-1},
$$  
 and we again obtain a recurrent formula:
$$
t_{j+1} =  s_j- m_{N-j} , \,\, s_{j+1} = m_{N-j}+ t_{j},  \,\, 2\leq j \leq N. 
$$
Thus, system 
\eqref{eq.DN.S} has a solution $\vec{s}$, $\vec{t}$ with  integer components.  
Then there is a non-negative integer $c$ 
such that the numbers $s_1+c, \dots s_{q+1}+c, t_1+c, \dots t_{q+1}+c$
are solutions to \eqref{eq.DN.S} with non negative components.
Again, assigning values 
$s_p$, $t_r$ for each component of the block 
$S_{q,1} (A,{\mathfrak D}_{\boldsymbol{\mu}}),p,r)$ in the block matrix 
$S_{q,1} (A,{\mathfrak D}_{\boldsymbol{\mu}}))$ we see that 
$$
\sigma_{\vec{s},\vec{t}} (S_{q,1} (A,{\mathfrak D}_{\boldsymbol{\mu}})))= 
B_q \delta_{q,\boldsymbol{\mu}_q} B_q + \tilde \sigma({\mathcal M}_{q} (A)). 
$$
Next, using \eqref{eq.Laplace.symb.rel} we conclude that 
\begin{equation*} 
\delta_{j, \boldsymbol{\mu}} \sigma^*_j \sigma(\boldsymbol{\mu}^{(0)}_{j}) \sigma_j=
 \sigma^*_j \sigma(\boldsymbol{\mu}^{(0)}_{j}) \sigma_j  \sigma^*_j 
\sigma(\boldsymbol{\mu}^{(0)}_{j}) \sigma_j =  
 \sigma^*_j \sigma(\boldsymbol{\mu}^{(0)}_{j}) \sigma_j \delta_{j, \boldsymbol{\mu}}
\end{equation*}
and, if the matrix $\delta_{j, \boldsymbol{\mu}} $ is invertible, then 
\begin{equation} \label{eq.Laplace.mu.symb.rel.02}
 \sigma^*_j \sigma(\boldsymbol{\mu}^{(0)}_{j}) \sigma_j\delta^{-1}_{j, \boldsymbol{\mu}}= 
 \delta^{-1}_{j, \boldsymbol{\mu}} \sigma^*_j \sigma(\boldsymbol{\mu}^{(0)}_{j}) \sigma_j .
\end{equation}
Consider the following matrix:  
\begin{equation} \label{eq.Nq}
\mathcal{N}^{(q)}_\sigma = B_{q} 
\delta^{-1}_{q,\boldsymbol{\mu}} \sigma^*_{q} \sigma(\boldsymbol{\mu}_q^{(0)})
\sigma_q  B_q +
\end{equation} 
$$
  B_{q}  
\sigma_{q-1} B_{q-1} +
B_{q-1} \sigma(\boldsymbol{\mu}_q^{(1)}) \sigma^*_{q-1} B_{q}  -  
B_{q-1} \sigma(\boldsymbol{\mu}^{(1)}_{q})   \sigma^*_{q-1} 
\sigma_{q-1} B_{q-1}. 
$$
Then, properties \eqref{eq.Bi} of matrices $B_j$ and formulae \eqref{eq.symb.P}, 
\eqref{eq.Laplace.mu.symb.rel.02} imply  
\begin{equation*} 
 B_q \delta_{q,\boldsymbol{\mu}} B_q \Big(\mathcal{N}^{(q)}_\sigma \!+\! \tilde
 \sigma({\mathcal M}_{q-1} (A))\Big) \!=\!
 B_q \sigma^*_{q} \sigma(\boldsymbol{\mu}_q^{(0)}) \sigma_q  B_q \!+ \!
 B_q \sigma_{q-1} \sigma(\boldsymbol{\mu}_q^{(1)})  \sigma^*_{q-1}  \sigma_{q-1} B_{q-1} ,
\end{equation*}
\begin{equation*} 
\tilde \sigma({\mathcal M}_{q} (A)) \tilde \sigma({\mathcal M}_{q-1} (A)) = 
\tilde \sigma({\mathcal M}_{q-1} (A))\tilde \sigma({\mathcal M}_{q-1} (A)),
\end{equation*}
\begin{equation*} 
\tilde \sigma({\mathcal M}_{q} (A)) \mathcal{N}^{(q)}_\sigma = 
 B_q\sigma_{q-1} \sigma(\boldsymbol{\mu}_q^{(1)}) \sigma^*_{q-1}  B_q + 
B_{q-1} \sigma^*_{q-1} 
 \Big(\delta^{-1}_{q,\boldsymbol{\mu}} \sigma^*_{q} \sigma(\boldsymbol{\mu}_q^{(0)})
\sigma_q \Big) B_q  +
\end{equation*}
\begin{equation*}
B_{q-1} \sigma_{q-1}^* \sigma_{q-1} B_{q-1}  -B_{q}  
\sigma_{q-1} \sigma(\boldsymbol{\mu}_q^{(1)} ) \sigma^*_{q-1}  \sigma_{q-1} B_{q-1} +
\end{equation*}
\begin{equation*}
B_{q-2} \sigma^*_{q-2}
 \sigma(\boldsymbol{\mu}_q^{(1)}) \sigma^*_{q-1} B_{q}  -  
B_{q-2} \sigma^*_{q-2} \sigma(\boldsymbol{\mu}^{(1)}_{q})   \sigma^*_{q-1} 
\sigma_{q-1} B_{q-1} =
\end{equation*}
 \begin{equation*}
B_q \sigma_{q-1} \sigma(\boldsymbol{\mu}_q^{(1)}) 
\sigma^*_{q-1}  B_q + B^q_{q-1} \sigma_{q-1}^* \sigma_{q-1} B_{q-1}
-B_{q}  
\sigma_{q-1} \sigma(\boldsymbol{\mu}_q^{(1)})  \sigma^*_{q-1}  \sigma_{q-1} B_{q-1}.
\end{equation*}
Thus, we arrive at the following identity:
\begin{equation} \label{eq.Sq.Nq}
\sigma_{\vec{s},\vec{t}} (S_{q,1} (A,{\mathfrak D}_{\boldsymbol{\mu}})) \Big( \mathcal{N}^{(q)}_\sigma 
+ \tilde \sigma({\mathcal M}_{q-1} (A))\Big) = B_q \delta_{q,\boldsymbol{\mu}} B_q + 
\end{equation}
$$
\sum_{j=0}^{q-1} B_j \delta_j B_j +
B_{q-2} \sigma^*_{q-2}
 \sigma(\boldsymbol{\mu}_q^{(1)}) \sigma^*_{q-1} B_{q}  -  
B_{q-2} \sigma^*_{q-2} \sigma(\boldsymbol{\mu}^{(1)}_{q})   \sigma^*_{q-1} 
\sigma_{q-1} B_{q-1}.
$$
Now, if $U = ( u_1, \dots u_q)$ satisfies 
$$
\sigma_{\vec{s},\vec{t}} (S_{q,1} (A,{\mathfrak D}_{\boldsymbol{\mu}})) \Big( \mathcal{N}^{(q)}_\sigma 
+ \tilde \sigma({\mathcal M}_{q-1} (A))\Big) U= 0, 
$$
then 
$$
\delta_{q,\boldsymbol{\mu}} u_q = 0, \,  \delta_{q-1} u_{q-1} =0, \, \delta_j u_j =0, \, 0\leq j\leq q-3, 
$$
$$
\sigma^*_{q-2}
 \sigma(\boldsymbol{\mu}_q^{(1)}) \sigma^*_{q-1} u_{q}  -  
 \sigma^*_{q-2} \sigma(\boldsymbol{\mu}^{(1)}_{q})   \sigma^*_{q-1} 
\sigma_{q-1} u_{q-1} + \delta_{q-2} u_{q-2}=0.
$$
Immediately we see that $u_j=0$ for all $0\leq j \leq q$, $j\ne q-2$, because symbolic matrices 
$\delta_{q,\boldsymbol{\mu}}$, 
$\delta_{j}$ are invertible for all $(x,\zeta) \in T^{\ast} X $ with $\zeta \ne 0$. Therefore  
$\delta_{q-2} u_{q-2}=0$ and then, again $u_{q-2}=0$ for the same reason.
Hence the matrices 
$$
\sigma_{\vec{s},\vec{t}} (S_{q,1} (A,{\mathfrak D}_{\boldsymbol{\mu}})) \Big( \mathcal{N}^{(q)}_\sigma 
+ \tilde \sigma({\mathcal M}_{q-1} (A))\Big) \mbox{ and } 
\sigma_{\vec{s},\vec{t}} (S_{q,1} (A,{\mathfrak D}_{\boldsymbol{\mu}}))
$$ 
are 
invertible for all $(x,\zeta) \in T^{\ast} X $ with $\zeta \ne 0$, too, i.e. the operator 
$S_{q,1} (A, \mathfrak{D}_{\boldsymbol{\mu}})$ is Douglis-Nirenberg elliptic. 
\end{proof}

\subsection{Maxwell's and Stokes' type operators for induced elliptic complexes} 
For $N\geq 1$ introduce non-steady Maxwell's type operators 
$$
{\mathcal M}^{(i)}_{q,\boldsymbol{\mu}} (A,\mathbf{b}\partial _t) = 
\sum_{j=0}^q B_j b_j  B_j  \partial _t 
 + {\mathcal M}^{(i)}_{q,\boldsymbol{\mu}} (A), i=1,2,
$$
with a vector ${\mathbf b} = (b_1, \dots b_q)$ consisting of complex entries.

\begin{lem}
Let  the coefficients of the operators $A_j$, $0\leq j \leq N-1$, do not depend on the time 
variable $t$. 
If \eqref{eq.coh} and  \eqref{eq.commute.mu} are fulfilled for all $0\leq j \leq q$ then 
for any real vector  ${\mathbf b}$ we  have
\begin{equation*} 
 {\mathcal M}^{(1)}_{q,\boldsymbol{\mu}} (A,-\mathbf{b}\partial _t) {\mathcal M}^{(0)}_{q,\boldsymbol{\mu}} 
(A,\mathbf{b}\partial _t)= 
\end{equation*}
$$
B_{q} (A_{q-1} \boldsymbol{\mu}^{(1)}_{q} A^*_{q-1} - b_q^2 \partial^2_t) 
B_{q}+
\sum_{j=1}^{q-1}  B_{j}  \Big( \Delta_{j,\boldsymbol{\mu}}  -b_j^2 \partial^2_t  \Big) B_{j} +
B_{0}  (\Delta_{0,\boldsymbol{\mu}} -  b_0^2\partial^2_t )B_{0},
$$
$1\leq q\leq N$.  In particular, Maxwell's type operators ${\mathcal M}^{(j)}
_{N,\boldsymbol{\mu}} (A,\pm \mathbf{b}\partial _t)$ are elliptic for $N\geq 1$, if $\mathbf{b}\in 
{\mathbb R}^q$, $m_j=1$, $|b_j|>0$,  and Assumption \ref{asm.mu} is fulfilled for all $0\leq j \leq N$.
\end{lem}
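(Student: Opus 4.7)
The plan is to expand the operator product and match pieces against Lemma \ref{l.factor}. Setting $T = \sum_{j=0}^q b_j B_j \partial_t$ and $M^{(i)} = \mathcal{M}^{(i)}_{q,\boldsymbol{\mu}}(A)$, the two factors become $M^{(1)} - T$ and $M^{(0)} + T$, whose product expands as
\begin{equation*}
\mathcal{M}^{(1)}_{q,\boldsymbol{\mu}}(A,-\mathbf{b}\partial_t)\,\mathcal{M}^{(0)}_{q,\boldsymbol{\mu}}(A,\mathbf{b}\partial_t) = M^{(1)} M^{(0)} - T^2 + M^{(1)} T - T M^{(0)}.
\end{equation*}
The spatial piece $M^{(1)} M^{(0)}$ is handed off directly to Lemma \ref{l.factor} (formula \eqref{eq.P.mu}) under the compatibility hypothesis \eqref{eq.coh}, producing precisely the spatial contributions $B_q A_{q-1} \boldsymbol{\mu}^{(1)}_q A^*_{q-1} B_q + \sum_{j=0}^{q-1} B_j \Delta_{j,\boldsymbol{\mu}} B_j$ on the claimed right-hand side. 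The purely temporal piece $T^2$ collapses to $\sum_{j=0}^q b_j^2 B_j \partial_t^2$ by the orthogonality $B_i B_j = \delta_{ij} B_j$ from \eqref{eq.Bi} and the hypothesis that the $A_j$ have $t$-independent coefficients; this supplies the $-b_j^2 \partial_t^2$ terms block by block.

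The core step is the vanishing of the mixed cross-term $M^{(1)} T - T M^{(0)}$. A block-by-block evaluation using \eqref{eq.Bi} produces matched pairs of the form $B_{j+1} A_j \boldsymbol{\mu}^{(1)}_{j+1} B_j$ in $M^{(1)} T$ and $B_{j+1} \boldsymbol{\mu}^{(0)}_j A_j B_j$ in $T M^{(0)}$ (together with their symmetric lower-triangular partners $B_j A_j^* B_{j+1}$). The commutation relation \eqref{eq.commute.mu}, $A_j \boldsymbol{\mu}^{(1)}_{j+1} = \boldsymbol{\mu}^{(0)}_j A_j$, converts one into the other and delivers the desired cancellation. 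Assembling the three pieces yields the stated factorization.

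To deduce the ellipticity of $\mathcal{M}^{(i)}_{N,\boldsymbol{\mu}}(A, \pm\mathbf{b}\partial_t)$ for $N \geq 1$, I would pass to the principal symbol in the space-time cotangent variables $(\zeta, \zeta_t)$. Under $m_j = 1$ and Assumption \ref{asm.mu}, every block becomes of order one, and the symbolic version of the factorization just proved reads
\begin{equation*}
\sigma\bigl(\mathcal{M}^{(1)}_{N,\boldsymbol{\mu}}(A,-\mathbf{b}\partial_t)\bigr)\,\sigma\bigl(\mathcal{M}^{(0)}_{N,\boldsymbol{\mu}}(A,\mathbf{b}\partial_t)\bigr) = \sum_{j=0}^N B_j \bigl(\delta_{j,\boldsymbol{\mu}} + b_j^2 \zeta_t^2\bigr) B_j,
\end{equation*}
where $A_N \equiv 0$ identifies the top block $A_{N-1} \boldsymbol{\mu}^{(1)}_N A^*_{N-1}$ with $\delta_{N,\boldsymbol{\mu}}$. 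By Lemma \ref{l.exact.mu} each $\delta_{j,\boldsymbol{\mu}}$ is invertible for $\zeta \neq 0$, while the nonnegative addition $b_j^2 \zeta_t^2$ together with $|b_j|>0$ covers the direction $\zeta = 0$, $\zeta_t \neq 0$. Since \eqref{eq.commute.mu} also identifies $\sigma(\mathcal{M}^{(0)})$ with $\sigma(\mathcal{M}^{(1)})$, each factor must itself be invertible on the punctured cotangent bundle, which is Petrovsky ellipticity.

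The main obstacle I expect is the careful block-indexing needed to exhibit the cancellation of $M^{(1)} T - T M^{(0)}$: one has to track exactly which block index each scalar $b_j$ couples to after applying the orthogonality relations, and then invoke \eqref{eq.commute.mu} at precisely the right spot so that the off-diagonal parts of $M^{(1)} T$ and $T M^{(0)}$ match term by term. Everything after that is a routine assembly of Lemma \ref{l.factor} and Lemma \ref{l.exact.mu}.
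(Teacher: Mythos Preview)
The paper states this lemma without proof, so your strategy---expand the product as $M^{(1)}M^{(0)} - T^2 + (M^{(1)}T - TM^{(0)})$, invoke Lemma~\ref{l.factor} for the first piece, and collapse $T^2$ via \eqref{eq.Bi}---is the natural one, and it correctly reproduces the block-diagonal part of the asserted right-hand side.

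The gap sits exactly where you anticipated it: the cross-term $M^{(1)}T - TM^{(0)}$ does \emph{not} vanish for a general vector $\mathbf{b}$. Tracking the indices as you propose, in $M^{(1)}T$ the block $B_{j+1}A_j\boldsymbol{\mu}^{(1)}_{j+1}B_j$ multiplies $T$ on the right and therefore picks up the scalar $b_j$, whereas in $TM^{(0)}$ the block $B_{j+1}\boldsymbol{\mu}^{(0)}_jA_jB_j$ receives $T$ on the left and picks up $b_{j+1}$. After applying \eqref{eq.commute.mu}, the $(j{+}1,j)$-block of $M^{(1)}T-TM^{(0)}$ equals $(b_j-b_{j+1})\,\boldsymbol{\mu}^{(0)}_jA_j\,\partial_t$, and the $(j,j{+}1)$-block equals $(b_{j+1}-b_j)\,A_j^*\,\partial_t$. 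These off-diagonal first-order terms survive whenever the $b_j$ are not all equal, so the product fails to be block-diagonal and the stated identity breaks down. Equivalently, under \eqref{eq.commute.mu} one has $M^{(0)}=M^{(1)}=:M$, and the cross-term is just the commutator $[M,T]$; since $M$ is purely off-diagonal and $T$ is diagonal with weights $b_j$, this commutator vanishes only when all $b_j$ coincide. The applications in the paper all take a constant vector (e.g.\ $b_j\equiv\mathfrak{c}^{-1}$ in Example~\ref{ex.Maxwell}), so the equal-weight case suffices there, but the lemma as written for ``any real vector $\mathbf{b}$'' appears to be misstated, and your ellipticity argument for $\mathcal{M}^{(i)}_{N,\boldsymbol{\mu}}(A,\pm\mathbf{b}\partial_t)$, which rests on the same product computation, inherits the same restriction.
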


Next, for $\mathbf{b}\in {\mathbb C}^q$ we set 
$$
S_{q,a} (A, \mathbf{b}{\mathcal L}_{\boldsymbol{\mu}}) =  \sum_{j=0}^q B_j b^2_j 
\big(\partial _t + {\mathfrak D}_{j, \boldsymbol{\mu}} \big)B_j  +  a {\mathcal M}_{q} (A),
$$
representing models with the leading 'parabolic' part, and 
$$
S_{q,a} (A, \mathbf{b} {\mathcal H}_{\boldsymbol{\mu}}) =  \sum_{j=0}^q B_j b_j 
\big(\partial^2 _t + {\mathfrak D}_{j, \boldsymbol{\mu}} \big)B_j + a \, {\mathcal M}_{q} (A)
$$
corresponding to models with the leading 'hyperbolic' part (as before, $a=a_q $ equals to $1$ or $0$). 
It is worth to note that, similarly 
to steady case, the main part of the Stokes operator 
$S_{N,0} (A,{\mathfrak D}_{\boldsymbol{\mu}}, \mathbf{b}
\partial^2 _t)$ could be easily factorized if one chooses 
suitable operators ${\mathfrak D}_{j,\boldsymbol{\mu}}$ and numbers $b_j$.
 
\begin{lem} 
\label{l.factor.t}
Let  the coefficients of the operators $A_j$, $0\leq j \leq N-1$, do not depend on the time 
variable $t$. 
If for all $0\leq j \leq q$ identities  \eqref{eq.coh} and \eqref{eq.commute.mu} are fulfilled  
then for any real vector $\mathbf{b} \in {\mathbb R}^q$ we have
\begin{equation*} 
{\mathcal M}^{(1)}_{q,\boldsymbol{\mu}} (A,-\iota \mathbf{b}\partial _t) 
{\mathcal M}^{(0)}_{q,\boldsymbol{\mu}} (A,\iota \mathbf{b}\partial _t)
= 
\end{equation*}
$$
B_{q} b_q^2 \big (\partial_t ^2 + A_{q-1} 
\boldsymbol{\mu}_{q-1} ^{(1)} A^*_{q-1} \big) B_{q} + 
\sum_{j=0}^{q-1} B_j b_j^2 (\partial_t ^2 +\Delta_{j,\boldsymbol{\mu}_j}) B_j.
$$
In particular, if ${\mathfrak D}_{j,\boldsymbol{\mu}}= \Delta_{j,\boldsymbol{\mu}}$ for all 
$0\leq j \leq q $, then 
\begin{equation*} 
S_{N,0} (A, {\mathbf b}{\mathcal H}_{\boldsymbol{\mu}}) = 
{\mathcal M}^{(1)}_{q,\boldsymbol{\mu}} (A,-\iota \mathbf{b}\partial _t) 
{\mathcal M}^{(0)}_{q,\boldsymbol{\mu}} (A,\iota \mathbf{b}\partial _t). 
\end{equation*}
\end{lem}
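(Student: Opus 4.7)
The plan is to mirror the factorisation of Lemma \ref{l.factor}, now carrying along the time derivative as a scalar perturbation. Using $B_j^2=B_j$, I would collect the time piece into
\[
D:=\sum_{j=0}^q b_j B_j\partial_t,
\]
so that $\mathcal{M}^{(0)}_{q,\boldsymbol{\mu}}(A,\iota\mathbf{b}\partial_t)=\iota D+\mathcal{M}^{(0)}_{q,\boldsymbol{\mu}}(A)$ and $\mathcal{M}^{(1)}_{q,\boldsymbol{\mu}}(A,-\iota\mathbf{b}\partial_t)=-\iota D+\mathcal{M}^{(1)}_{q,\boldsymbol{\mu}}(A)$. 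By the remark following \eqref{eq.commute.mu}, the hypothesis \eqref{eq.commute.mu} identifies the two time-independent parts, $\mathcal{M}^{(0)}_{q,\boldsymbol{\mu}}(A)=\mathcal{M}^{(1)}_{q,\boldsymbol{\mu}}(A)=:M$, so the product splits cleanly as
\[
(-\iota D+M)(\iota D+M)=D^2+M^2+\iota(MD-DM).
\]

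First I would dispatch the two outer pieces. Since the coefficients of the $A_j$ are $t$-independent, $\partial_t$ commutes with every spatial factor; together with block orthogonality $B_jB_k=\delta_{jk}B_j$ this gives $D^2=\sum_{j=0}^q b_j^2B_j\partial_t^2$. Next, Lemma \ref{l.factor} applied under \eqref{eq.coh} computes
\[
M^2=B_qA_{q-1}\boldsymbol{\mu}^{(1)}_{q-1}A^*_{q-1}B_q+\sum_{j=0}^{q-1}B_j\Delta_{j,\boldsymbol{\mu}}B_j,
\]
so the purely spatial contribution is imported verbatim.

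The step I expect to be the main obstacle is the cross term: one must verify that $MD-DM$ vanishes. Substituting the block expansion $M=\sum_{j=0}^{q-1}(B_{j+1}\boldsymbol{\mu}^{(0)}_jA_jB_j+B_jA^*_jB_{j+1})$ and applying $B_iB_k=\delta_{ik}B_i$ reduces both $MD$ and $DM$ to telescoping sums indexed by $j$, whose summands differ only by an interchange of neighbouring scalars $b_j$ and $b_{j+1}$. The intertwining identity \eqref{eq.commute.mu} together with its formal adjoint $\boldsymbol{\mu}^{(1)}_{j+1}A^*_j=A^*_j\boldsymbol{\mu}^{(0)}_j$ (obtained by taking adjoints and using self-adjointness of the $\boldsymbol{\mu}^{(i)}_j$) then pairs the ascending blocks against the descending ones to kill the remaining differences. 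Bookkeeping the correct indices on the scalars $b_j$ while invoking \eqref{eq.commute.mu} in the right place is the only delicate calculation; once done, the cross term vanishes identically. Assembling the three pieces gives exactly
\[
B_qb_q^2(\partial_t^2+A_{q-1}\boldsymbol{\mu}^{(1)}_{q-1}A^*_{q-1})B_q+\sum_{j=0}^{q-1}B_jb_j^2(\partial_t^2+\Delta_{j,\boldsymbol{\mu}})B_j,
\]
which is the asserted identity.

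Finally, the ``In particular'' clause requires no extra work. With $\mathfrak{D}_{j,\boldsymbol{\mu}}=\Delta_{j,\boldsymbol{\mu}}$ for all $0\le j\le q$ and $q=N$, the top block simplifies because $A_N=0$ forces $\Delta_{N,\boldsymbol{\mu}}=A_{N-1}\boldsymbol{\mu}^{(1)}_NA^*_{N-1}$, so the right-hand side of the main identity collapses to $\sum_{j=0}^NB_jb_j^2(\partial_t^2+\mathfrak{D}_{j,\boldsymbol{\mu}})B_j$, which is precisely $S_{N,0}(A,\mathbf{b}\mathcal{H}_{\boldsymbol{\mu}})$.
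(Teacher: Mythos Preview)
Your decomposition $(-\iota D+M)(\iota D+M)=D^2+M^2+\iota(MD-DM)$ is correct, and so are your computations of $D^2$ and $M^2$. The gap is in the cross term.

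Carry out the block calculation you sketch: writing $M=\sum_{j=0}^{q-1}(B_{j+1}\boldsymbol{\mu}^{(0)}_jA_jB_j+B_jA^*_jB_{j+1})$ and $D=\sum_k b_kB_k\partial_t$, the orthogonality $B_iB_k=\delta_{ik}B_i$ gives
\[
MD-DM=\sum_{j=0}^{q-1}(b_j-b_{j+1})\bigl(B_{j+1}\boldsymbol{\mu}^{(0)}_jA_jB_j-B_jA^*_jB_{j+1}\bigr)\partial_t.
\]
The two summands here live in \emph{different} block positions --- the first in the $(j{+}1,j)$ slot, the second in the $(j,j{+}1)$ slot --- so no intertwining relation between $A_j$, $A_j^*$ and the $\boldsymbol{\mu}$'s can make them cancel one another. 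The identity \eqref{eq.commute.mu} and its adjoint only move $\boldsymbol{\mu}$ factors across $A_j$ or $A_j^*$; they do not transport a block from one matrix position to another, and they certainly do not touch the scalars $b_j$. Thus $MD-DM$ vanishes if and only if $b_j=b_{j+1}$ for every $j$ with $A_j\neq 0$.

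So your argument establishes the displayed identity only under the extra hypothesis that all the entries of $\mathbf{b}$ coincide (which, to be fair, is the case in every application the paper actually makes of this lemma, e.g.\ Example~\ref{ex.Maxwell} with $b_j=\mathfrak{c}^{-1}$). For a general real vector $\mathbf{b}$ the right-hand side of the lemma must pick up the off-diagonal first-order-in-$t$ terms $\iota(b_j-b_{j+1})(\ldots)\partial_t$, and your proof as written does not account for them. The paper gives no proof of this lemma, so there is nothing to compare against; but you should either add the equal-weights hypothesis or retain the commutator term explicitly.
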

Thus, 
for the first order complex \eqref{eq.complex} 
one may treat 
operator ${\mathcal M}^{(0)}_{q,\boldsymbol{\mu}} (A,\iota \mathbf{b}\partial _t)$ 
as the first order 'wave operator'. 

\subsection{Parametrices for steady Maxwell' and Stokes' operators.} 
At this point we note that the (both Petrovskii and Douglis-Nirenberg) ellipticity implies the regularity property 
for solutions to the operators ${\mathcal M}_{N,\boldsymbol{\mu}}(A)$, $S_{q,1}(A,{\mathfrak D}_{\boldsymbol{\mu}})$ 
and existence of parametrices (and even fundamental solutions) for them, see, for instance, \cite{AV}, 
\cite[\S 2.3]{Tark35},  \cite[\S 2.2.9, \S 4.4]{Tark36},  \cite[Ch. 2]{EgShu}, \cite[Theorem 8.69]{WRL}. This 
actually results in many useful integral formulae for solutions to related systems of differential equations,
see for instance,  \cite[\S 2.4, \S 2.5]{Tark35}.

Let us indicate a way to construct parametrices for Maxwell's and Stokes'  operators using suitable 
kernels for  the generalized Laplacians $\Delta_{j,\boldsymbol{\mu}}$ of elliptic complex 
\eqref{eq.complex}. Namely, if $\Delta_{j,\boldsymbol{\mu}}$ are strongly elliptic operators, each of them admits a 
parametrix, say, $\Phi_{j,\boldsymbol{\mu}}$, i.e. such a pseudo-differential operator on $X^0$ that on 
$C^\infty_0 (X,E_j)$ we have 
\begin{equation} \label{eq.param.Delta}
\Phi_{j,\boldsymbol{\mu}} \Delta_{j,\boldsymbol{\mu}} + \Pi^{L}_{j,\boldsymbol{\mu}} = I, \,\, 
\Delta_{j,\boldsymbol{\mu}}  \Phi_{j,\boldsymbol{\mu}} + \Pi^R_{j,\boldsymbol{\mu}} = I , 
\end{equation}
with pseudo-differential operators $\Pi^R_{j,\boldsymbol{\mu}}$, $\Pi^L_{j,\boldsymbol{\mu}}$ of negative orders, 
where $0\leq j \leq N$ and $I$ is the identity operator; in some situation one needs smoothing  operators 
$\Pi^R_{j,\boldsymbol{\mu}}$, $\Pi^L_{j,\boldsymbol{\mu}}$, i.e. 
the pseudo-differential operators of order minus infinity. If $ \Pi^L_{j,\boldsymbol{\mu}} =0$ in 
\eqref{eq.param.Delta} then 
$\Phi_{j,\boldsymbol{\mu}}$ is a left  fundamental solution for $\Delta_{j,\boldsymbol{\mu}}$ on $X$;
similarly, if $ \Pi^R_{j,\boldsymbol{\mu}} =0$ then $\Phi_j$ is a right  
fundamental solution for $\Delta_{j,\boldsymbol{\mu}}$ on $X$. 
In particular, if $\Delta_{j,\boldsymbol{\mu}}$ satisfy the so-called 
Uniqueness Condition in small on $X$ then 
$\Pi^L_{j,\boldsymbol{\mu}}=\Pi^R_{j,\boldsymbol{\mu}_j} =0$, 
i.e. $\Phi_{j,\boldsymbol{\mu}}$ is the bilateral fundamental solution for $\Delta_{j,\boldsymbol{\mu}_j}$ on $X$, 
see, for instance \cite[\S 4.4]{Tark36}.

\begin{thm} \label{t.right.M} 
Let complex \eqref{eq.complex} be elliptic, Assumption {\rm \ref{asm.mu}} be fulfilled,  
\eqref{eq.coh.symb} and $m_j + \tilde m_j = m_{j-1} + \hat m_j=m$ 
for all $0\leq j \leq N$. Then the operator 
$$ {\mathcal F}^{(1)}_{N,\boldsymbol{\mu}} (A)  =
{\mathcal M}^{(0)}_{N,\boldsymbol{\mu}} (A) \Big(
\sum_{j=0}^{N-1} B_j \Phi_{j,\boldsymbol{\mu}} B_j \Big),
$$ 
is a parametrix for ${\mathcal M}^{(1)}_{N,\boldsymbol{\mu}} (A)$. Moreover, 
if $\Phi_{j,\boldsymbol{\mu}}$,  $0\leq j \leq N$, are right fundamental solutions for 
$\Delta_{j,\boldsymbol{\mu}}$ then $ {\mathcal F}^{(1)}_{N,\boldsymbol{\mu}} (A)$ is a right fundamental solution to 
${\mathcal M}^{(1)}_{N,\boldsymbol{\mu}} (A)$. 
\end{thm}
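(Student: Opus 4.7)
The strategy is to reduce the construction of a right parametrix for ${\mathcal M}^{(1)}_{N,\boldsymbol{\mu}}(A)$ to the scalar parametrices $\Phi_{j,\boldsymbol{\mu}}$ of the component-wise Hodge Laplacians, which exist for free under the hypotheses. First I would invoke Lemma \ref{l.factor} and exploit the fact that $A_N\equiv 0$ at the end of complex \eqref{eq.complex} to rewrite the square of the Maxwell operators in the fully block-diagonal form
\begin{equation*}
{\mathcal M}^{(1)}_{N,\boldsymbol{\mu}}(A)\,{\mathcal M}^{(0)}_{N,\boldsymbol{\mu}}(A)=\sum_{j=0}^{N} B_j \Delta_{j,\boldsymbol{\mu}} B_j.
\end{equation*}
The topmost block is $B_N A_{N-1}\boldsymbol{\mu}_{N}^{(1)} A^*_{N-1}B_N$, and this coincides with $B_N\Delta_{N,\boldsymbol{\mu}}B_N$ precisely because the missing piece $A^*_N\boldsymbol{\mu}^{(0)}_N A_N$ of the Laplacian vanishes at the end of the complex; this observation is what makes the decomposition span all $N+1$ degrees.

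Second, Lemma \ref{l.exact.mu} together with Assumption \ref{asm.mu} and the order-matching conditions $m_j+\tilde m_j=m_{j-1}+\hat m_j=m$ yield strong ellipticity of each $\Delta_{j,\boldsymbol{\mu}}$ with the common order $2m$, so each admits a parametrix $\Phi_{j,\boldsymbol{\mu}}$ satisfying $\Delta_{j,\boldsymbol{\mu}}\Phi_{j,\boldsymbol{\mu}}+\Pi^R_{j,\boldsymbol{\mu}}=I$ with $\Pi^R_{j,\boldsymbol{\mu}}$ of negative pseudo-differential order. Using the orthogonality relations \eqref{eq.Bi} and the fact that each $\Phi_{j,\boldsymbol{\mu}}$ preserves the bundle $E_j$, a direct term-by-term composition (the cross-products $B_i\Delta_{j,\boldsymbol{\mu}}\Phi_{k,\boldsymbol{\mu}}B_\ell$ with $j\neq k$ or $i\neq j$ or $k\neq\ell$ all annihilate) gives
\begin{equation*}
{\mathcal M}^{(1)}_{N,\boldsymbol{\mu}}(A)\,{\mathcal F}^{(1)}_{N,\boldsymbol{\mu}}(A)=\sum_{j=0}^{N} B_j \Delta_{j,\boldsymbol{\mu}}\Phi_{j,\boldsymbol{\mu}} B_j = I-\sum_{j=0}^{N} B_j \Pi^R_{j,\boldsymbol{\mu}} B_j,
\end{equation*}
with the identity arising from $\sum_{j=0}^{N}B_j=I$. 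The remainder on the right is a finite sum of operators of negative order and hence is itself a pseudo-differential operator of negative order (or smoothing, depending on the classes of $\Pi^R_{j,\boldsymbol{\mu}}$), proving that ${\mathcal F}^{(1)}_{N,\boldsymbol{\mu}}(A)$ is a right parametrix.

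The \emph{moreover} clause is then immediate: when each $\Phi_{j,\boldsymbol{\mu}}$ is a right fundamental solution for $\Delta_{j,\boldsymbol{\mu}}$ all the $\Pi^R_{j,\boldsymbol{\mu}}$ vanish identically, so the remainder drops out and the identity ${\mathcal M}^{(1)}_{N,\boldsymbol{\mu}}(A)\,{\mathcal F}^{(1)}_{N,\boldsymbol{\mu}}(A)=I$ holds exactly. The main obstacle I anticipate is the bookkeeping in the block-matrix formalism of Section \ref{s.MS}: one must verify carefully, using \eqref{eq.Bi} and the type-checking of the factors $\boldsymbol{\mu}^{(i)}_j$, that the product in the first step is really diagonal with a full Laplacian in every slot including the terminal one $j=N$, since the latter requires the crucial cancellation coming from $A_N\equiv 0$. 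Beyond that, the argument is mechanical and follows the same pattern as the symbolic identity \eqref{eq.symb.P.N} used in the proof of Proposition \ref{p.symb.inverse.0}.
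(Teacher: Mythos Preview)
Your approach is essentially the paper's: use Lemma \ref{l.factor} (formula \eqref{eq.P.mu}) together with $A_N\equiv 0$ to block-diagonalize ${\mathcal M}^{(1)}_{N,\boldsymbol{\mu}}(A)\,{\mathcal M}^{(0)}_{N,\boldsymbol{\mu}}(A)$ as $\sum_j B_j\Delta_{j,\boldsymbol{\mu}}B_j$, then compose with the block-diagonal parametrices $\Phi_{j,\boldsymbol{\mu}}$. The only step you leave out is the passage from \emph{right} parametrix to full (two-sided) parametrix: the paper closes this by remarking that for an elliptic operator any one-sided parametrix is automatically two-sided (citing \cite[\S 2.2.9]{Tark36}), and the required ellipticity of ${\mathcal M}^{(1)}_{N,\boldsymbol{\mu}}(A)$ was already secured in Proposition \ref{p.symb.inverse.0}. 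Add that one sentence and your argument is complete.
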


\begin{proof} It follows from \eqref{eq.P.mu} that the operator $ {\mathcal F}^{(0)}_{N,\boldsymbol{\mu}} 
(A)$ is a right parametrix for 
${\mathcal M}^{(1)}_{N,\boldsymbol{\mu}} (A)$ if $\Phi_{j,\boldsymbol{\mu}}$ are parametrices 
for $\Delta_{j,\boldsymbol{\mu}}$, respectively (similarly, a right fundamental solution if 
$ \Pi^R_{j,\boldsymbol{\mu}}=0$), $0\leq j \leq N$. Finally, we note that for 'elliptic' operators a right 
parametrix is a left parametrix, too, see, for instance, \cite[\S 2.2.9]{Tark36}.
\end{proof}

Similarly, we obtain the following statement.

\begin{thm} \label{t.left.M} 
Let complex \eqref{eq.complex} be elliptic, Assumption {\rm \ref{asm.mu}} be fulfilled,  
\eqref{eq.coh.symb} be true and $m_j + \tilde m_j = m_{j-1} + \hat m_j=m$ 
for all $0\leq j \leq N$. Then the operator 
$$ 
{\mathcal F}^{(0)}_{N,\boldsymbol{\mu}} (A) = \Big(
\sum_{j=0}^{N-1} B_j \Phi_{j,\boldsymbol{\mu}} B_j \Big) {\mathcal M}^{(1)}_{N,\boldsymbol{\mu}} (A) 
$$ 
is a parametrix for ${\mathcal M}^{(0)}_{N,\boldsymbol{\mu}} (A)$.  Moreover, 
if $\Phi_{j,\boldsymbol{\mu}}$,  $0\leq j \leq N$, are left fundamental solutions for 
$\Delta_{j,\boldsymbol{\mu}}$ then $ {\mathcal F}^{(0)}_{N,\boldsymbol{\mu}} (A)$ is a left fundamental solution to 
${\mathcal M}^{(0)}_{N,\boldsymbol{\mu}} (A)$. 
\end{thm}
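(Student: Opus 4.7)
The plan is to follow the strategy of Theorem \ref{t.right.M}, mirrored on the other side: compose ${\mathcal F}^{(0)}_{N,\boldsymbol{\mu}}(A)$ with ${\mathcal M}^{(0)}_{N,\boldsymbol{\mu}}(A)$ on the right and show that the product is the identity plus a pseudo-differential operator of negative order, then invoke the standard pseudo-differential calculus on elliptic systems to pass from a one-sided parametrix to a two-sided one.

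First, I would write
$$
{\mathcal F}^{(0)}_{N,\boldsymbol{\mu}}(A)\, {\mathcal M}^{(0)}_{N,\boldsymbol{\mu}}(A)
= \Big(\sum_{j=0}^{N-1} B_j \Phi_{j,\boldsymbol{\mu}} B_j\Big)\, \bigl( {\mathcal M}^{(1)}_{N,\boldsymbol{\mu}}(A)\, {\mathcal M}^{(0)}_{N,\boldsymbol{\mu}}(A) \bigr).
$$
Under the balanced hypothesis $m_j + \tilde m_j = m_{j-1} + \hat m_j = m$ and the coherence relation \eqref{eq.coh} (whose symbolic counterpart \eqref{eq.coh.symb} is part of the hypothesis), Lemma \ref{l.factor} identifies the inner composition ${\mathcal M}^{(1)}_N {\mathcal M}^{(0)}_N$ as a block-diagonal operator whose entries are the Hodge Laplacians $\Delta_{j,\boldsymbol{\mu}}$, with the boundary term $B_N A_{N-1} \boldsymbol{\mu}_N^{(1)} A^*_{N-1} B_N$ being absorbed into $B_N \Delta_{N,\boldsymbol{\mu}} B_N$ because $A_N = 0$. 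The orthogonality relations \eqref{eq.Bi} for the projection blocks $B_j$ then collapse the outer product to $\sum_{j} B_j\, \Phi_{j,\boldsymbol{\mu}}\, \Delta_{j,\boldsymbol{\mu}}\, B_j$, and the defining property \eqref{eq.param.Delta} of the left parametrices $\Phi_{j,\boldsymbol{\mu}}$ rewrites this as $I - \sum_j B_j \Pi^L_{j,\boldsymbol{\mu}} B_j$. The remainder is a pseudo-differential operator of strictly negative order (zero, in fact, when each $\Pi^L_{j,\boldsymbol{\mu}}$ vanishes), so ${\mathcal F}^{(0)}_{N,\boldsymbol{\mu}}(A)$ is a left parametrix for ${\mathcal M}^{(0)}_{N,\boldsymbol{\mu}}(A)$, and in the zero case a genuine left fundamental solution.

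To upgrade this to a two-sided parametrix I would invoke the fact that ${\mathcal M}^{(0)}_{N,\boldsymbol{\mu}}(A)$ is Douglis--Nirenberg elliptic by Proposition \ref{p.symb.inverse.0}, so that left and right parametrices coincide modulo smoothing operators; this is the same closing ingredient already used in the proof of Theorem \ref{t.right.M}, cf.\ \cite[\S 2.2.9]{Tark36}. The main obstacle in the argument is purely algebraic bookkeeping: one must verify that the block product reduces cleanly to diagonal form, which depends on the commutation condition \eqref{eq.coh.symb} to kill the off-diagonal cross-terms and on $A_N = 0$ to fold the top-degree boundary term into the Laplacian sum. Once this reduction is in hand, the remainder of the argument is a direct transcription, with sides swapped, of the reasoning for ${\mathcal F}^{(1)}_{N,\boldsymbol{\mu}}(A)$.
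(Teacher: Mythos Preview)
Your proposal is correct and follows essentially the same approach as the paper: the paper simply writes ``Similarly, we obtain the following statement'' after Theorem~\ref{t.right.M}, and your argument is precisely that mirrored version---use the factorization \eqref{eq.P.mu} (at $q=N$, where $A_N=0$ absorbs the boundary term) to reduce ${\mathcal M}^{(1)}_N{\mathcal M}^{(0)}_N$ to the block-diagonal Laplacian sum, left-multiply by the block of parametrices, and finish by the one-sided $\Rightarrow$ two-sided principle for elliptic operators from \cite[\S 2.2.9]{Tark36}. Your write-up is in fact more explicit than what the paper provides.
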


Finally, let us write down a fundamental solution to Stokes' operator $S_{q,1} (A,\mathfrak{D}_{\boldsymbol{\mu}})$ 
in a particular case. With this purpose, let $\mathcal{N}^{(q)} (A)$ be given by
\begin{equation} \label{eq.Nq.op}
B_{q} \Phi_{q,\boldsymbol{\mu}} A^*_{q}\boldsymbol{\mu}_q^{(0)} A_q  B_q \!+\!
  B_{q}  A_{q-1} B_{q-1} \!+\! B_{q-1} \boldsymbol{\mu}_q^{(1)} A^*_{q-1} B_{q}  \! - \! 
B_{q-1} \boldsymbol{\mu}^{(1)}_{q}  A^*_{q-1} A_{q-1} B_{q-1} .
\end{equation}

\begin{thm} \label{t.stokes.fund.sol} 
Let complex \eqref{eq.complex} be elliptic, \eqref{eq.coh} be true, $N\geq 2$ and $1\leq q \leq N-1$. 
Let also $m_j = m$ and $\boldsymbol{\mu}^{(i)}_j = 0$ for all $i=1,2$ and all $0\leq j \leq q-1$, 
 $m_q+\tilde m_q =m$,  Assumption {\rm \ref{asm.mu}} be fulfilled for $j=q$ and 
\begin{equation} \label{eq.mu.mu} 
A^*_{q-2} \boldsymbol{\mu}^{(1)}_{q}  A^*_{q-1} =0.\
\end{equation}  
If $\Phi_{j}$  are 
right fundamental solutions for $\Delta_j$, $0\leq j\leq q-1$, and 
$\Phi_{q,\boldsymbol{\mu}}$ is a 
bilateral fundamental solution to $\Delta_{q,\boldsymbol{\mu}}$, then the operator 
$$
\mathfrak{F}_{q,\boldsymbol{\mu}} (A)= \Big( 
\mathcal{N}^{(q)} (A) + {\mathcal M}_{q-1} (A)\Big) \Big( B_q \Phi_{q,\boldsymbol{\mu}} B_q + 
\sum_{j=0}^{q-1} B_j \Phi_j B_j\Big) 
$$ 
is a right fundamental solution to $S_{q,1} (A,\Delta_{\boldsymbol{\mu}})$. 
\end{thm}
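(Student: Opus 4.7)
The plan is to mimic at the operator level the symbolic computation carried out inside the proof of Proposition \ref{t.stokes.ell}. The target $S_{q,1}(A,\Delta_{\boldsymbol{\mu}})\mathfrak{F}_{q,\boldsymbol{\mu}}(A)=I$ splits naturally, given the product form of $\mathfrak{F}_{q,\boldsymbol{\mu}}$, into two tasks: (a) establish the operator identity
$$S_{q,1}(A,\Delta_{\boldsymbol{\mu}})\bigl(\mathcal{N}^{(q)}(A)+\mathcal{M}_{q-1}(A)\bigr) = B_q\Delta_{q,\boldsymbol{\mu}} B_q + \sum_{j=0}^{q-1} B_j\Delta_j B_j,$$
and (b) compose it on the right with the block-diagonal parametrix $\Phi := B_q\Phi_{q,\boldsymbol{\mu}}B_q+\sum_{j=0}^{q-1} B_j\Phi_j B_j$ and observe the result collapses to the identity on $\mathfrak{E}_q$.

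For step (a), I would run the block-matrix algebra of \eqref{eq.Sq.Nq} verbatim at the operator level, replacing each symbol $\sigma_j$, $\sigma^*_j$, $\sigma(\boldsymbol{\mu}^{(i)}_q)$, $\delta^{-1}_{q,\boldsymbol{\mu}}$ by the corresponding operator $A_j$, $A^*_j$, $\boldsymbol{\mu}^{(i)}_q$, $\Phi_{q,\boldsymbol{\mu}}$. The algebraic ingredients are the block rules \eqref{eq.Bi}, the complex relations $A_{j+1}A_j=0$ and $A^*_jA^*_{j+1}=0$, the definition of $\Delta_{j,\boldsymbol{\mu}}$, and the operator analogue of \eqref{eq.Laplace.mu.symb.rel.02}, namely the commutation $A^*_q\boldsymbol{\mu}^{(0)}_qA_q\,\Phi_{q,\boldsymbol{\mu}} = \Phi_{q,\boldsymbol{\mu}}\,A^*_q\boldsymbol{\mu}^{(0)}_qA_q$. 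This last commutation is the crucial point: it follows from the bilateral identity $\Phi_{q,\boldsymbol{\mu}} = \Phi_{q,\boldsymbol{\mu}}\Delta_{q,\boldsymbol{\mu}}\Phi_{q,\boldsymbol{\mu}}$ once one notes that the two summands of $\Delta_{q,\boldsymbol{\mu}}$ commute trivially---both products $A^*_q\boldsymbol{\mu}^{(0)}_qA_q\cdot A_{q-1}\boldsymbol{\mu}^{(1)}_qA^*_{q-1}$ and its reverse vanish by $A_qA_{q-1}=0$ and $A^*_{q-1}A^*_q=0$. The residual error $B_{q-2}A^*_{q-2}\boldsymbol{\mu}^{(1)}_qA^*_{q-1}B_q - B_{q-2}A^*_{q-2}\boldsymbol{\mu}^{(1)}_qA^*_{q-1}A_{q-1}B_{q-1}$ inherited from \eqref{eq.Sq.Nq} is then killed outright by the hypothesis \eqref{eq.mu.mu}.

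Step (b) is then immediate: composing the right-hand side of (a) with $\Phi$ and using \eqref{eq.Bi} yields $B_q\Delta_{q,\boldsymbol{\mu}}\Phi_{q,\boldsymbol{\mu}}B_q + \sum_{j=0}^{q-1}B_j\Delta_j\Phi_jB_j$, and the right-fundamental-solution properties $\Delta_j\Phi_j=I$ for $j\leq q-1$ together with $\Delta_{q,\boldsymbol{\mu}}\Phi_{q,\boldsymbol{\mu}}=I$ reduce this to $\sum_{j=0}^q B_j$, which acts as the identity on sections of $\mathfrak{E}_q$. The main obstacle is step (a): whereas the symbolic proof of Proposition \ref{t.stokes.ell} takes place in a commutative matrix ring with a genuine inverse $\delta^{-1}_{q,\boldsymbol{\mu}}$, the operator-level version must produce \emph{exact} equalities, not equalities modulo lower-order pseudo-differential remainders, and it is precisely the bilateral property of $\Phi_{q,\boldsymbol{\mu}}$ together with the compatibility relation \eqref{eq.mu.mu} that allow the symbolic calculation to lift cleanly.
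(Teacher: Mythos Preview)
Your proposal is correct and follows essentially the same route as the paper's proof: both lift the symbolic computation behind \eqref{eq.Sq.Nq} to the operator level, establish the commutation \eqref{eq.Laplace.mu.rel.02} from the bilateral property of $\Phi_{q,\boldsymbol{\mu}}$ together with the vanishing of the cross-products $A^*_q\boldsymbol{\mu}^{(0)}_qA_q\cdot A_{q-1}\boldsymbol{\mu}^{(1)}_qA^*_{q-1}$, invoke \eqref{eq.mu.mu} to kill the residual $B_{q-2}$-terms, and then right-multiply by the block-diagonal fundamental solutions. The only cosmetic difference is that the paper phrases the commutation step as ``$\Delta_{q,\boldsymbol{\mu}}$ commutes with $A^*_q\boldsymbol{\mu}^{(0)}_qA_q$, hence so does its bilateral inverse,'' whereas you route it through $\Phi=\Phi\Delta\Phi$; the underlying algebra is identical.
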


\begin{proof} Indeed, as $A_{i+1} \circ A_i \equiv 0$ then  we have 
\begin{equation} \label{eq.complex.rel}
A^*_{i} \circ A^*_{i+1} \equiv 0, \, \Delta_{i+1} A_i = A_i \Delta_{i} = A_i A_i^* A_i, \, A^*_i\Delta_{i+1}  =  \Delta_{i} A_i^* =  A_i^* A_i A^*_i.
\end{equation}
Next, using \eqref{eq.complex.rel} we conclude that 
\begin{equation*} 
\Delta_{j, \boldsymbol{\mu}} A^*_j \boldsymbol{\mu}^{(0)}_{j} A_j=
A^*_j \boldsymbol{\mu}^{(0)}_{j} A_j  A^*_j 
\boldsymbol{\mu}^{(0)}_{j} A_j =  
A^*_j \boldsymbol{\mu}^{(0)}_{j} A_j \Delta_{j, \boldsymbol{\mu}}
\end{equation*}
and, if $\Phi_{j, \boldsymbol{\mu}} $ is a bilateral fundamental solution
for $\Delta_{j, \boldsymbol{\mu}}$, then 
\begin{equation} \label{eq.Laplace.mu.rel.02}
 A^*_j \boldsymbol{\mu}^{(0)}_{j} A_j \Phi_{j, \boldsymbol{\mu}}= 
 \Phi_{j, \boldsymbol{\mu}} A^*_j \boldsymbol{\mu}^{(0)}_{j} A_j .
\end{equation}
Hence calculating as in the proof of Proposition \ref{t.stokes.ell} (see formulae 
\eqref{eq.Nq}, \eqref{eq.Sq.Nq}) and applying \eqref{eq.mu.mu}, we obtain
\begin{equation} \label{eq.Sq.Nq.op}
S_{q,1} (A,\Delta_{\boldsymbol{\mu}}) \Big( \mathcal{N}^{(q)} (A) + {\mathcal M}_{q-1} (A)\Big) 
= B_q \Delta_{q,\boldsymbol{\mu}} B_q + \sum_{j=0}^{q-1} B_j \Delta_j B_j ,
\end{equation}
i.e. $\mathfrak{F}_{q,\boldsymbol{\mu}} (A)$ is a right fundamental solution for 
$S_{q,1} (A,\Delta_{\boldsymbol{\mu}})$. 
\end{proof}

Note that for the de Rham complex the pseudo-differential operator  
$$
{\mathfrak F}_{1,\mu} (d) =
\left( \begin{array}{cccccccc} 
\mu^{-1}  
 \varphi \, {\rm rot} \, {\rm rot} \, \varphi   &  -\varphi \, \nabla \\ 
{\rm div} \, \varphi   & - \mu  \, I \\
 \end{array}
\right) 
$$
is closely related  to the so-called steady Ozeen tensor for Stokes' system 
$S_{q,1} (d,\Delta_{\boldsymbol{\mu}})$ in $3D$-Hydrodynamics, see \cite{Oz}, 
where  $\varphi$ is the standard fundamental solution of the Laplace operator in ${\mathbb R}^3$, 
$\boldsymbol{\mu}^{(0)}_1 = \mu I_n $, $\boldsymbol{\mu}^{(1)}_1 = \mu $, $\mu>0$. 

\subsection{Parametrices for evolutionary  operators}
Under very mild assumptions on the coefficients of the operator 
${\mathcal L}_{j,\boldsymbol{\mu}}$, $0\leq j \leq N$,  it admits the (unique) fundamental solution 
$\Psi_{j,\boldsymbol{\mu}}$ on $X \times [0,T]$ solving the Cauchy problem for ${\mathcal L}_{j,\boldsymbol{\mu}}$, 
with initial data on the plane $t=0$, see, for instance, \cite{eid}, \cite[Ch 1, \S 7 and Ch. 9]{Frid}. 

Again, let us write down a fundamental solution to Stokes' operator 
$S_{q,1} (A,{\mathbf b}{\mathcal L}_{\boldsymbol{\mu}})$ 
in a particular case. With this purpose  let $\mathcal{N}^{(q)} (A,t)$ be given by
\begin{equation*} 
B_{q} 
\Psi_{q,\boldsymbol{\mu}} A^*_{q}\boldsymbol{\mu}_q^{(0)} \! A_q B_q \!+\!
 B_{q}  A_{q-1} B_{q-1} \!+\!B_{q-1} \boldsymbol{\mu}_q^{(1)} A^*_{q-1} B_{q}  \! - \! 
B_{q-1} (\boldsymbol{\mu}^{(1)}_{q} \! A^*_{q-1} A_{q-1} \!+\!\partial_t)B_{q-1} .
\end{equation*}

\begin{thm} \label{t.stokes.fund.sol.t} 
Let complex \eqref{eq.complex} be elliptic, \eqref{eq.coh} be true, $N\geq 2$ and $1\leq q \leq N-1$, and the 
coefficients of the operators $A_j$ do not depend on $t$. Let also $\mathbf{b}=(0,0,\dots, 0, 1)$, $m_j = m$ 
and $\boldsymbol{\mu}^{(i)}_j = 0$ for all $i=1,2$ for all $0\leq j \leq q-1$,  
$m_q+\tilde m_q =m$,  Assumption {\rm \ref{asm.mu}} be fulfilled for $j=q$
and \eqref{eq.mu.mu} hold true. If $\Phi_{j}$  are 
right fundamental solutions for $\Delta_j$, $0\leq j\leq q-1$, and 
$\Phi_{q,\boldsymbol{\mu}}$, $\Psi_{q,\boldsymbol{\mu}}$ are 
bilateral fundamental solution to $\Delta_{q,\boldsymbol{\mu}}$, $(\partial_t + \Delta_{q,\boldsymbol{\mu}})$, 
respectively, then the operator 
$$
\mathfrak{F}_{q,\boldsymbol{\mu}} (A,t )= \Big( 
\mathcal{N}^{(q)} (A,t) + {\mathcal M}_{q-1} (A)\Big)\Big( B_q \Phi_{q,\boldsymbol{\mu}} B_q + 
\sum_{j=0}^{q-1} B_j \Phi_j B_j\Big) 
$$ 
is a right fundamental solution to  $S_{q,1} (A,\mathbf{b} (\partial _t +
\Delta_{\boldsymbol{\mu}}))$. 
\end{thm}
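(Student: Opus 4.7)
The proof follows the strategy of Theorem \ref{t.stokes.fund.sol}: first establish the evolutionary analog of \eqref{eq.Sq.Nq.op},
\begin{equation*}
S_{q,1}(A, \mathbf{b}(\partial_t + \Delta_{\boldsymbol{\mu}}))\,\bigl(\mathcal{N}^{(q)}(A,t) + \mathcal{M}_{q-1}(A)\bigr) = B_q \Delta_{q,\boldsymbol{\mu}} B_q + \sum_{j=0}^{q-1} B_j \Delta_j B_j,
\end{equation*}
which is \emph{the very same right-hand side} as the steady identity. The structure of $\mathcal{N}^{(q)}(A,t)$ is designed so that no $\partial_t$ survives on the right: the extra block $-B_{q-1}\partial_t B_{q-1}$ and the replacement of $\Phi_{q,\boldsymbol{\mu}}$ by $\Psi_{q,\boldsymbol{\mu}}$ in the $(q,q)$ slot are exactly the corrections needed to absorb the new $B_q\partial_t B_q$ contribution in the evolutionary Stokes operator. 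Granted this identity, right-multiplication by $B_q \Phi_{q,\boldsymbol{\mu}} B_q + \sum_{j=0}^{q-1} B_j \Phi_j B_j$, combined with $\Delta_{q,\boldsymbol{\mu}} \Phi_{q,\boldsymbol{\mu}} = I$ (bilateral) and $\Delta_j \Phi_j = I$ for $j<q$ (right fundamental), collapses to the identity on $\mathfrak{E}_q$.

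To prove the intermediate identity, I would use the hypotheses ($\mathbf{b} = (0,\dots,0,1)$, $m_j = m$ and $\boldsymbol{\mu}^{(i)}_j = 0$ for $j<q$) to split
\begin{equation*}
S_{q,1}(A, \mathbf{b}(\partial_t + \Delta_{\boldsymbol{\mu}})) = S_{q,1}(A, \Delta_{\boldsymbol{\mu}}) + B_q \partial_t B_q, \qquad \mathcal{N}^{(q)}(A,t) = \widetilde{\mathcal{N}}^{(q)}(A) - B_{q-1}\partial_t B_{q-1},
\end{equation*}
where $\widetilde{\mathcal{N}}^{(q)}(A)$ is obtained from \eqref{eq.Nq.op} by replacing $\Phi_{q,\boldsymbol{\mu}}$ with $\Psi_{q,\boldsymbol{\mu}}$. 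Two algebraic inputs are essential: since the $A_j$ have $t$-independent coefficients, $\partial_t$ commutes with every spatial operator appearing; and the operator $A_q^* \boldsymbol{\mu}_q^{(0)} A_q$ commutes with $\Psi_{q,\boldsymbol{\mu}}$ --- indeed, by \eqref{eq.complex.rel} it commutes with $\Delta_{q,\boldsymbol{\mu}}$ and with $\partial_t$, hence with $\partial_t + \Delta_{q,\boldsymbol{\mu}}$, hence with its bilateral fundamental solution (the evolutionary analog of \eqref{eq.Laplace.mu.rel.02}).

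The block-by-block verification then mirrors the proofs of Proposition \ref{t.stokes.ell} and Theorem \ref{t.stokes.fund.sol}; the only new bookkeeping is tracking the $\partial_t$-contributions. At $(q,q)$, writing $\Delta_{q,\boldsymbol{\mu}}\Psi_{q,\boldsymbol{\mu}} = I - \partial_t\Psi_{q,\boldsymbol{\mu}}$, the $\partial_t\Psi_{q,\boldsymbol{\mu}}A_q^*\boldsymbol{\mu}_q^{(0)} A_q$ piece thereby produced cancels the same expression coming from $B_q\partial_t B_q \cdot \widetilde{\mathcal{N}}^{(q)}(A)$, leaving $\Delta_{q,\boldsymbol{\mu}}$. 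At $(q,q-1)$, the $+A_{q-1}\partial_t$ from $B_q\partial_t B_q$ on $B_q A_{q-1} B_{q-1}$ and the $-A_{q-1}\partial_t$ from $B_q A_{q-1} B_{q-1}$ on $-B_{q-1}\partial_t B_{q-1}$ cancel, and the spatial remainder vanishes by $A_q A_{q-1} = 0$. At $(q-1,q)$, using the commutation above, the entry collapses to $A_{q-1}^* A_q^*\boldsymbol{\mu}_q^{(0)} A_q\,\Psi_{q,\boldsymbol{\mu}} = 0$ by the complex relation $A_{q-1}^* A_q^* = 0$. All remaining blocks do not see the new $\partial_t$-contributions, and the residual $(q-2,q)$ and $(q-2,q-1)$ ``junk'' terms are killed by hypothesis \eqref{eq.mu.mu}, exactly as in the steady case.

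The main obstacle I expect is the careful cancellation of $\partial_t$-contributions across the $(q,q)$ and $(q,q-1)$ blocks --- they must cancel \emph{exactly}, not merely modulo smoothing --- but this is a short matrix-algebra check, made transparent by the additive decomposition above. Everything else is a direct transcription of the steady case, with the final step following from the diagonal structure of $\sum_{j} B_j \Delta_j B_j$ and the fundamental-solution properties of the $\Phi$'s.
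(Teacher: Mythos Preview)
Your approach is exactly the paper's: establish the evolutionary analogue of \eqref{eq.Sq.Nq.op} via the commutation relation $A_q^*\boldsymbol{\mu}_q^{(0)}A_q\,\Psi_{q,\boldsymbol{\mu}}=\Psi_{q,\boldsymbol{\mu}}\,A_q^*\boldsymbol{\mu}_q^{(0)}A_q$ (this is precisely the lemma the paper isolates as \eqref{eq.Laplace.mu.rel.02.t}), then right-multiply by the block-diagonal of the $\Phi_j$'s. Your additive splitting of $S_{q,1}$ and of $\mathcal{N}^{(q)}(A,t)$ is just a convenient bookkeeping device for the same computation the paper summarises by ``calculating as in the proof of Theorem~\ref{t.stokes.fund.sol}''.

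There is, however, a concrete gap in your block-by-block accounting. You assert that ``all remaining blocks do not see the new $\partial_t$-contributions'', but the $(q-2,q-1)$ block does. Row $q-2$ of $S_{q,1}$ carries $A_{q-2}^*$ in column $q-1$, and column $q-1$ of $\mathcal{N}^{(q)}(A,t)$ now has $-\boldsymbol{\mu}_q^{(1)}A_{q-1}^*A_{q-1}-\partial_t$ at position $(q-1,q-1)$. Their product contributes
\[
-A_{q-2}^*\boldsymbol{\mu}_q^{(1)}A_{q-1}^*A_{q-1}\;-\;A_{q-2}^*\,\partial_t
\]
to the $(q-2,q-1)$ entry. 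The first summand is killed by \eqref{eq.mu.mu}, exactly as in the steady case, but the second summand $-A_{q-2}^*\partial_t$ is \emph{not}: hypothesis \eqref{eq.mu.mu} says nothing about $\partial_t$. After right-multiplication by $B_{q-1}\Phi_{q-1}B_{q-1}$ this leaves a residual $-B_{q-2}A_{q-2}^*\partial_t\,\Phi_{q-1}B_{q-1}$, so the product is not the identity for $q\ge 2$. For $q=1$ the block $(q-2,q-1)$ does not exist and your argument (and the paper's) goes through cleanly; for $q\ge 2$ neither your sketch nor the paper's terse proof explains how this term disappears, and as written it does not.
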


\begin{proof}
Indeed, similarly to \eqref{eq.Laplace.mu.rel.02}, if the coefficients of the operators $A_j$ do not depend on $t$ 
and $\Psi_{j, \boldsymbol{\mu}} $ is a bilateral fundamental solution to $(\partial_t +\Delta_{j, 
\boldsymbol{\mu}})$, then 
\begin{equation} \label{eq.Laplace.mu.rel.02.t}
 A^*_j \boldsymbol{\mu}^{(0)}_{j} A_j \Psi_{j, \boldsymbol{\mu}}= 
 \Psi_{j, \boldsymbol{\mu}} A^*_j \boldsymbol{\mu}^{(0)}_{j} A_j .
\end{equation}
Hence calculating as in the proof of Theorem \ref{t.stokes.fund.sol}  (see formulae 
\eqref{eq.Nq.op}, \eqref{eq.Sq.Nq.op}) and applying \eqref{eq.mu.mu} and \eqref{eq.Laplace.mu.rel.02.t}, we obtain
\begin{equation*} 
S_{q,1} (A,\partial_t + \Delta_{\boldsymbol{\mu}}) \Big( \mathcal{N}^{(q)} (A,t)
+ {\mathcal M}_{q-1} (A)\Big) = B_q \Delta_{q,\boldsymbol{\mu}} B_q + \sum_{j=0}^{q-1} B_j 
\Delta_j B_j ,
\end{equation*}
i.e. $\mathfrak{F}_{q,\boldsymbol{\mu}} (A,t)$ is a right fundamental solution for 
$S_{q,1} (A,\partial_t + \Delta_{\boldsymbol{\mu}})$. 
\end{proof}

\subsection{Natural perturbations.} 
In order to define suitable perturbations of Stokes' type systems ${\tilde S}_q (A)$
(and ${\hat S}_q (A)$) related to the Navier-Stokes equations, 
\cite{PaSh} proposed to introduce two  bilinear mappings ${\mathcal Q}_{q,j}$, satisfying 
\begin{equation} \label{eq.nonlinear.M}
{\mathcal Q}_{q,1,x}: E_{q+1,x} \otimes E_{q,x}  \to E_{q,x}, \,\,
{\mathcal Q}_{q,2,x}: E_{q,x} \otimes E_{q,x}\to E_{q-1,x}, 
\end{equation} 
at each point $x\in X$. Then one may set for a sufficiently differentiable section $v$ of the vector bundle $E_q$:
\begin{equation} \label{eq.nonlinear}
{\mathcal N}_q (v) (x) = {\mathcal Q}_{q,1,x} ((A_{q} v) (x),v(x)) + A_{q-1} 
{\mathcal Q}_{q,2,x} (v(x),v(x))  .
\end{equation}  
For the de Rham complex at the degree $q=1$, corresponding to 
the classical Navier-Stokes equations, this  leads to the so-called Lamb form 
(see \cite[\S 15]{LaLi} for $n=3$) of the related non-linear 
part: 
$$
{\mathcal N}_1 (v)  = \star (\star d_1 v \wedge v) + 
d_0 \star(v \wedge \star v)/2= ( (\mathrm{curl} \, \vec v) \times \vec v + (1/2)\nabla |\vec v|^2\mbox{ for } n=3),
$$
where $\star: \Lambda^q \to \Lambda^{n-q} $ is the Hodge $\star$-operator, $\wedge$ is the 
exterior product of the differential forms (and $\vec{c} \times \vec{d}$ stands for the vector product of 
$3$-vectors $\vec{c}$ and $\vec{d}$). 

Apparently Stokes' type systems $S_{q,a} (A, {\mathfrak D}_{\boldsymbol{\mu}})$ 
have a more complicated  structure. In the simplest case where all the orders $m_j$ are the same 
for all $0\leq j \leq q$ we suggest to introduce multi-linear differential operators of order $(m-1)$:
$$
Q_{q,1}: (\oplus_{i=1}^{q+1} E_i) \otimes (\oplus_{i=0}^{q} E_i) \to \oplus_{i=0}^{q} E_i, \, 
Q_{q,2}:  (\oplus_{i=0}^{q} E_i) \otimes (\oplus_{i=0}^{q} E_i)  \to \oplus_{i=0}^{q-1} E_i, 
$$
and to set the following natural non-linear perturbations
$$
{\mathcal N}_q (U_q) = Q_{q,1} (A^\cdot U_q,U_q)
+ A^\cdot Q_{q,2} ( U_q,  U_q)  ,
$$
of the differential operator $\sum_{j=0}^q B_j {\mathfrak D}_{j,\boldsymbol{\mu}} B_j$ acting on sections 
$U_q = (u_0, \dots, u_q) $ 
of the bundle ${\mathfrak E}_q = \oplus_{j=0}^q E_j$, $0\leq q \leq N$. 

\section{Maxwell's and Stokes' type operators for the de Rham complex}
\label{s.MS.de Rham}

\subsection{The de Rham complex} 
Let $T^*_{\mathbb C} X$ be the complexified tangential bundle of $X$ and 
let $\Lambda^q= \Lambda^q T^*_{\mathbb C} X$  be the bundle of complex valued 
exterior differential forms  of degree $q$, $0\leq q \leq n$, over $X$. In each coordinate 
neighbourhood $\mathcal O$ on $X$ any differential form $u$ admits local representation 
$$
u_{|{\mathcal O}}(x) = \sum_{\# I=q} u _I (x) dx_I  
$$ 
where, for $\# I=q$, we consider  $I = (i_1, \dots i_q)$ as 
a multi-index with $1\leq i_1 < \dots <i_q \leq n$, 
$ dx_I = dx_{i_1}\wedge \dots \wedge  dx_{n}$, 
$\{ dx_j\}_{j=1}^n$ is a basis in $T^*_{\mathbb C} X$ and $\wedge$ is 
the exterior product of differentials satisfying 
\begin{equation} \label{eq.wedge}
dx_i \wedge dx_j = - dx_j \wedge dx_i.
\end{equation}
Then the exterior differential operator $d_q$ is defined by local representations 
$$
d_q u_{|{\mathcal O}}(x) = \sum_{i=1}^n\sum_{\# I=q} 
\partial_i u _I (x) dx_i \wedge dx_I  .
$$ 
Using \eqref{eq.wedge} we easily conclude that  
\begin{equation} \label{eq.deRham.base.rel} 
d_{q+1} \circ d_q =0 
\end{equation}
 and then 
we obtain the de Rham complex 
\begin{equation} \label{eq.deRham}
0\rightarrow 
C^\infty(X,\Lambda^0)\stackrel{d_0}{\rightarrow}C^\infty(X, \Lambda^1)\stackrel{d_1}{
\rightarrow}C^\infty(X, \Lambda^2) \rightarrow \dots 
\stackrel{d_{n-1}}{\rightarrow} C^\infty(X, \Lambda^n) \rightarrow 0, 
\end{equation}
of exterior differentials on the differential forms, see, for instance,
\cite[Ch 3, \S 2.5]{Car}, \cite{deRham}, \cite[\S 1.2.6]{Tark35}. 
Of course, for $X={\mathbb R}^n$ the bundle $\Lambda^q$ may be identified 
with ${\mathbb R}^n \times {\mathbb C}^{k_q}$ with 
$k_q =\Big(\begin{array}{ll} n \\ q
\end{array}\Big)$ 
and its sections can be treated as vector-columns of functions with $k_q$ components.
Then the de Rham complex is the Hilbert compatibility 
complex degenerated by the gradient 
operator $ \nabla = d_0 $. In this case, 
in addition to  \eqref{eq.deRham.base.rel},  we also have 
$$
\Delta_q = d_q ^{*} d_q + d_{q-1} d_{q-1} ^{*} = - \Delta I_{k_q}, \,\, 0\leq q\leq n,
$$
where $\Delta$ is the usual Laplace operator in ${\mathbb R}^n$.  

For three dimensional space, more 
familiar within classical Physics, we may interprete the de Rham complex as follows:
\begin{equation} \label{eq.deRham.n=3} 
d_0 = \nabla, \, 
d_1 = \mathrm{curl} = \left(
\begin{array}{ccc}
0 & - \partial_3 
&   \partial_2 
\\
  \partial_3 & 0 &   -\partial_1  
\\
-\partial_2 &  \partial_1 
& 0  \\
\end{array}
\right), \,  
d_2 = \mathrm{div} = \Big(\partial_1 
, \partial_2
, \partial_3 
\Big) . 
\end{equation}
However we still may define the compatibility de Rham complex in ${\mathbb R}^3$ with the use
of the classical algebraic constructions: 
$$
d_1 u =\nabla \times  \vec{u}, \, d_2 v = \nabla \cdot \vec{v}  
$$ 
for vector fields $\vec{u}$, $\vec{v}$ (here $\vec{c} \cdot \vec{d}$ means 
the inner product of vectors $\vec{c}$ and $\vec{d}$). In the higher dimensions the standard 
algebraic constructions do not work in general.  

As $N=n$, Stokes's operator $S_q (d,\Delta)$ over the field ${\mathbb R}$ has dimension 
$ r_q = \sum_{j=0}^q \Big(\begin{array}{ll} n \\ j \end{array}\Big)$; 
in particular, $r_0=1$, $r_1 = 1 +n$, $ r_{n-1} = 2^n-1$, $ r_n = 2^n$. 
Of course, the dimension $r_{q, {\mathbb C}}$ over the field ${\mathbb C}$ equals to $2 r_{q}$;
in particular, $ r_{n, {\mathbb C}} = 2^{n+1}$.

\subsection{Maxwell's and Stokes' type systems in ${\mathbb R}^3$}

For the de Rham complex over ${\mathbb  R}^3$ we have: $r_0=1$, $r_1 = 4$, $ r_{2} = 7$, $ r_3 = 8$; 
in particular, $ r_{n, {\mathbb C}} = 16$. Thus, our method echoes with 
 the sedeonic approach by V.L. Mironov and S.V. Mironov, see \cite{Mir15}, 
\cite{Mir18}, \cite{Mir20A}, \cite{Mir20B}, 
for compact symmetric formulations of the Laws of classical Physics in ${\mathbb R}^3$.
 
The typical form of Stokes' system $S_0 (d,{\mathcal L}_{\boldsymbol{\mu}})$ is the following: 
$$
S_0 (d,\mathcal{L}_{\boldsymbol{\mu}}) = \partial_t +\mathfrak{D}_{0,\boldsymbol{\mu}} =  \partial_t - \mathrm{div} \boldsymbol{\mu}_0^{(0)} \nabla + \vec{a}_0 \cdot \nabla + M_0, 
$$
with a self-adjoint functional matrix $\boldsymbol{\mu}_0^{(0)}$, a scalar function $M_0$ and 
a functional vector $\vec{a}_0$, see \eqref{eq.diagonal}, \eqref{eq.diagonal.low}, that 
is a standard second order equation of the Mathematical Physics. 

Next, according to \eqref{eq.diagonal}, \eqref{eq.diagonal.low}, we have 
$$
\mathfrak{D}_{1,\boldsymbol{\mu}} =  \mathrm{curl} \, \boldsymbol{\mu}_1^{(0)} \, 
\mathrm{curl}  - 
\nabla \, \boldsymbol{\mu}_1^{(1)} \, \mathrm{div}  + C_1 \, \mathrm{curl} + \vec{a}_1 \, \mathrm{div} + M_1
$$
with self-adjoint functional $(3\times 3)$ matrices $\boldsymbol{\mu}_1^{(0)}$, $M_1$,  a functional $(3\times 3)$-
matrix $C_1$, a functional $3$-vector $\vec{a}$ and a scalar function $\boldsymbol{\mu}_1^{(1)}$.  
Then the typical steady  Stokes' type system 
$ S_1 (d, \mathfrak{D}_{\boldsymbol{\mu}})$ is given by 
\begin{equation} \label{eq.S1.deRham}
 S_{1,1} (d,\mathfrak{D}_{\boldsymbol{\mu}})= \left( \begin{array}{ccc} 
\mathfrak{D}_{1,\boldsymbol{\mu}} & \nabla  \\ 
-\mathrm{div} & \mathfrak{D}_{0,\boldsymbol{\mu}}  \\ 
\end{array}
\right). 
\end{equation}
Similarly, at the second step of the de Rham complex over ${\mathbb  R}^3$ we have:
$$
\mathfrak{D}_{2,\boldsymbol{\mu}} =  - 
\nabla \, \boldsymbol{\mu}_2^{(0)} \, \mathrm{div} + 
\mathrm{curl} \, \boldsymbol{\mu}_2^{(1)} \, \mathrm{curl}  + C_2 \, \mathrm{curl} + \vec{a}_2 \, \mathrm{div} + M_2
$$
with self-adjoint functional $(3\times 3)$ matrices $\boldsymbol{\mu}_2^{(1)}$, $M_2$,  a functional $(3\times 3)$-matrix $C_2$, a functional $3$-vector $\vec{a}_2$ and a scalar function $\boldsymbol{\mu}_2^{(0)}$. Consequently, 
$$
 S_{2,1} (d,\mathfrak{D}_{\boldsymbol{\mu}})= \left( \begin{array}{ccc} 
\mathfrak{D}_{2,\boldsymbol{\mu}} & {\rm curl} & 0 \\ 
{\rm curl} & \mathfrak{D}_{1,\boldsymbol{\mu}}  & \nabla \\ 
0 & - {\rm div} & \mathfrak{D}_{0,\boldsymbol{\mu}} \\
\end{array}
\right). 
$$
Finally, at the third step we have
$$
\mathfrak{D}_{3,\boldsymbol{\mu}} = - \mathrm{div} \boldsymbol{\mu}_3^{(1)} \nabla + \vec{a}_3 \cdot \nabla + M_3, 
$$
with a self-adjoint functional matrix $\boldsymbol{\mu}_3^{(1)}$, a scalar function $M_3$ and 
a functional vector $\vec{a}_3$. Then 
\begin{equation} \label{eq.d.S3.gen}
S_3 (d,\mathfrak{D}_{\boldsymbol{\mu}})= \left( \begin{array}{cccc} 
\mathfrak{D}_{3,\boldsymbol{\mu}} & {\rm div} & 0 & 0 \\ 
 - \nabla  & \mathfrak{D}_{2,\boldsymbol{\mu}}  & {\rm curl} & 0 \\ 
0 & {\rm curl}  & \mathfrak{D}_{1,\boldsymbol{\mu}}  &   \nabla \\ 
0 & 0 & - {\rm div} & \mathfrak{D}_{0,\boldsymbol{\mu}} \\
\end{array}
\right).
\end{equation}
Of course, we may double the dimension of the related matrices  $S_q (d,\mathfrak{D}_{\boldsymbol{\mu}})$ and 
change signs of entries outside the diagonal with the use of the imaginary unit. 

Let us interprete these Stokes' type operators within classical models of the 
Mathematical Physics. We consider three model examples only, because much more equations, 
that fit into our scheme,  
could be found in \cite{Mir15}, \cite{Mir18}, \cite{Mir20A}, \cite{Mir20B}. 

\begin{exmp} \label{ex.Maxwell}
We begin with the equations of electromagnetic field. Let $\mathfrak{c}$ stand for the speed of light.  
Then the Maxwell's type operator  
$\iota {\mathcal M}_{3} (A,\mathbf{b}\partial _t) $
gives us the classical Maxwell equations for electromagnetic field in a vacuum:
\begin{equation} \label{eq.d.S3.Maxw}
\iota \left( \begin{array}{cccc} 
 {\mathfrak c}^{-1}\partial _t  & {\rm div} & 0 & 0 \\ 
\nabla  &  {\mathfrak c}^{-1}\partial _t  &  {\rm curl} & 0 \\ 
0 & -{\rm curl}  &  {\mathfrak c}^{-1}\partial _t  &   \nabla \\ 
0 & 0 &  {\rm div} &  {\mathfrak c}^{-1}\partial _t \\
\end{array}
\right) 
\left( \begin{array}{cccc} 
0 \\ \vec{H} \\ \vec{E} \\ 0
\end{array}
\right) = \iota \left( \begin{array}{cccc} 
0 \\ 0 \\ -4\pi {\mathfrak c}^{-1} \vec{j}_e \\ 4\pi \rho_e
\end{array}
\right),
\end{equation}
where $\vec{E}$,  $\vec{H}$ represent the electric and magnetic field strengths, 
$\rho_e$ is the volume density of electric charge and $\vec{j}_e$ is the volume density of electric current  
(here the number $\iota$ is introduced for the related matrices to be self-adjoint).

Taking into account the magnetic charges and currents in the Dirac monopoles \cite{Dirac2}, \cite{Dirac3} and 
Schwinger dyons \cite{Schwinger} models, the equation  \eqref{eq.d.S3.Maxw} can be rewritten in more symmetric form
\begin{equation} \label{eq.d.S3.Maxw2}
\iota \left( \begin{array}{cccc} 
 {\mathfrak c}^{-1}\partial _t  & {\rm div} & 0 & 0 \\ 
\nabla  &  {\mathfrak c}^{-1}\partial _t  &  {\rm curl} & 0 \\ 
0 & -{\rm curl}  &  {\mathfrak c}^{-1}\partial _t  &   \nabla \\ 
0 & 0 &  {\rm div} &  {\mathfrak c}^{-1}\partial _t \\
\end{array}
\right) 
\left( \begin{array}{cccc} 
0 \\ \vec{H} \\ \vec{E} \\ 0
\end{array}
\right) = \iota \left( \begin{array}{cccc} 
4\pi \rho_m \\ -4\pi {\mathfrak c}^{-1} \vec{j}_m \\ -4\pi {\mathfrak c}^{-1} \vec{j}_e \\ 4\pi \rho_e
\end{array}
\right),
\end{equation}
where  $\rho_m$ is the volume density of magnetic charge and $\vec{j}_m$ is the volume density of 
magnetic current. 
Significantly, Lemma \ref{l.factor.t} immediately gives us the so-called 
wave equations  for the field's strength related to  
model \eqref{eq.d.S3.Maxw2}. Indeed, if we denote the d'Alembert operator as 
\[
\hat D = \left( {\frac{1}
{{\mathfrak{c}^2 }}\frac{{\partial ^2 }}
{{\partial t^2 }} - \Delta } \right),
\]
then we get
\begin{equation} \label{eq.d.S3.Maxw.wave}
 \left( \begin{array}{cccc} 
 \hat D & 0 & 0 & 0 \\ 
 0  &  \hat D &  0 & 0 \\ 
0 & 0  & \hat D  &   0 \\ 
0 & 0 &  0 & \hat D \\
\end{array}
\right) \!\!
\left( \begin{array}{cccc} 
0 \\ \vec{{H}} \\ \vec{{E}} \\ 0
\end{array}
\right) \!= \! \left( \begin{array}{cccc} 
f_1 \\ f_2  \\ f_3 \\ f_4
\end{array}
\right),
\end{equation}
where 

\[
\left( {\begin{array}{*{20}c}
   {f_1 }  \\
   {f_2 }  \\
   {f_3 }  \\
   {f_4 }  \\
 \end{array} } \right) = \left( {\begin{array}{*{20}c}
   {\mathfrak{c}^{ - 1} \partial _t }  & { - {\rm div}} & 0 & 0  \\
   { - \nabla } & {\mathfrak{c}^{ - 1} \partial _t }  & { - {\rm curl}} & 0  \\
   0 & {{\rm curl}} & {\mathfrak{c}^{ - 1} \partial _t }  & { - \nabla }  \\
   0 & 0 & { - {\rm div}} & {\mathfrak{c}^{ - 1} \partial _t }   \\
 \end{array} } \right)\left( {\begin{array}{*{20}c}
   {4\pi \rho _m }  \\
   { - 4\pi \mathfrak{c}^{ - 1} \vec j_m }  \\
   { - 4\pi \mathfrak{c}^{ - 1} \vec j_e }  \\
   {4\pi \rho _e }  \\
 \end{array} } \right).
\]
The matrix equation \eqref{eq.d.S3.Maxw.wave} is equivalent to the following system
\[
\left\{
\begin{gathered}
  \left( {\frac{1}
{{\mathfrak{c}^2 }}\frac{{\partial ^2 }}
{{\partial t^2 }} - \Delta } \right)\vec E =  - 4\pi \, \nabla \rho _e  - \frac{{4\pi }}
{{\mathfrak{c}^2 }}\frac{{\partial \vec j_e }}
{{\partial t}} - \frac{{4\pi }}
{\mathfrak{c}}\left[ { \nabla  \times \vec j_m } \right], \hfill \\
  \left( {\frac{1}
{{\mathfrak{c}^2 }}\frac{{\partial ^2 }}
{{\partial t^2 }} - \Delta } \right)\vec H =  - 4\pi \,\nabla \rho _m  - \frac{{4\pi }}
{{\mathfrak{c}^2 }}\frac{{\partial \vec j_m }}
{{\partial t}} + \frac{{4\pi }}
{\mathfrak{c}}\left[ { \nabla  \times \vec j_e } \right], \hfill \\
  \frac{{\partial \rho _m }}
{{\partial t}} + \left( { \nabla  \cdot \vec j_m } \right) = 0, \,\, 
  \frac{{\partial \rho _e }}
{{\partial t}} + \left( {\nabla  \cdot \vec j_e } \right) = 0, \hfill \\ 
\end{gathered} 
\right.
\]
where the last two relations are the laws of conservation of electric and magnetic charges.
\end{exmp}
\begin{exmp} \label{ex.Stokes} 
Let us discuss the matrix representation of the hydrodynamic equations. 
As we noted at the beginning of \S \ref{s.MS}, the Euler 
and the Navier-Stokes' equations 
for incompressible fluid fit perfectly  to this scheme with  Maxwell's type  operator 
operator ${\mathcal M}_1 (d, \partial _t)$ and Stokes' type operator 
$S_1 (d, \Delta, \partial _t)$ as principal linear parts, respectively, for instance, \cite{LaLi},
\begin{equation*} 
\left\{
\begin{gathered}
  \partial _t \vec v + (\vec v \cdot  \nabla )\vec v + \rho ^{ - 1} \nabla p = \vec{ f}, \hfill \\
  \mathrm{div}\,\vec v = 0, \hfill \\
 \end{gathered} 
\right.
\quad
\left\{
\begin{gathered}
 ( \partial _t -\mu \Delta) \vec v + (\vec v \cdot  \nabla )\vec v + \rho ^{ - 1} \nabla p = \vec{f}, \hfill \\
  \mathrm{div}\,\vec v = 0, \hfill \\
 \end{gathered} 
\right.
\end{equation*}
where $\vec{v}$ is a local flow velocity, $p$ is a pressure, $\rho$ is a fluid density, $\mu$ is viscosity and 
$\vec{f}$ is the vector of outer forces. 
Next, as is known \cite{LaLi}, 
the vortex-less free fluid is described by the following system of equations
\begin{equation} \label{eq.d.S3.Hyd.vortex-less_1}
\left\{
\begin{gathered}
  \partial _t \vec v + (\vec v \cdot  \nabla )\vec v + \rho ^{ - 1}  \nabla p = 0, \hfill \\
  \partial _t \rho  + (\vec v \cdot  \nabla )\rho  + \rho \, \mathrm{div}\,\vec v = 0, \hfill \\
  \mathrm{curl}\, \vec v = 0, \hfill \\ 
\end{gathered} 
\right.
\end{equation}
with the same entries as above.

Let us assume that the flow is isentropic (i.e., the entropy $s$ is a constant). Using the 
thermodynamic relation for enthalpy $h$ per unit mass 
$$
d\, h = Td{\kern 1pt} s + \rho ^{ - 1} d{\kern 1pt} p
$$ 
we can introduce new function $u$ according to the following relations
\[
d{\kern 1pt} u = \frac{1}
{\mathfrak{c}_s}d{\kern 1pt} h = \frac{1}
{{\mathfrak{c}_s\rho }}d{\kern 1pt} p = \frac{\mathfrak{c}_s}
{\rho }d\rho ,
\]
where $\mathfrak{c}_s$ is the speed of sound ($\mathfrak{c}_s^2  = \left( {\partial p/\partial \rho } \right)_s =const$). 
Then taking into account that the total time derivative is given as 
$$
\mathfrak{d}_t  = \partial _t  + (\vec v \cdot  \nabla ),
$$ 
we rewrite  the system \eqref{eq.d.S3.Hyd.vortex-less_1} in the following symmetric form
\[
\left\{
 \begin{gathered}
 {\mathfrak c} _s^{ - 1} \mathfrak{d}_t \vec v +  \nabla u = 0, \hfill \\
   {\mathfrak c}_s^{ - 1} \mathfrak{d}_t u + {\rm div}\vec v  = 0, \hfill \\
 {\rm curl}\;\vec v = 0, \hfill \\ 
\end{gathered} 
\right.
\]
or in the following matrix form
\begin{equation} \label{eq.d.S3.Hyd.vortex-less_2A}
\iota  \left( \begin{array}{cccc} 
 {\mathfrak c}^{-1}_s \mathfrak{d} _t  & {\rm div} & 0  \\ 
 \nabla  &  {\mathfrak c}^{-1}_s \mathfrak{d} _t  &  {\rm curl}  \\ 
0 & -{\rm curl}  &  {\mathfrak c}^{-1}_s \mathfrak{d}_t   \\ 
\end{array}
\right) 
\left( \begin{array}{cccc} 
u \\ \vec{v} \\ 0 
\end{array}\right )=\left( \begin{array}{cccc} 
 0\\ 0 \\ 0 
\end{array}\right ),
\end{equation}
where the principal linear part matches with the introduced above  Maxwell's type operator 
$\iota  {\mathcal M}_{2}(d,  {\mathfrak c}^{-1}_s \partial_t )$.

In order to interprete  Maxwell's type operator 
$\iota  {\mathcal M}_{3}(d,  {\mathfrak c}^{-1}_s \partial_t )$, we rewrite  
\eqref{eq.d.S3.Hyd.vortex-less_2A}: 
\begin{equation} \label{eq.d.S3.Hyd.vortex-less_2}
\iota  \left( \begin{array}{cccc} 
 {\mathfrak c}^{-1}_s \mathfrak{d} _t  & {\rm div} & 0 & 0 \\ 
 \nabla  &  {\mathfrak c}^{-1}_s \mathfrak{d} _t  &  {\rm curl} & 0 \\ 
0 & -{\rm curl}  &  {\mathfrak c}^{-1}_s \mathfrak{d}_t  &   \nabla \\ 
0 & 0 &  {\rm div} &  {\mathfrak c}^{-1}_s \mathfrak{d}_t \\
\end{array}
\right) 
\left( \begin{array}{cccc} 
u \\ \vec{v} \\0 \\  0
\end{array}\right )=\left( \begin{array}{cccc} 
 0\\ 0 \\ 0 \\ 0
\end{array}\right ).
\end{equation}
Next, to consider the vortex flow we introduce two new functions $\xi (\vec r,t)$ and ${\vec w}(\vec r,t)$ which 
describe the field of vortex tubes.  The value $\vec w$ is proportional to the angle vector of tube rotation, while 
$\xi$ characterizes the twisting of vortex tubes \cite{Mir20A}. Then the equation \eqref{eq.d.S3.Hyd.vortex-less_2} 
for vortex flow is written as
\begin{equation} \label{eq.d.S3.Hyd.vortex_1}
\left( \begin{array}{cccc} 
 {\mathfrak c}^{-1}_s \mathfrak{d}_t  & {\rm div} & 0 & 0 \\ 
 \nabla  &  {\mathfrak c}^{-1}_s \mathfrak{d}_t  &  {\rm curl} & 0 \\ 
0 & -{\rm curl}  &  {\mathfrak c}^{-1}_s \mathfrak{d}_t  &   \nabla \\ 
0 & 0 &  {\rm div} &  {\mathfrak c}^{-1}_s \mathfrak{d}_t \\
\end{array}
\right) 
\left( \begin{array}{cccc} 
u \\ \vec{v} \\ \vec w \\  \xi
\end{array}\right )=0.
\end{equation}
Finally, taking into the account the dissipation we obtain the related  equations
 of viscous vortex flow, including the Stokes' type operator 
$\iota S_{3,1}(d, {\mathfrak c}^{-1}_s\partial_t+\Delta_{\boldsymbol{\mu}} )$ with 
$\boldsymbol{\mu}^{(0)}_0 = \mu$, $\mu^{(0)}_1 =\boldsymbol{\mu}^{(1)}_1  = \mu I_3$, 
$\boldsymbol{\mu}^{(1)}_2 = \mu$ as the principal linear part:
\begin{equation*} 
\iota \left( \begin{array}{cccc} 
 {\mathfrak c}^{-1}_s (\mathfrak{d} _t  - \mu \Delta)& {\rm div} & 0 & 0 \\ 
 \nabla  &  {\mathfrak c}^{-1}_s (\mathfrak{d} _t -\mu \Delta) &  \mathrm{curl} & 0 \\ 
0 & -\mathrm{curl}  &  {\mathfrak c}^{-1}_s (\mathfrak{d} _t -\mu \Delta)  &   \nabla \\ 
0 & 0 &  {\rm div} &  {\mathfrak c}^{-1}_s (\mathfrak{d}_t -\mu \Delta)\\
\end{array}
\right) \!\!\!
\left( \begin{array}{cccc} 
\!u \!\\ \! \vec{v} \! \\ \! \vec{w} \! \\ \! \xi \!
\end{array}
\right) \!\!=0,
\end{equation*}
where the parameter $\mu$ represents kinematic viscosity. This matrix equation is equivalent to the following 
system, see \cite{Mir20A}:
$$
\left\{
\begin{gathered}
  \mathfrak{c}^{ - 1}_s \left( {\partial _t  + \left( {\vec v \cdot  \nabla } \right) - 
	\mu {\kern 1pt} \Delta } \right)\vec v + {\rm curl}\; \vec w +  \nabla u = 0, \hfill \\
  \mathfrak{c}^{ - 1}_s \left( {\partial _t  + \left( {\vec v \cdot \nabla } \right) - \mu {\kern 1pt} \Delta } \right)u + {\rm div}\;\vec v = 0, \hfill \\
   \mathfrak{c}^{ - 1}_s \left( {\partial _t  + \left( {\vec v \cdot  \nabla } \right) - \mu {\kern 1pt} \Delta } \right)\vec w - {\rm curl}\;\vec v +  \nabla \xi  = 0, \hfill \\
   \mathfrak{c}^{ - 1}_s \left( {\partial _t  + \left( {\vec v \cdot \nabla } \right) - \mu {\kern 1pt} \Delta } \right)\xi  + {\rm div}\;\vec w = 0. \hfill \\ 
\end{gathered} 
\right.
$$
\end{exmp}

\begin{exmp}
\label{ex.quatum.mass} 
Consider the  quadrupling of the imaginary de Rham complex over ${\mathbb R}^3$:
$ A_0 = \iota \, I_4 \otimes  \nabla$, $ A_1 =\iota  \, I_4 \otimes  \mathrm{curl}$, $A_2 = \iota \, I _4 \otimes 
 \mathrm{div}$. 
Then, for a real number $M$, we have   
$
A^*_0 = \iota \, I_4 \otimes  \mathrm{div}$, $ A^*_1 =  -\iota  \, I_4 \otimes  \mathrm{curl}$, 
$A^*_2 = \iota  \, I_4 \otimes \nabla$, $ (\iota \, M)^* = - \iota \, M$.

Set $\boldsymbol{\mu}^{(0)}_0 = 0$, $\boldsymbol{\mu}^{(0)}_1 = 0$, 
$\boldsymbol{\mu}^{(1)}_1 = 0$, 
$$
{\mathfrak D}_{0,\boldsymbol{\mu}} = \iota \, 
\left(
\begin{array}{lllll}
 0 & 0 & 0 & -M \\ 
0 & 0 & M & 0 \\
0 & -M & 0 & 0 \\
M & 0 & 0 & 0 \\
\end{array}
\right), \, 
{\mathfrak D}_{1,\boldsymbol{\mu}} = \iota \, 
\left(
\begin{array}{lllll}
 0 & -\mathrm{curl} & 0 & -M \\ 
\mathrm{curl} & 0 & M & 0 \\
0 & -M & 0 & \mathrm{curl} \\
M & 0 & -\mathrm{curl} & 0 \\
\end{array}
\right);
$$
in particular, the operators ${\mathfrak D}_{0,\boldsymbol{\mu}}$, 
${\mathfrak D}_{1,\boldsymbol{\mu}}$ are formally self-adjoint.
Then Stokes' operator $S_{0,\iota \mathfrak{c}^{-1}} (A,{\mathfrak D}_{\boldsymbol{\mu}},\partial_t )$ coincides with 
$\iota \mathfrak{c}^{-1} (\partial _t + {\mathfrak D}_{0,\boldsymbol{\mu}})$. 

Next, Stokes' system $S_{1,\iota \mathfrak{c}^{-1}} (A,{\mathcal L}_{\boldsymbol{\mu}})$ 
matches with the operator related to equations for the vector and scalar field’s
strengths in sedeonic field theory for the case of fields with non-zero mass of quantum 
$m_0$ \cite{Mir15},\cite{Mir20B}:
$$
 \left( \begin{array}{ccccc} 
\iota \mathfrak{c}^{-1} \partial_t +{\mathfrak D}_{1,\boldsymbol{\mu}} & \iota \, I_4 \otimes \nabla\\ 
\iota \, I_4 \otimes  \mathrm{div} & \iota \mathfrak{c}^{-1} \partial_t + {\mathfrak D}_{0,\boldsymbol{\mu}}  \\ 
\end{array}
\right).
$$
This leads us to the following matrix equation
\[
\left( {\begin{array}{*{20}c}
   \hat \partial _t  & 0 & 0 & { - M} & {\rm div} & 0 & 0 & 0  \\
   0 &  \hat \partial _t  & M & 0 & 0 & {\rm div} & 0 & 0  \\
   0 & { - M} &  \hat \partial _t& 0 & 0 & 0 & {\rm div} & 0  \\
   M & 0 & 0 &  \hat \partial  _t & 0 & 0 & 0 & {\rm div}  \\
   { \nabla } & 0 & 0 & 0 &  \hat \partial _t & {\rm - curl} & 0 & M  \\
   0 & { \nabla } & 0 & 0 & {\rm curl} &  \hat \partial _t & { - M} & 0  \\
   0 & 0 & {\nabla } & 0 & 0 & M &  \hat\partial _t & {\rm curl}  \\
   0 & 0 & 0 & { \nabla } & { - M} & 0 & {\rm - curl} &  \hat\partial _t  \\

 \end{array} } \right)\left( {\begin{array}{*{20}c}
   {g_1 }  \\
   {g_2 }  \\
   {g_3 }  \\
   {g_4 }  \\
   {\vec G_1 }  \\
   {\vec G_2 }  \\
   {\vec G_3 }  \\
   {\vec G_4 }  \\

 \end{array} } \right) = \left( {\begin{array}{*{20}c}
   {q _1 }  \\
   {q_2 }  \\
   {q_3 }  \\
   {q_4 }  \\
   {\vec J_1 }  \\
   {\vec J_2 }  \\
   {\vec J_3 }  \\
   {\vec J_4 }  \\

 \end{array} } \right),
\]
which is equivalent to the following system (see \cite{Mir15}):
\[
\left\{ 
\begin{gathered}
   {\mathfrak c}^{-1}\partial _t g_1  + {\rm div{\kern 1pt}} \vec G_1  - Mg_4  = 4\pi\rho _1 , \hfill \\
  {\mathfrak c}^{-1} \partial _t g_2  + {\rm div{\kern 1pt}} \vec G_2  + Mg_3  = 4\pi \rho _2 , \hfill \\
  {\mathfrak c}^{-1} \partial _t g_3  + {\rm div{\kern 1pt}} \vec G_3  - Mg_2  = 4\pi \rho _3 , \hfill \\
 {\mathfrak c}^{-1}  \partial _t g_4  + {\rm div{\kern 1pt}} \vec G_4  + Mg_1 \; = 4\pi \rho _4 , \hfill \\
  {\mathfrak c}^{-1} \partial _t \vec G_1  + \nabla g_1  - {\rm curl{\kern 1pt}} \vec G_2  + M\vec G_4  =  
- \frac{4\pi}{ {\mathfrak c}}\vec j_1 , \hfill \\
  {\mathfrak c}^{-1} \partial _t \vec G_2  +  \nabla g_2  + {\rm curl{\kern 1pt}} \vec G_1  - 
	M\vec G_3  =  - \frac{4\pi}{ {\mathfrak c}} \vec j_2 , \hfill \\
  {\mathfrak c}^{-1} \partial _t \vec G_3  +  \nabla g_3  + {\rm curl{\kern 1pt}} \vec G_4  + M\vec G_2  = 
 - \frac{4\pi}{ {\mathfrak c}} \vec j_3 , \hfill \\
  {\mathfrak c}^{-1} \partial _t \vec G_4  +  \nabla g_4  - {\rm curl{\kern 1pt}} \vec G_3  - M\vec G_1  =  
- \frac{4\pi}{ {\mathfrak c}} \vec j_4 . \hfill \\ 
\end{gathered} 
\right.
\]
Here $g_i$ and $\vec G_i$ are scalar and vector field strengths; 
 $q_i = 4\pi\rho _i$ (where $\rho _i$ 
are volume densities of charges); 
 $\vec J_i 
=- \frac{4\pi}{ {\mathfrak c}} \vec j_i $ (where $\vec j_i$ are volume densities of currents);
($i \in \left\{ {1,\;2,\;3,\;4} \right\}$);  
$\hat \partial _t =  {\mathfrak c}^{-1} \partial _t$; 
$M={m_0  {\mathfrak c}}/{\hbar}$, see \cite{Mir20B}.

The approach predicts two more Stokes' operators related 
to this mathematical model. One may conjecture that 
$\boldsymbol{\mu}^{(i)}_2 = 0$, $\boldsymbol{\mu}^{(1)}_3 = 0$ and 
 ${\mathfrak D}_{2,\boldsymbol{\mu}}$, ${\mathfrak D}_{3,\boldsymbol{\mu}}$ 
are given by
$$
{\mathfrak D}_{2,\boldsymbol{\mu}} \!\!= \!\! 
\left(
\begin{array}{lllll}
 \! 0 \!& \!\gamma_ 2 \mathrm{curl} \!& \! 0\! & \!\alpha_2 M \!\\ 
\!\overline \gamma_2  \mathrm{curl} \!& \!0 \!&\! \beta_2 M \!& \!0\!\\
\!0\! & \!\overline \beta_2 M \! & \!0\! & \!\delta_2  \mathrm{curl} \!\\
\!\overline \alpha_2 M \!& \!0 \!& \!\overline \delta_2  \mathrm{curl}\! &\! 0\! \\
\end{array}
\right) \!, \,   
{\mathfrak D}_{3,\boldsymbol{\mu}} \!\!=\!\! 
\left(
\begin{array}{lllll}
 \!0 \!&  \! 0  \! &  \!0  \!&  \!\alpha_3 M  \! \\ 
 \! 0  \! &  \! 0  \! &  \! \beta_3 M  \! &  \! 0  \! \\
 \!0  \! &  \!\overline \beta_3 M  \! &  \! 0  \! &  \! 0  \! \\
 \!\overline  \alpha_3 M  \!& \! 0  \! &  \! 0  \! &  \! 0  \! \\
\end{array}
\right), \,
$$
respectively, with complex numbers $\alpha_j, \beta_j$, $\gamma_j$, $\delta_j$ (highly likely, $\pm \iota$)
and then 
$$
S_{2,\mathfrak{c}^{-1}\iota} (A,\partial_t+{\mathfrak D}_{\boldsymbol{\mu}}) = 
 \left( \begin{array}{ccccc} 
\iota \mathfrak{c}^{-1}\partial_t + {\mathfrak D}_{2,\boldsymbol{\mu}} &  \iota \, I_4 \otimes  \mathrm{curl} & 0 \\ 
-\iota  \, I_4 \otimes  \mathrm{curl} & \iota \partial_t + {\mathfrak D}_{1,\boldsymbol{\mu}} &  
\iota  \, I_4 \otimes \nabla\\ 
0 & \iota \, I_4 \otimes  \mathrm{div} & \iota \partial_t + {\mathfrak D}_{0,\boldsymbol{\mu}}  \\ 
\end{array}
\right),
$$
$$
S_{3,\iota} (A,{\mathcal L}_{\boldsymbol{\mu}}) \!= \!
 \iota \left( \begin{array}{ccccc} 
 \partial_t +{\mathfrak D}_{3,\boldsymbol{\mu}} &    I_4 \otimes  \mathrm{div} & 0 & 0 \\ 
  I_4 \otimes \nabla &  \mathfrak{c}^{-1} \partial_t + {\mathfrak D}_{2,\boldsymbol{\mu}} &  I_4 \otimes  \mathrm{curl} & 0 \\ 
0 & -  I_4 \otimes  \mathrm{curl} & \mathfrak{c}^{-1} \partial_t + {\mathfrak D}_{1,\boldsymbol{\mu}} &  
 I_4 \otimes \nabla\\ 
0 & 0 &  I_4 \otimes  \mathrm{div} &  \mathfrak{c}^{-1} \partial_t + {\mathfrak D}_{0,\boldsymbol{\mu}}  \\ 
\end{array}
\right).
$$
\end{exmp}

\section{Some typical elliptic differential complexes} 
\label{s.other}

Consider some other typical examples of elliptic complexes. 

\begin{exmp} \label{ex.Koszul}
A large part of complexes represents the so-called Koszul complexes. 
Namely, let $A_0$ be a column of scalar differential operators $(Q_1, \dots Q_N)^T$ over 
an open set $X\subset {\mathbb R}^n$, $1\leq N\leq n$, satisfying the following 
commutation assumptions: 
\begin{equation}\label{eq.Koszul}  
Q_i Q_j = Q_j Q_i \mbox{ for all } 1\leq i< j \leq N.
\end{equation}
Setting $E_q = X \times {\mathbb C}^{k_q}$ with $k_q =\Big(\begin{array}{ll} N \\ q
\end{array}\Big)$ we may define differential operators 
$$
A_q = \sum_{\# I=q} Q_i (x)  dy_i \wedge dy_I , \, x \in X\subset {\mathbb R}^n, 
$$
where $y= (y_1, \dots, y_N)$ are coordinates in ${\mathbb R}^N$. Again $A_{q+1} \circ A_q =0$ 
and hence we obtain a differential complex $\{ A_q, E_q\}_{q=0}^N$, 
see \cite[\S 1.2.8]{Tark35}, that is usually called Koszul complex associated with the set 
$(Q_1, \dots Q_N)$. 

According to \cite[Proposition 1.2.51]{Tark35}
this complex is elliptic if and only if the principal symbol of the operator 
$A_0 =  \sum_{i=1}^N Q_i (x)  dy_i $ 
is injective; of course we may interpret $A_j$ as matrix differential operators:
$$
A_0 = \left(\begin{array}{llll} Q_1 \\ \dots \\ Q_N \\
\end{array}\right) , \, \dots , \, A_{N-1} = \big(\begin{array}{llll} Q_1 ,\dots  , Q_N \\
\end{array}\big)
$$
In particular, for operators with constant coefficients we have $Q_j^* Q_i = Q_i Q_j^*$ and 
$$
\Delta_q = \Big(\sum_{i=1}^N Q_i^* Q_i \Big) I_{k_q}.
$$
If the operators $Q_i$ has the same order then the Laplacians $\Delta_q$ are strongly elliptic. 

Unfortunately the Koszul complexes are not always compatibility complexes.
For operators with constant coefficients one may use \cite[Proposition 1.2.52]{Tark35}
giving a simple sufficient condition providing the compatibility property:
the dimension of the algebraic variety 
$$
{\mathcal N} (A_0) = \{ z\in {\mathbb C}^n: Q_1 (z) = \dots =Q_N (z) =0\}
$$
is no more than $(n-N)$. Again, for $N=3$ with the use
of the classical algebraic constructions we obtain: 
$$
A_0 h = ( Q_1 h , Q_2 h , Q_3 h )^T, \,  
A_1 \vec{u} =A_0 \times  \vec{u}, \, A_2 \vec{v} = A_0 \cdot \vec{v}  ,
$$
for (vector-)functions $h(x)$, $\vec{u}(x)$, $\vec{v}(x)$  of $n$ variables $x = (x_1, \dots x_n)$ with $n\geq 3$.

Of course, the initial operator $A_0$ may be non-homogeneous.
\end{exmp}

\begin{exmp} \label{ex.deRham.p}
Let $X={\mathbb R}^n$ and $p \in \mathbb N$, $p\geq 2$. 
For a $q$-differential form $u$ 
we set 
$$
A^{(p)}_q u(x) = \sum_{i=1}^n
\sum_{\# I=q} \partial^p_i  
u_I(x) dx_i \wedge dx_I  .
$$
Using \eqref{eq.wedge} we easily conclude that  
\begin{equation} \label{eq.deRham.p.base.rel} 
A^{(p)}_{q+1} \circ A^{(p)}_q =0
\end{equation}
 and then we obtain a Koszul complex $\{ A^{(p)}_q, \Lambda^q\}$.
In this case $N=n$ and the algebraic variety 
${\mathcal N} (A^{(p)}_0)$ is trivial, i.e. $\{ A^{(p)}_q, \Lambda^q\}$ is a 
compatibility complex for the operator $A^{(p)}_0 = \left(
\begin{array}{ccc}
\partial_1^p 
\\
\dots \\
\partial_n^p 
\end{array}
\right)$, see Example \ref{ex.Koszul} above. 
Of course, in addition to  \eqref{eq.deRham.p.base.rel},  we also have 
$
\Delta^{(p)}_q = (A^{(p)}_q) ^{*} A^{(p)}_q + A^{(p)}_{q-1} (A^{(p)}_{q-1}) ^{*} = 
(-1)^p \Big( \sum_{j=1}^n \partial^{2p}_j 
\Big) I_{k_q}$, $ 0\leq q\leq n$. 
Again we still may define this compatibility complex in ${\mathbb R}^3$ with the use
of the classical algebraic constructions: 
$A_1^{(p)} u =A_0^{(p)} \times  u$, $ A^{(p)}_2 v = A_0^{(p)} \cdot v  $. 
\end{exmp}

\begin{exmp} \label{ex.mixed}
Consider the differential operator  
$$
A = \left( \begin{array}{cccc}
0 & -\partial_3
\\
\partial_3
 & 0 \\ 
-\partial_2 
& \partial_ 1  
 \\
- \partial_1  
& -\partial_2 
\end{array}
\right)
$$
in ${\mathbb R}^3$, that is closely related to the de Rham complex on the plane $Ox_1x_2$:
$$
d_0 = \nabla_2 = \left( 
\begin{array}{llll}
\partial_1  \\
 \partial_2 \\
\end{array}
\right)
, \, d_1 = \mathrm{curl}_2 = (-\partial_2 ,  \partial_ 1 ), \,  
d^*_0 = -\mathrm{div}_2  =-(\partial_1 ,  \partial_ 2 ).
$$
Alledgedly, the related system of equations
\begin{equation}\label{eq.planar.flow}
\left\{ 
\begin{array}{llll}
\partial_3 v_1 = 0, \\  
\partial_3 v_2 = 0, \\
\mathrm{curl}_2 \vec v = -\partial_2 v_1  +\partial_ 1 v_2 = f_1, \\
-\mathrm{div}_2 \vec v =  -\partial_1v_1 -  \partial_2 v_2 =f_2, \\
\end{array}
\right.
\end{equation}
was pointed out by L. Euler for the description of the velocity $\vec v = (v_1, v_2)$ of a 
plane-parallel flow on layers $\{ x_3 = const\}$ with given 'plane rotation' $f_1$ and 'plane source' $f_2$ in the 
case where the flow does not depend on the layer. Taking a complex valued functions 
$ \tilde v (x_1,x_2) = -\iota (v_1 - 
\iota v _2)$, $\tilde f (x_1,x_2) =  f_1 - 
\iota f _2$ one easily reduces the last two equations in \eqref{eq.planar.flow} to the non-
homogeneous Cauchy-Riemann system $\overline \partial 
\tilde v = \tilde f$ on the plane $Ox_1x_2$.  

Clearly, for the differential 
operator 
$$
B = \left( \begin{array}{ccccc}
\partial_2  &  -\partial_1 & 0 & -\partial_3  \\
 \partial_1  &  \partial_2  & \partial_3 &  0  \\
\end{array}
\right)
$$ 
we have $B \circ A\equiv 0$. Passing to the polynomial matrices, we see that 
if a differential operator $\tilde B$ satisfies $\tilde B \circ A\equiv 0$ then
for $\tilde B (\zeta) =  (b_1 (\zeta), b_2 (\zeta), b_3 (\zeta), b_4(\zeta))$ 
we have 
\begin{equation} \label{eq.CR.compat}
\zeta_3 b_2 - \zeta_2 b_3 - \zeta_1 b_4  =0, \, 
-\zeta_3 b_1+\zeta_1 b_3 - \zeta_2 b_4  =0, 
\end{equation}
and hence
$$
|\zeta|^2 b_3  = \zeta_1\zeta_3 b_1 + \zeta_2 \zeta_3 b_2 , \, 
|\zeta|^2 b_4  =- \zeta_2\zeta_3 b_1 + \zeta_1 \zeta_3 b_2. 
$$
In particular, this means that there are polynomials $c_1 (\zeta)$, $c_2 (\zeta)$
such that 
$$
b_3 (\zeta)= \zeta_3 c_1 (\zeta) ,\, b_4 (\zeta)= \zeta_3 c_2 (\zeta) 
$$ 
and then, taking into account \eqref{eq.CR.compat},  
$$
\tilde B (\zeta) =  ( \zeta_1  c_1 - \zeta_2  c_2, \zeta_1  c_2 + 
\zeta_1  c_2, \zeta_3 c_1 , \zeta_3 c_2  ) = \big (-c_2, \, c_1  \big) B(\zeta).
$$
Thus, $\{ A_0=A, A_1=B\}$ is a compatibility complex. Moreover,  
as the related Laplacians have the following form: 
$\Delta_0 = 
- \Delta I_2$, $\Delta_1= 
- \Delta I_4$, $ \Delta_2 = 
- \Delta I_2$, 
where $\Delta$ is the usual Laplace operator  in ${\mathbb R}^3$, 
we conclude that this complex is elliptic. 

For system \eqref{eq.planar.flow} the compatibility conditions induced by the operator $B$ just mean that  the data 
$f_1$, $f_2$ do not depend on the variable $x_3$:
$$
\partial_3 f_1 = \partial_3 f_2  = 0.
$$
Situation becomes more complicated if we consider the system
\begin{equation}\label{eq.planar.vel}
\left\{ 
\begin{array}{llll}
\partial_3 v_1 = g_1  (x_1,x_2, x_3),  \\
\partial_3 v_2 = g_2  (x_1,x_2, x_3), \\
\mathrm{curl}_2 \vec v= \partial_1 v_2  -\partial_ 2 v_1 = f_1 (x_1,x_2, x_3), \\
-\mathrm{div}_2 \vec v = -\partial_1v_1 - \partial_2 v_2 =f_2  (x_1,x_2, x_3), \\
\end{array}
\right.
\end{equation}
i.e. we are looking for 'planar' velocity $\vec v = (v_1, v_2)$ of the 
flow on each layer $\{ x_3 = const\}$ with given 'plane rotation' $f_1$ and 'plane source' $f_2$ in the case where 
the flow  {\it depends} on the layers. Then the compatibility conditions are the following:
\begin{equation*} 
\left\{ 
\begin{array}{llll}
\mathrm{curl}_2 \, \vec g = \partial_ 1 g_2- \partial_2 g_1   = 
\partial_ 3 f_1 (x_1,x_2, x_3), \\
- \mathrm{div}_2 \, \vec g =-\partial_1g_1 -  \partial_2 g_2 =\partial _3 f_2  (x_1,x_2, x_3). \\
\end{array}
\right.
\end{equation*}
The related linear Maxwell' type operators ${\mathcal M}_j (A,\partial _t )$ are the following:
$$
{\mathcal M}_1 (A,\partial _t ) = \left( 
\begin{array}{cccccc} 
\partial_t  & 0 & 0 & 0 & 0 & -\partial_3 \\ 
0  &  \partial_t & 0 & 0& \partial_3 & 0 \\ 
0  &  0 & \partial_t  & 0& -\partial_2 & \partial_1 \\ 
0  &  0 & 0&  \partial_t  & -\partial_1 & -\partial_2 \\ 
0 & -\partial_3 & \partial_2 & \partial_1  & \partial_t  & 0 \\
\partial_3 &  0 & -\partial_1  & \partial_2  & 0 &\partial_t   \\
\end{array}
\right) ,
$$
$$
{\mathcal M}_2 (A,\partial _t ) = \left( 
\begin{array}{cccccccc} 
\partial_t & 0 & \partial_2  & -\partial_1 & 0  &  -\partial_3 & 0 &  0   \\
0 & \partial _t  & \partial_1 & \partial_2 &  \partial_3 & 0 &  0 & 0  \\
-\partial_2 & -\partial_1 &\partial_t  & 0 & 0 & 0 & 0 & -\partial_3 \\ 
\partial_1& -\partial_2 &0  &  \partial_t & 0 & 0& \partial_3 & 0 \\ 
0& -\partial_3  &0 &  0 & \partial_t  & 0& -\partial_2 & \partial_1 \\ 
\partial_3 & 0 &0  &  0 & 0&  \partial_t  & -\partial_1 & -\partial_2 \\ 
0& 0 &0 & -\partial_3 & \partial_2 & \partial_1  & \partial_t  & 0 \\
0& 0 &\partial_3 &  0 & -\partial_1  & \partial_2  & 0 &\partial_t   \\
\end{array}
\right) .
$$
Now, taking into the account the following relation between the two-dimensional and 
the three-dimensional rotation operators, 
$ 
\mathrm{curl}_2 (v_1 , v_2) = \mathrm{curl}_3 (v_1 , v_2,0), 
$  
we see that the related Maxwells' system  ${\mathcal M}_1 (A,\partial _t )$ 
is  reduced to  system ${\mathcal M}_2 (d,\partial _t )$ 
for the de Rham complex with $\vec v = (v_1,v_2,v_3)$ truncated to $(v_1,v_2,0)$ that 
 corresponds to one line missing in \eqref{eq.d.S3.Hyd.vortex-less_2A}: 
$$
-\partial_2 w_1 + \partial_1 w_2 + \partial_3  u + \mathfrak{d}_t v_3  =0.
$$
The similar fact is valid for Stokes' operators 
 $S_1 (A,{\mathfrak D}_{\boldsymbol{\mu}},\partial _t )$ and 
$S_2 (d,{\mathfrak D}_{\boldsymbol{\mu}},\partial _t )$
with the  missing  line in the corresponding non-linear perturbation of 
$S_2 (d,{\mathfrak D}_{\boldsymbol{\mu}},\partial _t )$: 
$$
-\partial_2 w_1 + \partial_1 w_2 + \partial_3  u +(\mathfrak{d}_t -\mu \Delta) v_3  =0. 
$$
Significantly, ${\mathcal M}_2 (A,\partial _t )$ coincides with ${\mathcal M}_3 (d,\partial _t )$  
up to the order of lines and columns, and similarly for  
$S_2 (A,{\mathfrak D}_{\boldsymbol{\mu}},\partial _t )$ and  $S_3 (d,{\mathfrak D}_{\boldsymbol{\mu}},\partial _t )$, 
i.e. the missing component $v_3$ and the missing lines are restored automatically on the last step. 
\end{exmp} 

\begin{exmp} \label{ex.?2}
Let $X={\mathbb R}^2$ and 
$$
A =\left( \begin{array}{cccc}
\partial_1 
& 0 \\
\partial_2
& \partial_1 
\\
0 & \partial_2
\\
\end{array}
\right). 
$$
Then its principal symbol is injective because
$$
A^*A = -\left( \begin{array}{cccc}
 \Delta & \partial_1 \partial_2 
 \\
\partial_1 \partial_2 
& \Delta  \\
\end{array}
\right),  \det(\sigma(A^*A )(\zeta)) = |\zeta|^4 - \zeta_1^2 \zeta_2^2 >0
\mbox{ for all } \zeta \in {\mathbb R}^2\setminus \{0\}.
$$
Clearly, for the polynomial vector $B(\zeta)  = (\zeta_2^2 , - \zeta_1\zeta_2, \zeta_1^2)$ 
we have $B(\zeta) A(\zeta) \equiv 0$. If a vector 
$\tilde B(\zeta)= (b_1(\zeta), b_2(\zeta), b_3(\zeta))$ satisfies 
$\tilde B(\zeta) A(\zeta) \equiv 0$ then 
$$
b_1 (\zeta)\zeta_1 + b_2 (\zeta)\zeta_2 =  b_2 (\zeta)\zeta_1 + b_3 (\zeta)\zeta_2=0.
$$
Hence $b_1 \zeta_1^2 = b_3 \zeta_2^2$, and for the polynomial $p (\zeta) = b_3/\zeta_1^2$ we have 
$\tilde B(\zeta) = p(\zeta)B(\zeta)$,  
i.e. the differential operator $B=(\partial_2^2 , - \partial_1\partial_2, \partial_1^2)$
is a compatibility operator for $A$.  As the mapping $B(\zeta): {\mathcal P}^3 \to {\mathcal P}$ is 
surjective for $\zeta \in {\mathbb R}^n \setminus \{0\}$, we see that any polynomial $C(\zeta)$, 
satisfying $C(\zeta) B(\zeta) =0$, is identically zero. Thus operators $A$ and $B$ form a 
compatibility differential complex.

Moreover, if $w = (w_1 , w_2, w_3 )^T$ is a complex vector, satisfying
$$
B(\zeta) w = \zeta_2^2 w_1   - \zeta_1\zeta_2 w_2  + w_3 \zeta_1^2 = 0
$$ 
then there is a complex vector $v=(v_1,v_2)^T $ such that $w=A(\zeta)v $ if $\zeta \in {\mathbb R}^2 \setminus 
\{0\}$: 
$$
w_1=0, v_1 = \frac{w_2}{\zeta_2}, v_2 = \frac{w_3}{\zeta_2} \mbox{ if } \zeta_1=0, \zeta_2 \ne 0,  
$$
$$
w_3=0, v_1 = \frac{w_1}{\zeta_1}, v_2 = \frac{w_2}{\zeta_1} \mbox{ if } \zeta_1\ne 0, \zeta_2 = 0,  
$$
$$
w_2=\frac{\zeta_2^2 w_1   + w_3 \zeta_1^2}{ \zeta_1\zeta_2}, v_1 = \frac{w_1}{\zeta_1}, v_2 = \frac{w_3}{\zeta_2} \mbox{ if } \zeta_1\ne 0, \zeta_2 \ne 0.  
$$
Thus, the  range of the mapping  
$A (\zeta): {\mathbb C} ^2\to {\mathbb C}^3$ coincides with the kernel of the mapping  
$B (\zeta): {\mathbb C} ^3\to {\mathbb C}^1$ if $\zeta \in {\mathbb R}^2 \setminus 
\{0\}$. As  the mapping $A (\zeta)$
is injective and the mapping $B (\zeta)$
is surjective, we conclude that the related complex is elliptic. 

However, the operators in the complex have different orders and, unfortunately, we can not 
decrease the order of $B$. Of course, 
$\Delta_{2} = \Delta^2 - \partial^2_1\partial^2_2$ 
is a strongly elliptic operator, 
but taking $\boldsymbol{\mu}_0^{(0)} = - \Delta I_2$,
$\boldsymbol{\mu}_1^{(1)} = - \Delta I_3$
we obtain 
$$
\Delta_{0,\boldsymbol{\mu}_0} \!= \!\left( \begin{array}{cccc}
 \Delta ^2 & \partial_1 \Delta \partial_2 
 \\
\partial_1 \Delta \partial_2
& \Delta^2  \\
\end{array}
\right),  
\Delta_{1,\boldsymbol{\mu}_1} \!= \!\left( \begin{array}{cccc}
\Delta ^2 - \partial_1 ^2 \partial_2 ^2  & \partial_1^3 \partial_2 & \partial_1 ^2 \partial_2 ^2  \\
\partial_1 ^3 \partial_2
& \Delta ^2 + \partial_1 ^2 \partial_2 ^2 & \partial_1 \partial_2^3 \\
\partial_1 ^2 \partial_2^2 & \partial_1 \partial_2^3 
& \Delta ^2 - \partial_1 ^2 \partial_2 ^2 \\
\end{array}
\right),
$$
strongly elliptic non-negative self-adjoint operators.
\end{exmp} 

\begin{exmp} 
Consider the compatibility complex for the multidimensional Cau\-chy-Riemann 
operator $\overline \partial$
in ${\mathbb C}^n \cong {\mathbb R}^{2n}$, $n>1$, i.e. for $n$-vector column with the components 
$\frac{\partial }{\partial \overline z_j}$, $1\leq j \leq n$, where, as usual,
$z_j = x_{2j-1} + \iota x_{2j}$, $ \overline z_j = x_{2j-1} - \iota x_{2j} 
$, 
$$
\frac{\partial }{\partial z_j} = \frac{1}{2}\left( \frac{\partial }{\partial x_{2j-1}} -  \iota
\frac{\partial }{\partial x_{2j}} \right), \,  
\frac{\partial }{\partial \overline z_j} = \frac{1}{2}\left( \frac{\partial }{\partial x_{2j-1}} +  \iota
\frac{\partial }{\partial x_{2j}} \right), \, 1\leq j \leq n.
$$
Then the Dolbeault complex is the elliptic compatibility Koszul complex related to the column $\overline \partial$, see \cite[\S 1.2.7]{Tark35} or Example \ref{ex.Koszul} above. As  
$
\overline \partial^* = - \left( \begin{array}{llll}
\frac{\partial }{\partial  z_1}, & 
\dots , & \frac{\partial }{\partial  z_n}
\end{array}
\right) 
$, then  we easily calculate the related Laplacians $\Delta_q = (-1/4) \Delta I_{k_q}$, 
$0\leq q \leq n$, where $\Delta$ is the 
usual Laplace operator in ${\mathbb R}^{2n}$ and $k_q = \Big(\begin{array}{ll} n \\ q 
\end{array}\Big)$.

The simplest related Stokes' type operators arising in ${\mathbb C}^2 \cong 
{\mathbb R}^4$ with the coordinates 
$(z_1,z_2) =(x_1 + \iota x_2 , x_3 + \iota x_4 )\cong (x_1,y_1,x_2,y_2)
$ can be written as follows:
$$
S_1 (\overline \partial,\partial _t+\Delta_{\boldsymbol{\mu}} ) = 
\left(
\begin{array}{lllll}
\partial_t -\mu \Delta & 0 & 0 & 0 & \mathrm{curl}_{x_1,x_2}  \\
0 & \partial_t -\mu \Delta & 0 & 0 &  \mathrm{div}_{x_1,x_2} \\
0 & 0 & \partial_t -\mu \Delta & 0 &  \mathrm{curl}_{x_3,x_4}  \\
0 & 0 &  0 & \partial_t -\mu \Delta &  \mathrm{div}_{x_3,x_4} \\
\mathrm{curl}^*_{x_1,x_2}  & -\nabla_{x_1,x_2} &  \mathrm{curl}^*_{x_3,x_4}  
 & -\nabla_{x_3,x_4} & (\partial_t -\mu \Delta)\, I_2 \\
\end{array}
\right),
$$
where $\mu$ is a positive real number and 
$$
\mathrm{curl}_{x_{2j-1}, x_{2j}} = (-\partial_{2j} ,  \partial_{2j-1} ), \,  
\mathrm{div}_{x_{2j-1}, x_{2j}}  =(\partial_{2j-1} ,  \partial_{2j} ).
$$
\end{exmp}

{\sc Acknowledgments.}
The first author was supported 
by the Krasnoyarsk Mathematical Center and financed by the Ministry of Science and Higher 
Education of the Russian Federation (Agreement No. 075-02-2024-1429).

\end{document}